\def\tornpaper#1{%
\tikz{
  \node[inner sep=1em] (A) {#1};  
  \begin{pgfonlayer}{background}  
  \fill[paper] 
     \pgfextra{\pgfmathsetseed{\arabic{mathseed}}\addtocounter{mathseed}{1}}%
      {decorate[irregular cloudy border]{decorate{decorate{decorate{decorate[ragged border]{
        (A.north west) -- (A.north east)
      }}}}}}
      -- (A.south east)
     \pgfextra{\pgfmathsetseed{\arabic{mathseed}}}%
      {decorate[irregular spiky border]{decorate{decorate{decorate{decorate[ragged border]{
      -- (A.south west)
      }}}}}}
      -- (A.north west);
  \end{pgfonlayer}}
}
\setlist{noitemsep}
\newcommand{\yes}{\textsc{yes}\xspace}
\newtheorem*{untheorem}{Theorem}
\newtheorem{theorem}{Theorem}
\newtheorem{lemma}{Lemma}[section]
\newtheorem{claim}{Claim}[section]
\newenvironment{claimproof}{\begin{proof}\renewcommand{\qedsymbol}{\claimqed}}{\end{proof}\renewcommand{\qedsymbol}{\plainqed}}
\let\plainqed\qedsymbol
\newcommand{\OhTilde}{{\widetilde{\mathcal{O}}}}
\newcommand{\Oh}{{\mathcal{O}}}
\newcommand{\cnfsat}{\textsc{cnf-sat}\xspace}
\newcommand{\kOPT}{\textsc{$k$-opt}\xspace}
\newcommand{\opt}{\textsc{opt}\xspace}
\newcommand{\kOPTOptimization}{\textsc{$k$-opt Optimization}\xspace}
\newcommand{\kOPTDetection}{\textsc{$k$-opt Detection}\xspace}
\newcommand{\TwoOPT}{\textsc{$2$-opt}\xspace}
\newcommand{\TwoOPTDetection}{\textsc{$2$-opt Detection}\xspace}
\newcommand{\ThreeOPT}{\textsc{$3$-opt}\xspace}
\newcommand{\ThreeOPTImprovement}{\textsc{$3$-opt Detection}\xspace}
\newcommand{\ThreeOPTDetection}{\textsc{$3$-opt Detection}\xspace}
\newcommand{\TSP}{\textsc{tsp}\xspace}
\newcommand{\APSP}{\textsc{apsp}\xspace}
\newcommand{\NegativeTriangle}{\textsc{Negative Edge-Weighted Triangle}\xspace}
\newcommand{\defproblem}[3]{
 \vspace{1mm}
\noindent\fbox{
 \begin{minipage}{0.96\textwidth}
 \begin{tabular*}{\textwidth}{@{\extracolsep{\fill}}lr} #1 &  \\ \end{tabular*}
 {\bf{Input:}} #2 \\
 {\bf{Question:}} #3
 \end{minipage}
 }
 \vspace{1mm}
}
\newcommand{\Reals}{{\mathbb{R}}}
\DeclareMathOperator{\polylog}{polylog}
\newcommand{\bd}{\partial}
\newcommand{\dist}{\mathrm{dist}}
\newcommand{\true}{\mbox{\sc True}}
\newcommand{\false}{\mbox{\sc False}}
\newcommand{\tree}{\ensuremath{\mathcal{T}}}
\newcommand{\node}{\nu}
\newcommand{\myroot}{\mathit{root}}
\newcommand{\mylist}{\ensuremath{\mathcal{L}}}
\newcommand{\ds}{\ensuremath{\mathcal{DS}}}
\newcommand{\mypath}{P}
\newcommand{\rev}{\mbox{\sc Rev}}
\newcommand{\mc}{\mathit{MinCost}}
\newcommand{\mrc}{\mathit{MinRevCost}}
\newcommand{\myrev}{\mathit{rev}}
\newcommand{\D}{\ensuremath{\mathcal{D}}}
\newcommand{\floor}[1]{\left\lfloor #1 \right\rfloor}
\newcommand{\eps}{\varepsilon}
\newcommand{\etal}{\emph{et al.}\xspace}
\newcommand{\tsp}{{\sc tsp}\xspace}
\newcommand{\mytop}{\mathrm{top}}
\newcommand{\mybot}{\mathrm{bot}}
\newcommand{\myleft}{\mathrm{left}}
\newcommand{\myright}{\mathrm{right}}
\newcommand{\union}{\mathrm{Union}}
\newcommand{\awvd}{\mathrm{AWVD}}
\newcommand{\vd}{\mathrm{VD}}
\newcommand{\TwoOptOPT}{\textsc{2-opt Optimization}\xspace}
\newcommand{\TwoOptDET}{\textsc{2-opt Detection}\xspace}
\newcommand{\ThreeOptDET}{\textsc{3-opt Detection}\xspace}
\newcommand{\kmove}{$k$-move\xspace}
\newcommand{\Twomove}{$2$-move\xspace}
\newcommand{\Twomoves}{$2$-moves\xspace}
\newcommand{\Threemove}{$3$-move\xspace}
\newcommand{\Threemoves}{$3$-moves\xspace}
\newcommand{\BeginMyItemize}{\begin{itemize}\setlength{\itemsep}{-\parskip}}
\newcommand{\EndMyItemize}{\end{itemize}}
\newcommand{\myitemize}[1]{\BeginMyItemize #1 \EndMyItemize}
\newcommand{\BeginMyEnumerate}{\begin{enumerate}\setlength{\itemsep}{-\parskip}}
\newcommand{\EndMyEnumerate}{\end{enumerate}}
\title{Fine-Grained Complexity Analysis of Two Classic TSP Variants}
\author{Mark de Berg\thanks{Department of Mathematics and Computer Science, Eindhoven University of Technology, The Netherlands. \texttt{m.t.d.berg@tue.nl}. Supported by NWO Gravity grant ``Networks'' (024.002.003).}
\and Kevin Buchin\thanks{Department of Mathematics and Computer Science, Eindhoven University of Technology, The Netherlands. \texttt{k.a.buchin@tue.nl}. Supported by NWO grant ``A framework for progressive, user-steered algorithms in visual analytics'' (612.001.207).}
\and Bart M. P. Jansen\thanks{Department of Mathematics and Computer Science, Eindhoven University of Technology, The Netherlands. \texttt{b.m.p.jansen@tue.nl}. Supported by NWO Veni grant ``Frontiers in Parameterized Preprocessing'' and NWO Gravity grant ``Networks'' (024.002.003).}
\and Gerhard Woeginger\thanks{Department of Mathematics and Computer Science, Eindhoven University of Technology, The Netherlands. \texttt{gwoegi@win.tue.nl}. Supported by NWO Gravity grant ``Networks'' (024.002.003).}
}
\date{}
\begin{document}
\maketitle

\thispagestyle{empty}

\begin{abstract}
We analyze two classic variants of the \textsc{Traveling Salesman Problem} using the toolkit of fine-grained complexity.

Our first set of results is motivated by the \textsc{Bitonic tsp} problem: given a set of~$n$ points in the plane, compute a shortest tour consisting of two monotone chains. It is a classic dynamic-programming exercise to solve this problem in~$\Oh(n^2)$ time. While the near-quadratic dependency of similar dynamic programs for \textsc{Longest Common Subsequence} and \textsc{Discrete Fr\'echet Distance} has recently been proven to be essentially optimal under the Strong Exponential Time Hypothesis, we show that bitonic tours can be found in subquadratic time. More precisely, we present an algorithm that solves bitonic \TSP in $\Oh(n\log^2 n)$ time and its bottleneck version in $\Oh(n\log^3 n)$ time. In the more general pyramidal \TSP problem, the points to be visited are labeled $1,\ldots,n$ and the sequence of labels in the solution is required to have at most one local maximum.  Our algorithms for the bitonic (bottleneck) \TSP problem also work for the pyramidal \TSP problem in the plane. 

Our second set of results concerns the popular \kOPT heuristic for \TSP in the graph setting. More precisely, we study the \kOPT decision problem, which asks whether a given tour can be improved by a \kOPT move that replaces $k$ edges in the tour
by $k$ new edges. A simple algorithm solves \kOPT in~$\Oh(n^k)$ time for fixed~$k$. For \textsc{$2$-opt}, this is easily seen to be optimal. For~$k=3$ we prove that an algorithm with a runtime of the form~$\OhTilde(n^{3-\eps})$ exists if and only if \textsc{All-Pairs Shortest Paths} in weighted digraphs has such an algorithm. For general \textsc{$k$-opt}, it is known that a runtime of $f(k) \cdot n^{o(k/\log k)}$ would contradict the Exponential Time Hypothesis. The results for $k=2,3$ may suggest that the actual time complexity of \textsc{$k$-opt} is $\Theta(n^k)$. We show that this is not the case, by presenting an algorithm that finds the best \kmove in~$\Oh(n^{\floor{2k/3} + 1})$ time for fixed~$k \geq 3$. This implies that \textsc{$4$-opt} can be solved in~$\Oh(n^3)$ time, matching the best-known algorithm for \textsc{$3$-opt}. Finally, we show how to beat the quadratic 
barrier for $k=2$ in two important settings, namely for points in the plane and when we want to solve \TwoOPT 
repeatedly.
\end{abstract}

\newpage

\setcounter{page}{1}

\section{Introduction}

\subsection{Motivation}
We analyze two classic variants of the \textsc{Traveling Salesman Problem} (\TSP) by applying the modern
toolkit of fine-grained complexity analysis.
The first {\TSP} variant can for instance be found in Chapter~15 of the well-known textbook
\emph{``Introduction to Algorithms''} by Cormen, Leiserson, Rivest, and Stein~\cite{CormenLRS01}.
The chapter discusses dynamic programming, and its problem section poses the following classic exercise:

\begin{center}
\noindent\tornpaper{
\parbox{.93\textwidth}{\small
\textbf{\emph{15-3 Bitonic euclidean traveling-salesman problem}}\\
In the \textbf{\emph{euclidean traveling-salesman problem}}, we are given a set of~$n$ points in the plane, and
we wish to find the shortest closed tour that connects all~$n$ points. The general problem is NP-complete, and
its solution is therefore believed to require more than polynomial time.
J.~L.~Bentley has suggested that we simplify the problem by restricting our attention to \textbf{\emph{bitonic tours}},
that is, tours that start at the leftmost point, go strictly rightward to the rightmost point, and then go strictly
leftward back to the starting point. In this case, a polynomial-time algorithm is possible.
Describe an~$\Oh(n^2)$-time algorithm for determining an optimal bitonic tour.}}
\end{center}

This exercise already showed up in the very first edition of the book in 1991.
Since then, thousands of students pondered about it and (hopefully) found the solution.
One might wonder whether $\Oh(n^2)$ runtime is best possible for this problem.
As one of our main contributions, we will show that in fact it is not.

The second {\TSP} variant concerns \kOPT, a popular local search heuristic that
attempts to improve a suboptimal solution by a \emph{\kOPT move} (or: \emph{\kmove} for short),
which is an operation that removes~$k$ edges from the current
tour and reconnects the resulting pieces into a new tour by inserting~$k$ new edges.
The cases~$k=2$~\cite{Croes58} and~$k=3$ have been studied extensively with respect to various aspects such as
experimental performance~\cite{Bentley90,JohnsonM02,Lin65}, (smoothed) approximation ratio~\cite{ChandraKT99,KunnemannM15},
rate of convergence~\cite{ChandraKT99,EnglertRV14}, and algorithm engineering~\cite{FredmanJMO95,Glover96,MavroidisPP07,ONeilB15}.
The decision problem associated with \kOPT asks, given a tour in an edge-weighted graph, whether it is possible to
obtain a tour of smaller weight by replacing~$k$ edges.
There are $\Theta(n^k)$ possibilities to choose~$k$ edges that leave the current tour, and for each choice the
number of ways to reconnect the resulting pieces back into a tour is constant (for fixed~$k$).
As the weight change for each reconnection pattern can be evaluated in~$\Oh(k)$ time, this simple algorithm finds the
best \kOPT improvement in time~$\Oh(n^k)$ for each fixed~$k$.
The survey chapter~\cite{JohnsonMc97} by Johnson and McGeoch extensively discusses \kOPT.
On page~{233} they write:

\begin{center}
\noindent\tornpaper{
\parbox{.93\textwidth}{\small
To complete our discussion of running times, we need to consider the time per move as well as the number of moves.
This includes the time needed to \emph{find} an improving move (or verify that none exists), together with the time
needed to \emph{perform} the move.
In the worst case, $2$-opt and $3$-opt require $\Omega(n^2)$ and $\Omega(n^3)$ time respectively to verify local
optimality, assuming all possible moves must be considered.
}}
\end{center}

The two lower bounds in the last sentence are stated without further justification.
It is clear that finding an improving \kmove takes $\Omega(n^k)$ time,
if we require that all possible moves must be enumerated \emph{explicitly}.
However, one might wonder whether there are other, faster algorithmic approaches that proceed without
enumerating all moves.
As one of our main contributions, we will show that such faster approaches do not exist for $k=3$ (under the
\textsc{All-Pairs Shortest Paths} conjecture), but do exist for all $k\ge4$.

\subsection{Our contributions}

We investigate whether the long-standing runtimes of~$\Oh(n^2)$ for bitonic tours and~$\Oh(n^k)$ for
finding \kOPT improvements are optimal.
Such optimality investigations usually involve two ingredients: fast algorithms and runtime lower bounds.
While proving unconditional 
lower bounds is far out of reach, in recent years there has been an influx
of techniques for establishing lower bounds on the running time of a given problem, based on a hypothesis about the
best-possible running time for another problem.
Recent results in this direction consider
the problems of computing the \textsc{Longest Common Subsequence}~\cite{AbboudBW15,BringmannK15} of two length-$n$ strings,
the \textsc{Edit Distance} \cite{BackursI15,BringmannK15} from one length-$n$ string to another,
or the \textsc{Discrete Fr\'echet Distance}~\cite{Bringmann14} between two polygonal $n$-vertex curves in the plane.
If one of these problems allows an algorithm with running time $\Oh(n^{2-\eps})$, then this would yield an algorithm to test the satisfiability of an $n$-variable CNF formula~$\phi$
in time $(2-\eps)^{n} \cdot |\phi|^{\Oh(1)}$.
As decades of research have not led to algorithms with such a running time for \cnfsat, this gives evidence that
the classic~$\Oh(n^2)$-time algorithms for these problems are optimal up to~$n^{o(1)}$ factors.

\paragraph{Pyramidal tours in the plane.}
Consider a symmetric \TSP instance that is defined by an edge-weighted complete graph.
For a linear ordering $1,\ldots,n$ of the vertices in the graph, a \emph{pyramidal} tour has the form
$(1,i_1,\ldots,i_r,n,j_1,\ldots,j_{n-r-2})$, where $i_1 < i_2 < \ldots < i_r$ and $j_1 > j_2 > \ldots > j_{n-r-2}$.
A \emph{bitonic} tour for a Euclidean \TSP instance is pyramidal with respect to the left-to-right order on the
points in the plane.
Bitonic and pyramidal tours play an important role in the combinatorial optimization literature on the \TSP;
see \cite{BakiK99,BurkardDDVW98,GilmoreLS85}.
They form an exponentially large set of tours over which we can optimize efficiently, and they lead to well-solvable
special cases of the \TSP.
Combined with a procedure for generating suitable permutations of the vertices, heuristic solutions to \TSP can be
obtained by computing optimal pyramidal tours with respect to the generated orders \cite{CarlierV90}.

We will show that the classic $\Oh(n^2)$ dynamic program for finding bitonic tours in the Euclidean plane is far from
optimal: by an appropriate use of dynamic geometric data structures, the running time can be reduced to~$\Oh(n\log^2n)$.
To the best of our knowledge, this presents the first improvement in finding bitonic tours since the problem was
popularized in \emph{Introduction to Algorithms}~\cite{CormenLRS01} in 1991.
In fact, we prove the stronger result that an optimal \emph{pyramidal} tour among~$n$ points in the plane can be
computed in~$\Oh(n \log^2 n)$ time with respect to any given linear order on the points.
Our techniques extend to the related \textsc{Bottleneck Pyramidal tsp} problem in the plane, where the goal is
to find a pyramidal tour among the cities that minimizes the length of the longest edge.
We prove that the underlying decision problem (given a linearly ordered set of points and a bottleneck value~$B$, is
there a pyramidal tour of the points whose longest edge has length at most~$B$?) can be solved in~$\Oh(n \log n)$ time,
while the underlying optimization version (given a linearly ordered set of points, compute a bitonic tour that
minimizes the length of the longest edge) can be solved in~$\Oh(n \log^3 n)$ time.
For the decision version of the bottleneck problem, we prove a matching~$\Omega(n \log n)$ time lower bound in the
algebraic computation tree model by a reduction from \textsc{Set Disjointness} with integer inputs~\cite{y-lbact-91}; this reduction even applies to the
bitonic setting where the points are ordered from left to right.

\paragraph{$k$-OPT in the graph setting.}
The complexity of \kOPT has been analyzed using the framework of parameterized complexity theory. Marx~\cite{Marx08a} proved that deciding whether there is a \kmove that improves a given tour is W[1]-hard parameterized by~$k$, giving evidence that there is no algorithm with runtime~$f(k) \cdot n^{\Oh(1)}$. Guo~\etal~\cite{GuoHNS13} refined this result and proved that, under the \emph{Exponential Time Hypothesis}~\cite{ImpagliazzoPZ01}, there is no algorithm that determines whether a tour in a weighted complete graph can be improved by a \kmove in time~$f(k) \cdot n^{o(k / \log k)}$ for any function~$f$. This lower bound shows that the exponent of~$n$ in the runtime of any \kOPT algorithm must grow almost linearly with~$k$.
The next question that we settle in this paper is: can one do better than~$\Oh(n^k)$ for finding a \kOPT improvement?
The answer turns out to depend on the value of~$k$.
For \TwoOPT, an easy adversarial argument shows that any deterministic algorithm must inspect all the edge weights.
This gives a trivial lower bound of~$\Omega(n^2)$, matching the upper bound.
For larger values of~$k$, the question becomes more interesting.

The \ThreeOPTImprovement problem asks whether the weight of a given tour can be reduced by some \Threemove.
We show that it is unlikely that \ThreeOPTImprovement with weights in the range~$[-M,\ldots,M]$ allows an algorithm with a \emph{truly subcubic} runtime
of~$\Oh(n^{3-\eps} \polylog (M))$ for~$\eps>0$. We prove that the \NegativeTriangle problem (given an edge-weighted graph,
is there a triangle of negative weight?) reduces to \ThreeOPTImprovement by a reduction that takes~$\Oh(n^2)$ time and increases the size of the graph by only a constant factor.
As \textsc{Negative Edge-Weighted Triangle} is equivalent to \textsc{All-Pairs Shortest Paths} in weighted digraphs (\APSP)
with respect to having truly subcubic algorithms~\cite{WilliamsW10}, a truly subcubic algorithm for \ThreeOPTImprovement would contradict
the \APSP conjecture~\cite{AbboudGW15,AbboudWY15} which states that \APSP cannot be solved in truly subcubic time.
We also give a reduction in the other direction: finding a \ThreeOPT improvement reduces to finding
a negative edge-weighted triangle.
Consequently, \ThreeOPTImprovement is \emph{equivalent} to \textsc{Negative Edge-Weighted Triangle} and \APSP
with respect to truly subcubic runtimes.
This adds yet another classic problem to the growing list of such equivalent problems~\cite{AbboudGW15,WilliamsW10}.

As a final result in this direction, we design an algorithm that finds the best \kOPT improvement in weighted
$n$-vertex complete graphs in $\Oh(n^{\lfloor2k/3\rfloor+1})$ time for each fixed value of $k$.
For $k=2$ and $k=3$, this expression simply boils down to the straightforward time complexities of $\Oh(n^2)$
and $\Oh(n^3)$ for \TwoOPT and \ThreeOPT respectively.
For $k\ge4$, however, our result yields a substantial improvement over the trivial $\Oh(n^k)$ time bound.
For example, {\textsc{$4$-opt}\xspace} can be solved in $\Theta(n^3)$ time, matching the
best-known algorithm for \ThreeOPT.
The algorithm mixes enumeration of partial solutions with a simple dynamic program.

\paragraph{Faster $2$-OPT in the repeated setting and in the planar setting.}
For the \TwoOPT problem in graphs, the runtime for finding a single tour improvement cannot be improved below the trivial~$\Theta(n^2)$. However, in the context of local search we are often interested
in \emph{repeatedly} finding tour improvements. It is therefore natural to consider whether
speedups can be obtained when repeatedly finding improving tours on
the same \TSP instance. We prove that this is indeed the case: after~$\Oh(n^2)$ preprocessing time,
one can repeatedly find the best \TwoOPT improvement in $\Oh(n \log{n})$ time per iteration.

The quadratic lower bound for \TwoOPT applies only in the graph setting. This raises the
question: can we solve \TwoOPT faster for points in the plane? We show
the answer is yes, by giving an algorithm for \TwoOptDET with runtime
$\Oh(n^{8/5+\eps})$ for all~$\eps > 0$. Similarly, \ThreeOptDET can be solved in expected time $\Oh(n^{80/31+\eps})$.

\section{Faster pyramidal TSP}
\label{se:pyramidal}
In this section we show that the pyramidal
\TSP and the bottleneck pyramidal \TSP problem in the plane can be solved in subquadratic time. For simplicity we only show how to compute the value of an optimal solution; computing the actual tour can be done in a standard manner.
\medskip

Let $P$ be the ordered input set of $n$ points with distinct $x$-coordinates in the plane.
Our algorithm will consider the points in $P$ in order, and maintain a collection
of partial solutions that are locally optimal. To make this precise,
define $P_i := \{p_1,\ldots,p_i\}$ to be the first $i$ points in~$P$.
A \emph{partial solution} for $P_i$, for some $1\leq i\leq n$,
is a pair $P',P''$ of monotone paths (with respect to the order on~$P$) that together visit all the points in $P_i$
and that only share~$p_1$. We call a partial
solution for $P_i$ an \emph{$(i,j)$-partial tour}, for some $1\leq j<i$,
if one of the paths ends at~$p_i$---this is necessarily the case
in a partial solution for $P_i$---and the other path ends at~$p_j$.

Our starting point is the standard dynamic-programming solution. It
uses a 2-dimensional table\footnote{Some of our results can also be obtained from an alternative DP with~$n$ states. As we need the 2-dimensional approach for Theorem~\ref{th:bottleneck}, we present all our results in this setting.\label{fn:onedimensionaldp}} $A[1..n,1..n]$, where $A[i,j]$, for $1\leq j<i\leq n$,
is defined as the minimum length of an $(i,j)$-partial tour;
for $i\leq j\leq n$ the entries $A[i,j]$ are undefined.
We can compute the entries in the table row by row, using the recursive formula
\begin{equation} \label{eq:pyramidal:recurrence}
A[i+1,j] \ = \ \left\{ \begin{array}{ll}
                    A[i,j] + |p_i p_{i+1}| & \mbox{if $1\leq j < i$} \\[2mm]
                    \min_{1\leq k<i} \left( A[i,k] + |p_k p_{i+1}| \right) & \mbox{if $j=i$}
                    \end{array}
             \right.
\end{equation}
where $A[2,1]=|p_1 p_2|$. Let us briefly verify this recurrence. For $(i+1,j)$-partial tours with~$j < i$, the path~$P'$ that visits~$p_{i+1}$ must also visit~$p_i$: the other path~$P''$ ends at index~$j < i$ and the monotonicity requirement ensures~$P''$ cannot visit~$i$ and go back to~$j$. So for~$j < i$ any $(i+1,j)$-partial tour consists of an $(i,j)$-partial tour together with the segment~$p_i p_{i+1}$. For $(i+1,i)$-partial tours, the predecessor of~$p_{i+1}$ cannot be~$p_i$, since a path ends at~$p_i$. Hence an $(i+1,i)$-partial tour consists of an $(i,k)$-partial tour for some~$1 \leq k < i$ together with the segment~$p_k p_{i+1}$. The cheapest combination yields the best partial tour.

After computing the last row of~$A$, the minimum length of a pyramidal tour can be
found by computing $\min_{1\leq k<n} \left( A[n,k] + |p_kp_{n}| \right)$. There are~$\Oh(n^2)$ entries in~$A$ of the first type that each take constant time to evaluate. There are~$\Oh(n)$ entries of the second type that need time~$\Theta(n)$. Hence the dynamic program can be evaluated in~$\Oh(n^2)$ time. 


\medskip

Our subquadratic algorithm is based on the following two observations.
First, any two subsequent rows $A[i,1..n]$ and $A[i+1,1..n]$ are quite similar:
the entries $A[i+1,j]$, for $j<i$, can all be obtained from $A[i,j]$ by adding
the same value, namely $|p_i p_{i+1}|$. Second, the computation of
$A[i+1,i]$ can be sped up using appropriate geometric data structures.
Thus our algorithm will maintain a data structure that implicitly represents the
current row and allows for fast queries and so-called bulk updates (see below).

Recall that $P_i := \{p_1,\ldots,p_i\}$. The point that defines
$\min_{1\leq k<i} \left( A[i,k] + |p_k p_{i+1}| \right)$ is the point $p_k\in P_{i-1}$
closest to the query point~$q:= p_{i+1}$ if we use the additively weighted distance function
\begin{equation}\label{eq:additive-weight}
\dist(p_k,q) := w_k + |p_k q|,
\end{equation}
where $w_k := A[i,k]$ is the weight of~$p_k$. Thus we need
a data structure for storing a weighted point set that supports the following operations:
\begin{itemize}
\item perform a \emph{nearest-neighbor query} with a query point~$q$, which reports the
      point~$p_k$ closest to $q$ according to the additively weighted distance function,
\item perform a \emph{bulk update} of the weights, which adds a given
      value~$\Delta$ to the weights of all the points currently stored in the data structure;
\item \emph{insert} a new point with a given weight into the data structure.
\end{itemize}

Answering nearest-neighbor queries for the weighted point set $P$ can be done by performing point location
in the \emph{additively weighted Voronoi diagram}~\cite{Fortune87} of $P$ augmented by a point location data structure~\cite{Snoeyink04PL}. This (static) data structure has size $\Oh(n)$, can be computed in $\Oh(n\log n)$ time, and allows for
$\Oh(\log n)$-time queries. To allow for insertions we use the logarithmic method~\cite{bs-dspsd-80}. The logarithmic method makes a data structure semi-dynamic by storing $\Oh(\log n)$ static data structures of increasing size (resulting in an additional $\log$-factor in the query time). The main observation is that we can handle bulk updates by storing a correction term for the weights with each of the static additively weighted Voronoi diagrams. The additively-weighted nearest neighbor structure does not change when adding the same constant to each point weight, which means we do not have to update the Voronoi diagrams when performing bulk updates. This leads to an implementation that supports each operation in~$\Oh(\log^2 n)$ amortized time. The details are given in Appendix~\ref{se:logarithmic-method}.
Using the data structure we obtain the following theorem.
\begin{theorem}\label{th:regular}
Let $P$ be an ordered set of $n$ points in the plane. Then we can compute a
minimum-length pyramidal tour for $P$ in $\Oh(n\log^2 n)$ time and using $\Oh(n)$ storage.
\end{theorem}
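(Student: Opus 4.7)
The plan is to simulate the dynamic program of recurrence~\eqref{eq:pyramidal:recurrence} without storing the table~$A$ explicitly. Instead, after all entries of row~$i$ have been conceptually determined, we maintain a data structure~$\mathcal{D}_i$ that stores the points $p_1,\ldots,p_{i-1}$ as weighted sites, where site~$p_k$ carries weight~$w_k := A[i,k]$, and that supports the three operations listed in the excerpt (additively weighted nearest-neighbor queries, bulk weight updates, and insertions) in $\Oh(\log^2 n)$ amortized time each.

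Given~$\mathcal{D}_i$, one step of the algorithm proceeds as follows. First I would compute $A[i+1,i]$ by querying~$\mathcal{D}_i$ with the point $q := p_{i+1}$: by the observation in the excerpt, the reported weighted distance equals $\min_{1\leq k<i}(A[i,k] + |p_k p_{i+1}|)$, which is exactly the recurrence in the second case of \eqref{eq:pyramidal:recurrence}. Next, I would turn~$\mathcal{D}_i$ into~$\mathcal{D}_{i+1}$ by performing a bulk update that adds $|p_i p_{i+1}|$ to every stored weight; by the first case of \eqref{eq:pyramidal:recurrence}, this updates each~$w_k$ from $A[i,k]$ to $A[i+1,k]$ for $k<i$. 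Finally, I would insert~$p_i$ into the data structure with weight equal to the value $A[i+1,i]$ just computed. The initialization~$\mathcal{D}_2$ simply stores $p_1$ with weight $A[2,1] = |p_1 p_2|$.

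After processing~$i=2,\ldots,n-1$ in this way, the data structure represents the entries $A[n,1],\ldots,A[n,n-1]$, so a single additional weighted nearest-neighbor query with~$q = p_n$ yields $\min_{1\leq k<n}(A[n,k] + |p_k p_n|)$, which is the length of the optimal pyramidal tour. The whole procedure performs $\Oh(n)$ queries, $\Oh(n)$ bulk updates, and $\Oh(n)$ insertions, for a total of $\Oh(n\log^2 n)$ time and $\Oh(n)$ storage using the data structure described in the excerpt (and detailed in Appendix~\ref{se:logarithmic-method}).

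The main subtlety, which has already been foreshadowed in the paragraph preceding the theorem, is the correct handling of bulk updates together with the logarithmic method: since the logarithmic method maintains $\Oh(\log n)$ static additively weighted Voronoi diagrams, one cannot afford to rewrite their weights on every bulk update. The crucial observation I would rely on is that an additively weighted Voronoi diagram is invariant under adding the same constant to all site weights, so each static sub-structure only needs to carry a cumulative scalar offset that is incremented on each bulk update and added back at query time. Verifying that this offset mechanism interacts correctly with the periodic rebuilds of the logarithmic method (in particular, that the offsets of the merged sub-structures can be folded into the site weights of the newly built Voronoi diagram in time proportional to its size) is the only delicate part of the argument; everything else is a routine application of the recurrence.
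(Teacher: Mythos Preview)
Your proposal is correct and follows essentially the same approach as the paper: both maintain an implicit representation of row~$i$ of the DP table as a weighted point set $\{p_1,\ldots,p_{i-1}\}$ with $w_k=A[i,k]$, transition to row~$i+1$ via a nearest-neighbor query with~$p_{i+1}$, a bulk update by~$|p_ip_{i+1}|$, and an insertion of~$p_i$ with weight~$A[i+1,i]$, and finish with a single query at~$p_n$. Your remark about handling bulk updates via per-substructure offsets in the logarithmic method is exactly what the paper does in its appendix.
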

\begin{proof}
We aim to speed up the classic dynamic-programming algorithm using the data structure described above. Instead of computing the entire dynamic programming table~$A$ explicitly, we maintain an implicit representation of one row of the table and compute the rows one by one. The $i$-th row of~$A$ has~$i-1$ well-defined entries. We define an implicit representation of row~$i$ to be an instance of the data structure storing the weighted point set~$P_{i-1} = \{p_1, \ldots, p_{i-1}\}$ such that~$w(p_j) = A[i,j]$. The first nontrivial row in~$A$ is the second row,~$A[2,1..n]$. An implicit representation for that row consists of the point~$p_1$ of weight~$A[2,1] = |p_1 p_2|$. 

If we have an implicit representation of row~$i$, we can efficiently obtain an implicit representation of row~$i+1$, as we describe next. By our choice of implicit representation, the value of~$A[i+1,i]$ according to~(\ref{eq:pyramidal:recurrence}) is exactly the distance from~$p_{i+1}$ to its closest neighbor in the data structure under the additively weighted distance function. Hence, the value of~$k$ that minimizes the lower expression in~(\ref{eq:pyramidal:recurrence}) can be found by a nearest neighbor query with~$p_{i+1}$. We can therefore transform a representation of row~$i$ into a representation for row~$i+1$ as follows:

\begin{enumerate}
	\item Query with point~$p_{i+1}$ to find the value~$A[i+1,i]$ and remember this value.
	\item Perform a bulk update to increase the weight of the points~$p_1, \ldots, p_{i-1}$ that are already in the structure by~$\Delta := |p_i p_{i+1}|$. Recall that for cells~$j$ with~$1 \leq j < i$ their value in row~$i+1$ is obtained from their value in row~$i$ by adding~$|p_i p_{i+1}|$.
	\item Insert point~$p_i$ of weight~$A[i+1, i]$ into the structure.\footnote{We could also insert~$p_i$ with weight~$A[i+1,i]-\Delta$. This way we would not have to subtract~$\Delta$ from the weights of $p_1,\ldots,p_{i-1}$ in Step~2, and the bulk updates are not needed. As they are trivial in our data structure, we prefer the version that keeps the correspondence between weights and $A[i,j]$ values.\label{fn:bulkupdates}}
\end{enumerate}
It is easy to verify that this yields an implicit representation of row~$i+1$. Since a representation of the first nontrivial row can be found in constant time, and each successive row can be computed from the previous using three data structure operations that take~$\Oh(\log ^2 n)$ amortized time each, it follows that an implicit representation of the final row can be computed in~$\Oh(n \log^2 n)$ time. The minimum cost of a pyramidal tour is $\min_{1\leq k<n} \left( A[n,k] + |p_kp_{n}| \right)$, which can be found by querying the representation of the final row with point~$p_n$.
%
\end{proof}



\paragraph{Bottleneck pyramidal TSP.}
Using a similar global approach but different supporting data structures we can also
solve the bottleneck version of the problem---here the goal
is to minimize the length of the longest edge in the tour---in
subquadratic time. 
%
%
For the decision version of the problem we need the following result.
%
\begin{theorem}\label{th:pl-in-disk-union}
We can maintain a collection $\D$ of $n$ congruent disks in a data structure such that
we can decide in $\Oh(\log n)$ time if a query point $q$ lies in $\union(\D)$.
The data structure uses $\Oh(n)$ storage and a new disk can be inserted into $\D$ in $\Oh(\log n)$
amortized time.
\end{theorem}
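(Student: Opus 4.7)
The plan is to maintain $U := \union(\D)$ as an explicit planar subdivision $S$ and install a dynamic planar point-location data structure on top of $S$ that supports queries in $\Oh(\log n)$ and edge insertions/deletions in $\Oh(\log n)$ amortized time (for instance, one of those surveyed in~\cite{Snoeyink04PL}). Because the disks are congruent, the combinatorial complexity of $\bd U$ is always $\Oh(|\D|)$ by the classical bound of Kedem, Livne, Pach, and Sharir, so $S$ has linear size throughout. A query is then answered in $\Oh(\log n)$ time by locating $q$ in $S$ and reporting whether the containing face is tagged as interior to $U$.

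For an insertion of a new disk $D$ of radius $r$ centred at $c$, I would first test whether $D$ actually meets the current union. To this end I would maintain in parallel a semi-dynamic nearest-neighbour structure on the centres, via the Bentley--Saxe logarithmic method already used in Theorem~\ref{th:regular}, so that the centre $c'$ closest to $c$ can be found in polylogarithmic time. Two congruent disks of radius $r$ meet iff their centres are within distance $2r$, so $D\cap U=\emptyset$ precisely when $|cc'|>2r$; in that case I add $\bd D$ to $S$ as a new boundary cycle enclosing a new interior face. Otherwise the midpoint of the segment $cc'$ lies in $D\cap U$ and is used as a starting point: I locate the face of $U$ containing it, traverse $S$ locally to enumerate all arcs of $\bd U$ lying inside $D$, delete them from $S$, and insert in their place the maximal sub-arcs of $\bd D$ lying outside the old $U$.

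The amortization rests on the fact that we only insert disks, so once an arc of $\bd U$ disappears it never reappears. By the Kedem et al.\ bound, the total number of arc creations and destructions over a sequence of $n$ insertions is thus $\Oh(n)$. Each such arc modification incurs $\Oh(\log n)$ work in the dynamic point-location structure, which, together with the $\Oh(\log n)$ amortized cost per nearest-neighbour insertion, gives $\Oh(\log n)$ amortized time per disk insertion while keeping the total storage at $\Oh(n)$.

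The main obstacle I foresee is making the local traversal proportional to the number of arcs actually modified: the boundary of the face of $U$ entered by $D$ can be long, whereas only a few of its arcs may lie inside $D$, so a naive walk along the entire face boundary would break the amortized bound. Overcoming this will likely require an auxiliary structure on the arcs of $\bd U$ in the neighbourhood of each face that supports intersection queries with a query disk of radius $r$, so that the walk hops directly between consecutive intersection points of $\bd D$ with $\bd U$, allowing every arc visited during an insertion to be charged against an actual arc creation or deletion.
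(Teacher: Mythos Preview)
Your approach is genuinely different from the paper's, and as you yourself flag, it is not complete. The paper avoids global dynamic planar point location altogether by bucketing the disk centres into unit grid cells (after scaling so the common radius is $\sqrt{2}$). A query point $q$ can only be covered by disks centred in one of $\Oh(1)$ nearby cells, so it suffices to maintain, for each active cell $C$, the four ``partial unions'' of $\union(\D(C))$ lying above, below, left, and right of $C$. The key geometric lemma is that each disk contributes at most one arc to such a partial union, and the left-to-right order of these arcs agrees with the left-to-right order of the centres. This turns each partial union into a one-dimensional ordered structure that supports both point location and insertion of a new disk in $\Oh(\log n)$ time via an ordinary balanced search tree, with a trivial amortization for arcs that disappear.

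Your route, by contrast, has two concrete gaps. First, the obstacle you name is real and is not resolved: after locating a point of $D\cap U$, you must enumerate the arcs of $\bd U$ that lie inside $D$ in time proportional to their number plus $\Oh(\log n)$, but a face of $U$ can have $\Theta(n)$ boundary arcs of which only $\Oh(1)$ meet $D$. The auxiliary structure you sketch (report all arcs of $\bd U$ intersecting a query disk) is itself a nontrivial dynamic data-structuring problem, and you give no indication of how to realise it in $\Oh(n)$ space with $\Oh(\log n)$ amortized updates. Second, two of your building blocks do not quite deliver the stated bounds: the Bentley--Saxe nearest-neighbour structure from Theorem~\ref{th:regular} costs $\Oh(\log^2 n)$ per query and per insertion, not $\Oh(\log n)$; and the dynamic planar point-location results you invoke are typically stated for connected straight-line or monotone subdivisions, whereas $\bd U$ consists of circular arcs and may have many components. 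None of this is necessarily fatal, but each point would need additional work, and the grid-based argument sidesteps all of them.
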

This result is obtained as follows; 
see Appendix~\ref{subse:bottleneck-decision-alg} for details.
Assume the disks have radius~$\sqrt{2}$ and consider the integer grid.
Let $\D(C)\subseteq \D$ be the set of disks whose centers lie inside a grid cell~$C$.
To decide if $q\in \union(\D)$ we need to test if $q\in \union(\D(C))$ for $O(1)$
grid cells~$C$ that are sufficiently close to~$q$. Now consider a cell~$C$
with $\D(C)\neq \emptyset$. Obviously $C$ itself is completely covered by~$\union(\D(C))$.
Let $\ell_{\mytop}(C)$ be the line containing the top edge of~$C$. 
Then the part of $\union(\D(C))$ above $\ell_{\mytop}(C)$---the other parts
are handled similarly---is $x$-monotone.
Moreover, we can show that each disk $D_i\in\D(C)$ contributes at most
one arc to the boundary of $\union(\D(C))$ above $\ell_{\mytop}(C)$, and the
left-to-right order of the contributed arcs is consistent with the left-to-right
order of the corresponding disk centers. Using this fact, we can do point locations and
insertions in $O(\log n)$ time.

Combining the global technique of the previous section with Theorem~\ref{th:pl-in-disk-union}
we obtain the following theorem.

\begin{theorem}\label{th:bottleneck-decision}
Let $P$ be an ordered set of $n$ points in the plane, and let $B>0$ be a given parameter.
Then we can decide in $\Oh(n\log n)$ time and using $\Oh(n)$ storage
if $P$ admits a pyramidal tour whose longest edge has length at most~$B$. This problem requires $\Omega(n\log n)$ time in the algebraic computation tree model of computation.
\end{theorem}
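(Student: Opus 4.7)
For the upper bound, the plan is to replay the structure of the proof of Theorem~\ref{th:regular} with a Boolean DP in place of the numeric one. Redefine $A[i,j]\in\{\true,\false\}$ to indicate whether \emph{some} $(i,j)$-partial tour uses only edges of length at most~$B$. Recurrence~(\ref{eq:pyramidal:recurrence}) then specializes as follows: for $j<i$ we have $A[i+1,j]=A[i,j]\wedge(|p_ip_{i+1}|\le B)$, and $A[i+1,i]$ is \true iff some $k<i$ satisfies $A[i,k]=\true$ and $|p_kp_{i+1}|\le B$. The first case is a bulk update of the simplest kind: either every \true entry is preserved (when $|p_ip_{i+1}|\le B$) or every \true entry becomes \false. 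The second case asks whether $p_{i+1}$ lies within distance~$B$ of some currently-\true point, i.e., whether $p_{i+1}$ lies in the union of disks of radius~$B$ centered on those points---precisely the query supported by Theorem~\ref{th:pl-in-disk-union}.

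The algorithmic plan is therefore to maintain a single instance of the data structure of Theorem~\ref{th:pl-in-disk-union} holding exactly the radius-$B$ disks around the points~$p_k$ currently marked \true. At step~$i$ we (i) query with $p_{i+1}$ to read off $A[i+1,i]$, (ii) re-initialize the structure if $|p_ip_{i+1}|>B$, and (iii) insert a disk around~$p_i$ when $A[i+1,i]=\true$. Every point is inserted at most once throughout, so the $\Oh(n)$ queries, insertions, and trivial re-initializations together cost $\Oh(n\log n)$ time within $\Oh(n)$ storage. One final query with~$p_n$ then tests whether some $A[n,k]=\true$ admits the closing edge~$p_kp_n$ of length at most~$B$, answering the decision problem.

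For the $\Omega(n\log n)$ lower bound, the plan is a linear-time reduction from integer \textsc{Set Disjointness}, which is known to require $\Omega(n\log n)$ time in the algebraic computation tree model~\cite{y-lbact-91}. Given $A,B\subseteq\Integers$ with $|A|=|B|=n$, the reduction must produce an ordered $\Oh(n)$-point set together with a threshold~$B^\ast$ such that a bitonic tour with longest edge at most~$B^\ast$ exists iff $A\cap B=\emptyset$. A natural encoding places the values of~$A$ on the line $y=0$ and the values of~$B$ on a parallel line (with a tiny perturbation to ensure distinct $x$-coordinates) and chooses~$B^\ast$ so that an edge between an $A$-point and a $B$-point is short enough to use iff the underlying integers coincide; auxiliary points at the far left and right force the tour to decompose into two monotone chains that jointly cover all points.

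The main obstacle is precisely this lower-bound reduction: one must engineer the point configuration and the value~$B^\ast$ so that a single coincidence $a_i=b_j$ obstructs \emph{every} bitonic tour at threshold~$B^\ast$, while disjoint inputs always admit one, exploiting the rigidity of bitonic covers (each point appears on exactly one of the two monotone chains). By contrast, the upper bound is essentially a direct port of the proof of Theorem~\ref{th:regular}, substituting the disk-union structure of Theorem~\ref{th:pl-in-disk-union} for the additively-weighted Voronoi machinery and collapsing the numeric minimum into a Boolean existence test.
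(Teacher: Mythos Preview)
Your upper bound argument is correct and essentially identical to the paper's: maintain exactly the set of points $p_k$ with $A[i,k]=\true$ in the disk-union structure of Theorem~\ref{th:pl-in-disk-union}, query to obtain $A[i+1,i]$, empty the structure when $|p_ip_{i+1}|>B$, and insert $p_i$ when $A[i+1,i]=\true$. Since each point is inserted at most once, the amortized bounds give $\Oh(n\log n)$ time and $\Oh(n)$ space.

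Your lower-bound plan, however, has the direction of the equivalence reversed, and this is not a cosmetic slip. You propose that a tour with bottleneck at most~$B^\ast$ should exist iff the input sets are \emph{disjoint}, i.e., that a coincidence $a_i=b_j$ \emph{obstructs} every admissible bitonic tour. But the geometry you sketch---$A$-points on one horizontal line, $B$-points on a parallel line, with a cross-edge short enough to use iff the underlying integers coincide---does precisely the opposite: a coincidence \emph{creates} a usable short edge and can only make it easier to stay under the threshold. There is no evident mechanism by which two nearby points on parallel lines would force a long edge into every bitonic tour.

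The paper's reduction goes the other way: it arranges the instance so that any pyramidal tour must cross between two far-apart vertical lines at least twice, and such a crossing fits under the bottleneck~$B$ only when the two endpoints share a $y$-coordinate. One crossing is provided for free by two auxiliary points; the second crossing is available iff $U\cap V\neq\emptyset$. Thus the tour exists iff the sets are \emph{not} disjoint. Your construction sketch does not supply the key ingredient---a forced long-distance transition that only a coincidence can shorten---and as stated it would not yield the claimed equivalence.
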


The algorithm for the decision version does not easily extend to solve the minimization version of the problem.
We therefore design a specialized data structure---a tree storing unions of disks and (regular) Voronoi diagrams---that
allows us to obtain the following result. (See Appendix~\ref{subse:bottleneck}.)

\begin{theorem}\label{th:bottleneck}
Let $P$ be an ordered set of $n$ points in the plane.
Then we can compute a pyramidal tour whose bottleneck edge has minimum
length in $O (n\log^3 n)$ time and using $\Oh(n\log n)$ storage.
\end{theorem}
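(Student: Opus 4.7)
The plan is to recycle the row-by-row framework of Theorem~\ref{th:regular}, reinterpreting the entry~$A[i,j]$ as the minimum \emph{bottleneck} (longest-edge length) over all~$(i,j)$-partial tours. The recurrence becomes
\begin{equation*}
A[i+1,j] \;=\; \begin{cases}
\max\bigl(A[i,j],\,|p_i p_{i+1}|\bigr) & \text{if } 1\le j<i,\\
\min_{1\le k<i}\,\max\bigl(A[i,k],\,|p_k p_{i+1}|\bigr) & \text{if } j=i.
\end{cases}
\end{equation*}
Hence an implicit representation of one row must support, in $\Oh(\log^3 n)$ amortized time, three operations: insertion of a weighted point; a \emph{bulk-max} operation that replaces every stored weight~$w_k$ by $\max(w_k,\Delta)$ for a given~$\Delta$; and a \emph{bottleneck nearest-neighbor query} that, given a query point~$q$, returns $\min_k \max(w_k,\,|p_k q|)$. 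Performing one query, one bulk-max, and one insertion per row then yields the claimed $\Oh(n\log^3 n)$ bound.

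The bulk-max is handled lazily by maintaining a single global floor~$F$, so that a stored point's effective weight is $\max(w_k,F)$. Because
\[
\min_k \max\bigl(\max(w_k,F),\,|p_k q|\bigr) \;=\; \max\bigl(F,\;\min_k \max(w_k,\,|p_k q|)\bigr),
\]
a query can ignore~$F$ during the actual search and cap the result at~$F$ at the end, so it suffices to support insertions and bottleneck nearest-neighbor queries. Sort the currently stored points by weight as $w_{(1)}\le\cdots\le w_{(m)}$ and write $d_i := \min_{\ell\le i}|p_{(\ell)} q|$. Then the query answer equals $\min_i \max(w_{(i)},d_i)$, and since $w_{(\cdot)}$ is nondecreasing while $d_\cdot$ is nonincreasing, the optimum is attained either at the smallest index~$i^\ast$ with $d_{i^\ast}\le w_{(i^\ast)}$ (giving value~$w_{(i^\ast)}$) or at the preceding index (giving value~$d_{i^\ast-1}$). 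Equivalently, the answer is the smallest~$B$ such that a disk of radius~$B$ around~$q$ contains a stored point of weight at most~$B$, a query of the same flavour as the one solved by Theorem~\ref{th:pl-in-disk-union}.

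For the data structure itself I would use a balanced binary tree over the stored points sorted by weight, in which every internal node stores both the (regular) Voronoi diagram of its subtree---augmented with a point-location structure, for Euclidean nearest-neighbor queries in $\Oh(\log n)$ time---and a union-of-congruent-disks structure as in Theorem~\ref{th:pl-in-disk-union}, so that the disk-containment certificate for the current candidate~$B$ can be checked on the fly. A bottleneck nearest-neighbor query then descends the tree while maintaining the running minimum of distances contributed by the already-inspected lighter subtrees, locating the crossover index~$i^\ast$ in $\Oh(\log^2 n)$ time per tree. To absorb insertions I would plug this into the logarithmic method~\cite{bs-dspsd-80} (already used for Theorem~\ref{th:regular}), maintaining $\Oh(\log n)$ such trees of geometrically increasing sizes; queries then cost $\Oh(\log^3 n)$ amortized, insertions $\Oh(\log^2 n)$ amortized, and storage is $\Oh(n\log n)$, matching the bounds of the theorem.

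The main obstacle will be threading the monotone sequence~$d_\cdot$ consistently across the $\Oh(\log n)$ trees, since the global ``sorted by weight'' order is partitioned among them. I would handle this by performing a coordinated descent that always advances into the subtree holding the next-smallest candidate weight, carrying the running minimum of distances from one tree to the next; because the monotonicity argument above depends only on the global weight order and the global running minimum of distances, this yields the correct crossover across all blocks. The remaining steps---verifying the recurrence, bounding preprocessing costs of static blocks (both the subtree Voronoi diagrams and the disk-union structures build in $\Oh(m\log m)$ per block of size~$m$), and recovering an optimal tour from the table---are routine extensions of the argument for Theorem~\ref{th:regular}.
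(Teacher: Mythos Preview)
Your plan matches the paper's: a weight-sorted balanced tree with a (standard) Voronoi diagram at every node, bulk-max handled by a lazy floor, and insertions via the logarithmic method. A few corrections. The mention of a ``union-of-congruent-disks structure as in Theorem~\ref{th:pl-in-disk-union}'' is misplaced: the relevant disks $D(p_j,w_j)$ are not congruent, and in fact your crossover descent needs only the per-node Voronoi diagrams together with the subtree-maximum weights. (The paper does store, at each node~$\nu$, the union $U(\nu)=\bigcup_{p_j\in P(\nu)} D(p_j,w_j)$ of \emph{weighted} disks and descends by testing $q\in U(\text{left child})$ to compute $B_1=\min\{w_j:|p_jq|\le w_j\}$ and then $B_2=\min\{|p_jq|:w_j<B_1\}$ separately; your direct crossover search is a valid, slightly leaner alternative.) Your worry about threading $d_\cdot$ across the $\Oh(\log n)$ blocks of the logarithmic method is unnecessary, because $\min_k \max(w_k,|p_kq|)$ decomposes over any partition of the point set --- simply query each block independently and take the minimum. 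Lastly, the floor must be kept per block rather than truly globally: a newly inserted point with weight $A[i{+}1,i]$ must not be raised by earlier bulk-max operations, and $A[i{+}1,i]$ can legitimately be smaller than $\max_{\ell\le i}|p_\ell p_{\ell+1}|$.
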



\section{The $k$-OPT problem in general graphs}
In this section we change the perspective from Euclidean problems to the \TSP in general graphs. A \emph{tour} of an undirected graph~$G$ is a Hamiltonian cycle in the graph. Depending on the context, we may treat a tour as a permutation of the vertex set or as a set of edges. We consider undirected, weighted complete graphs to model symmetric TSP inputs. The weight of a tour is simply the sum of the weights of its edges. Recall that a \emph{\kmove} of a tour~$T$ is an operation that replaces a set of $k$ edges in~$T$ by another set
of $k$ edges from $G$ in such a way that the result is a valid tour. In degenerate cases, such an operation may delete and reinsert the same edge.
The associated decision problem is defined as follows.

\defproblem{\kOPTDetection}
{A complete undirected graph~$G$ along with a (symmetric) distance function~$d \colon E(G) \to \mathbb{N}$, an integer~$k$, and a tour~$T \subseteq E(G)$.}
{Is there a \kmove that strictly improves the cost of~$T$?}

The optimization problem \kOPTOptimization is to compute, given a tour in a graph, a \kmove that gives the largest cost improvement, or report that no improving \kmove exists.

\subsection{On truly subcubic algorithms for 3-OPT}

We say that an algorithm for $n$-vertex graphs with integer edge weights in the range~$[-M,\ldots,M]$ runs in \emph{truly subcubic time} if its runtime is bounded by~$\Oh(n^{3-\eps} \polylog(M))$ for some constant~$\eps > 0$. Vassilevska-Williams and Williams~\cite{WilliamsW10} introduced a framework for relating the truly subcubic solvability of several classic problems to each other. We use it to show that the existence of a truly subcubic algorithm for \ThreeOPT is unlikely. Their framework uses a notion of subcubic reducibility based on Turing reducibility~\cite[\S IV]{WilliamsW10} that solves one instance of problem~$A$ by repeatedly solving inputs of problem~$B$. For our applications, simple reductions suffice that transform one input of problem~$A$ into one input of problem~$B$ of roughly the same size, in~$\Oh(n^2)$ time.\footnote{We assume that simple arithmetic on weights can be done in constant time. The $\polylog(M)$ factors used in the framework originate from repeated executions to perform binary search on weight values.} Such reductions preserve the existence of truly subcubic algorithms, so we take this simpler viewpoint. The following problem is the starting point for our reductions.

\defproblem
{\NegativeTriangle}
{An undirected, complete graph~$G$ and a weight function $w \colon E(G) \to\mathbb{Z}$.}
{Does $G$ contain a triangle whose total edge-weight is negative?}

Vassilevska-Williams and Williams~\cite[Thm. 1.1]{WilliamsW10} proved that \NegativeTriangle has a truly subcubic algorithm if and only if the \textsc{All-Pairs Shortest Paths} problem on digraphs with non-negative integral edge weights has a truly subcubic algorithm.

\begin{lemma} \label{lemma:negtriangle:to:threeopt}
\NegativeTriangle can be reduced to \ThreeOPTDetection in time~$\Oh(n^2)$ while increasing the size of the graph and the largest weight by a constant factor.
\end{lemma}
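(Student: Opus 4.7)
The strategy is to encode negative triangles of $G$ as improving $3$-moves in a carefully designed TSP instance on $6n$ vertices. I would create three ``chains'' of paired vertices --- one per prospective triangle role $A$, $B$, $C$ --- so that removing exactly one pair-edge per chain produces six path-endpoints that admit a unique three-edge reconnection whose weights encode the edges of a triangle of $G$; by weighting every other non-tour edge prohibitively, any improving $3$-move must be of this form.

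Let $V(G) = \{v_1, \ldots, v_n\}$ and $M := \max_e |w(e)|$. The constructed graph has vertex set $\{a_i^t, b_i^t, c_i^t : 1 \le i \le n,\ t \in \{1,2\}\}$, and the initial tour is the Hamiltonian cycle
\[
a_1^1, a_1^2, \ldots, a_n^1, a_n^2,\ b_1^1, b_1^2, \ldots, b_n^1, b_n^2,\ c_1^1, c_1^2, \ldots, c_n^1, c_n^2, a_1^1.
\]
Picking integer constants $L > 3M$ and $W > L + 2M$, I assign weights as follows: each pair-edge $(a_i^1, a_i^2)$, $(b_i^1, b_i^2)$, $(c_i^1, c_i^2)$ gets weight $L$; the remaining tour edges get weight $0$; the ``triangle edges'' $(a_i^1, b_j^1)$, $(b_j^2, c_k^2)$, $(c_k^1, a_i^2)$ get weights $w(v_i, v_j) + L$, $w(v_j, v_k) + L$, $w(v_i, v_k) + L$ respectively; and every other non-tour edge gets weight $W$. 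A uniform shift makes all weights non-negative integers, and the construction runs in $\Oh(n^2)$ time on $6n$ vertices with maximum weight $\Oh(M)$.

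Next I would analyze the weight change (inserted minus removed) of an arbitrary $3$-move. If any removed tour edge has weight $0$, the removed sum is at most $2L$, while any three non-tour edges sum to at least $3(L - M) > 2L$, so the move is non-improving; hence every improving move removes three pair-edges. If two of these share a chain, the six path endpoints lie in only two of the three chains, so at most one triangle edge is available, and the forced weight-$W$ insertion makes the change positive via the gap $W > L + 2M$. The remaining case has one pair-edge per chain, at indices $i, j, k$, with path endpoints $\{a_i^1, a_i^2, b_j^1, b_j^2, c_k^1, c_k^2\}$; the unique three-matching on these endpoints composed entirely of triangle edges is $(a_i^1, b_j^1), (b_j^2, c_k^2), (c_k^1, a_i^2)$, and one checks that this matching patches the three paths into a single Hamiltonian cycle, yielding weight change exactly $w(v_i, v_j) + w(v_j, v_k) + w(v_i, v_k)$. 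This is negative iff $v_i v_j v_k$ is a negative triangle in $G$. Every other three-matching on these endpoints contains a weight-$W$ edge or reinserts a removed pair-edge, and neither yields improvement.

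The main obstacle is precisely the bookkeeping above: checking that reconnections mixing triangle and weight-$W$ edges never improve, and that partially degenerate reconnections reinserting one or more removed pair-edges collapse to effectively smaller moves whose net change is non-negative. Both arguments reduce to routine arithmetic on $L, W, M$.
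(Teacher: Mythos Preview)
Your reduction is correct, but it takes a different route from the paper's. The paper uses only $2n$ vertices $a_1,b_1,\ldots,a_n,b_n$ in a single interleaved chain with tour $a_1,b_1,a_2,b_2,\ldots,a_n,b_n$: the pair-edges $\{a_i,b_i\}$ get weight~$0$, the connectors $\{b_i,a_{i+1}\}$ get weight~$-3M$ (so cheap they can never leave an improving move), the cross-edges $\{a_i,b_j\}$ and $\{b_i,a_j\}$ directly carry $w(v_i,v_j)$, and the same-type edges $\{a_i,a_j\},\{b_i,b_j\}$ get weight~$3M$. Any improving \Threemove is then forced to drop three zero-weight pair-edges and reconnect via three cross-edges summing to the weight of a triangle in~$G$. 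Your tripartite construction with chains $A,B,C$ is bulkier ($6n$ versus $2n$ vertices) and inverts the weight logic---pair-edges are heavy and the other tour edges light, rather than the paper's ``unremovably cheap'' connectors---but the mechanism is the same: make the non-pair tour edges unprofitable to remove, and block every reconnection pattern except the one encoding a triangle. What your version buys is that each triangle edge lives between two distinct chains, which makes the ``any other matching contains a weight-$W$ edge'' argument slightly more mechanical; what the paper's version buys is a smaller instance and a shorter case analysis, since with a single chain there is only one pattern of removed edges to consider. One small point to make explicit in your write-up: the triangle-edge weights $w(v_i,v_j)+L$ should be defined only for $i\neq j$ (with the diagonal pairs $(a_i^1,b_i^1)$ etc.\ falling into the weight-$W$ bucket), otherwise a Type-I move with two equal indices could spuriously improve.
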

\begin{proof}
Consider an instance $(G,w)$ of \NegativeTriangle, and let $v_1,\ldots,v_n$ be an enumeration
of the vertices of~$G$.
Let~$M$ be the largest absolute value of an edge weight.
We introduce an instance of \ThreeOPTImprovement that consists of $2n$ vertices $a_1,\ldots,a_n$ and
$b_1,\ldots,b_n$, where the starting tour $T$ uses the ordering $a_1,b_1,a_2,b_2,\ldots,a_n,b_n$.
The (symmetric) distances $d(\cdot,\cdot)$ between these vertices are defined as follows:
\begin{itemize}
\item $d(a_i,b_i)=0$ for $1\le i\le n$;
\item $d(b_n,a_1)=-3M$, and $d(b_i,a_{i+1})=-3M$ for $1\le i\le n-1$;
\item $d(a_i,b_j)=w(\{v_i,v_j\})$ for $1\le i < j\le n$;
\item $d(b_i,a_j)=w(\{v_i,v_j\})$ for $1\le i < j - 1 \le n - 1$;
\item $d(a_i,a_j)=d(b_i,b_j)=3M$ for $1\le i\ne j\le n$.
\end{itemize}
(For convenience, we allow distances to be negative in this construction.
One easily moves to non-negative distances by adding the constant $4M$ to all distances.)

\begin{claim}
The constructed instance of \ThreeOPTImprovement allows an improving \ThreeOPT move,
if and only if the graph $G$ contains a triangle of negative edge-weight.
\end{claim}
\begin{claimproof}
($\Leftarrow$) Assume that the vertices $v_i,v_j,v_k$ span a triangle of negative edge-weight in $G$ for~$i<j<k$.
We remove the three edges $\{a_i,b_i\}$, $\{a_j,b_j\}$, and $\{a_k,b_k\}$ from tour $T$,
and we reconnect the resulting pieces by the three edges $\{a_i,b_j\}$, $\{a_j,b_k\}$, and $\{a_k,b_i\}$.
The three removed edges have total length~$0$, while the three inserted edges have negative total length.

($\Rightarrow$) Now assume that there exists an improving \Threemove for tour $T$.
This improving move cannot remove any edge $\{b_i,a_{i+1}\}$ or $\{b_n,a_1\}$, as these edges have length $-3M$, the tour~$T$ contains no edges of positive length to potentially remove, and each edge that enters the tour has length at least~$-M$.
Consequently, the three removed edges will be $\{a_i,b_i\}$, $\{a_j,b_j\}$, and $\{a_k,b_k\}$ for some $i<j<k$.
As these three edges have total length~$0$, the total length of the three inserted edges must be strictly negative.
The edges $\{a_x,a_y\}$ and $\{b_x,b_y\}$ all have length $3M$, while the edges $\{a_x,b_y\}$ all have length
between $-M$ and $M$.
This implies that every inserted edge is either of the type $\{a_x,b_y\}$, or coincides with one of the removed edges.
Suppose for the sake of contradiction that one of the inserted edges coincides with a removed edge $\{a_k,b_k\}$,
so that we are actually dealing with a \Twomove.
Then the two inserted edges in the \Twomove must be $\{a_i,a_j\}$ and $\{b_i,b_j\}$, so that the new tour
is by $6M$ \emph{longer} than the old tour $T$.
This contradiction leaves only two possibilities for the three inserted edges:
either $\{a_i,b_j\}$, $\{a_j,b_k\}$, $\{a_k,b_i\}$, or $\{a_i,b_k\}$, $\{a_k,b_j\}$, $\{a_j,b_i\}$ (of which the latter is actually not a valid 3-move). Since the total length of the three inserted edges is strictly negative, the three vertices $v_i,v_j,v_k$
form a triangle of strictly negative weight in $G$.
\end{claimproof}

The claim shows the correctness of the reduction. It is easy to perform in~$\Oh(n^2)$ time.
\end{proof}

There is an analogous reduction in the other direction, which is given as Lemma~\ref{lemma:threeopt:to:negtriangle} in Appendix~\ref{se:appendix-reduction}. Together, these lemmata show the equivalence of finding negative-weight triangles and detecting improving \ThreeOPT moves. From our reductions and the results of Vassilevska-Williams and Williams~\cite[Thm. 1.1]{WilliamsW10}, we obtain the following theorem.

\begin{theorem}
There is a truly subcubic algorithm for \ThreeOPTDetection if and only if there is such an algorithm for \textsc{All-Pairs Shortest Paths} on weighted digraphs.
\end{theorem}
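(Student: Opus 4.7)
The plan is to derive the theorem by concatenating three equivalences, all of which preserve truly subcubic runtimes because each reduction only blows up the instance size by a constant factor and runs in $\Oh(n^2)$ time, which is subsumed by any target runtime of the form $\Oh(n^{3-\eps}\polylog(M))$. The three equivalences are: (i) \NegativeTriangle reduces to \ThreeOPTDetection (Lemma~\ref{lemma:negtriangle:to:threeopt} above); (ii) \ThreeOPTDetection reduces to \NegativeTriangle (Lemma~\ref{lemma:threeopt:to:negtriangle} in the appendix); (iii) \NegativeTriangle is subcubic-equivalent to \APSP on weighted digraphs with non-negative integer weights, by Vassilevska-Williams and Williams~\cite[Thm.~1.1]{WilliamsW10}.

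First, I would spell out the $(\Leftarrow)$ direction: assume \APSP admits an algorithm running in $\Oh(n^{3-\eps}\polylog(M))$ time. By (iii), there is a truly subcubic algorithm for \NegativeTriangle. Given an instance of \ThreeOPTDetection on $n$ vertices with weights bounded by $M$, apply Lemma~\ref{lemma:threeopt:to:negtriangle} to produce, in $\Oh(n^2)$ time, an instance of \NegativeTriangle on $\Oh(n)$ vertices with weights bounded by $\Oh(M)$. Running the assumed algorithm yields a truly subcubic algorithm for \ThreeOPTDetection, since $\Oh(n^2)$ plus $\Oh(n^{3-\eps}\polylog(M))$ is still truly subcubic.

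Next I would give the $(\Rightarrow)$ direction in the symmetric way: assume a truly subcubic algorithm for \ThreeOPTDetection. Given a \NegativeTriangle instance on $n$ vertices with weights bounded by $M$, apply Lemma~\ref{lemma:negtriangle:to:threeopt} to obtain, in $\Oh(n^2)$ time, a \ThreeOPTDetection instance on $2n$ vertices whose weights remain bounded by $\Oh(M)$. By the assumed algorithm, this can be solved in truly subcubic time in $n$ and $\log M$, and by the correctness statement of the lemma the answer coincides with the original instance. Hence \NegativeTriangle has a truly subcubic algorithm, and then (iii) gives one for \APSP.

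The main conceptual content has already been done in the earlier lemmas; the only remaining point to be careful about is to argue that the simple Turing-style reductions used here (one call, input blowup by a constant factor, $\Oh(n^2)$ overhead) do fit the subcubic reducibility framework of~\cite[\S IV]{WilliamsW10}, as noted in the remark preceding Lemma~\ref{lemma:negtriangle:to:threeopt}. I would include one sentence confirming this, so that the chain of three equivalences can be composed without losing the truly subcubic property, completing the proof. The chief obstacle in writing this up is purely bookkeeping: making sure the $\polylog(M)$ factors and the constant-factor weight blowups compose correctly through all three reductions, rather than any new mathematical argument.
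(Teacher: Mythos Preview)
Your proposal is correct and follows essentially the same approach as the paper: the paper's proof is simply the observation that Lemmas~\ref{lemma:negtriangle:to:threeopt} and~\ref{lemma:threeopt:to:negtriangle} together with~\cite[Thm.~1.1]{WilliamsW10} yield the theorem, which is exactly what you spell out. Your write-up is in fact more detailed than the paper's, which just states the conclusion after citing the two lemmas and the Williams--Williams result.
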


\subsection{A fast $k$-OPT algorithm}
In this section we will prove that the \kOPTOptimization problem can be solved significantly faster than $\Theta(n^k)$ when $k\geq 4$. To this end, we first analyze the structure of \kOPT moves.
Consider a \kmove for a given tour $T\subseteq E(G)$, and let $e_1,\ldots,e_k$ be the removed edges
with $e_i=\{v_{2i-1},v_{2i}\}$.
We assume throughout that these vertices (and edges) are indexed in such a way that $T$ traverses the
vertices $v_i$ in order of increasing index.
We assume furthermore that the vertices $v_1,\ldots,v_{2n}$ are pairwise distinct. By a simple reduction that subdivides the edges on the tour (Appendix~\ref{se:appendix-kopt}), it is sufficient to find an algorithm that detects an improving \kmove in which the removed edges do not share any endpoints. (The arguments presented here also
go through without this assumption, but the notation becomes more complicated in the equality case.)
The $k$ new edges that are inserted into $T$ are denoted $f_1,\ldots,f_k$.
The \emph{signature} of this \kmove is a permutation $\pi$ of~$\{1,\ldots,2k\}$,
such that vertex $v_j$ and vertex
$v_{\pi(j)}$ form the endpoints of one of the edges $f_1,\ldots,f_k$;
see Fig.~\ref{fi:4-changes}.
\begin{figure}
\begin{center}
\includegraphics{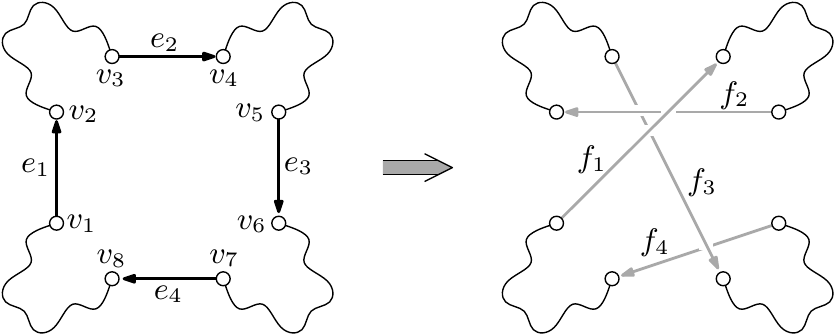}
\end{center}
\caption{A 4-change with signature 4,5,7,1,2,8,3,6.
         Edges $e_1$ and $e_4$ are non-interfering. As we work on symmetric TSP, the graph and distance function are undirected; the arc directions merely indicate the traversal direction with respect to an arbitrary orientation of the tour.}
\label{fi:4-changes}
\end{figure}
Note that the removed edges $e_1,\ldots,e_k$ together with the signature $\pi$ fully determine
the \kmove (and in particular determine the inserted edges $f_1,\ldots,f_k$).

Note furthermore that not every permutation $\pi$ yields a feasible signature that corresponds
to some \kmove:
First, in a feasible signature $\pi(i)=j$ always implies $\pi(j)=i$, and we will always have $\pi(i)\ne i$.
Secondly, in a feasible signature the edge set that results from $T$ by removing $e_1,\ldots,e_k$ and by
inserting $f_1,\ldots,f_k$ must form a single Hamiltonian cycle---it must never form a collection of two or
more cycles.
It is easy to check whether a given permutation $\pi$ constitutes a feasible
signature, and to enumerate all feasible signatures.

We say that two of the removed edges $e_i$ and $e_j$ \emph{interfere} with each other in a \kmove, if
there exists an inserted edge $f$ that connects one of the endpoints of $e_i$ to an endpoint of $e_j$. The following lemma states that in any \kmove, there is a set of~$\lceil k/3\rceil$ pairwise non-interfering edges. This is essentially due to the fact that every $k$-vertex $2$-regular graph (collection of cycles) contains an independent set of size at least~$\lceil k/3 \rceil$; we prove it here in the \kOPT terminology.
\begin{lemma}
For any signature $\pi$, we can find a subset $E_{\pi}\subseteq\{e_1,\ldots,e_k\}$ of at least
$\lceil k/3\rceil$ removed edges that are pairwise non-interfering.
\end{lemma}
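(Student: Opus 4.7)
The plan is to encode the interference relation as an auxiliary graph and then apply a standard independent-set bound on graphs of maximum degree two. Concretely, I would define the \emph{interference graph} $H$ on vertex set $\{e_1,\ldots,e_k\}$ by placing an edge between $e_i$ and $e_j$ (for $i\neq j$) exactly when they interfere, i.e.\ when some $f_\ell$ joins an endpoint of $e_i$ to an endpoint of $e_j$. A subset $E_\pi$ of pairwise non-interfering removed edges is then precisely an independent set in $H$, so it suffices to show that $H$ admits an independent set of size at least $\lceil k/3\rceil$.

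The key structural step is to argue that $H$ has maximum degree at most $2$. Fix a removed edge $e_i=\{v_{2i-1},v_{2i}\}$. Because $\pi$ is a fixed-point-free involution, each of the endpoints $v_{2i-1}$ and $v_{2i}$ is incident to exactly one inserted edge, namely $\{v_{2i-1},v_{\pi(2i-1)}\}$ and $\{v_{2i},v_{\pi(2i)}\}$. Only these two inserted edges can witness an interference between $e_i$ and some other $e_j$, so $e_i$ has at most two neighbors in $H$. (In degenerate cases—when an inserted edge coincides with a removed edge, or when both of $e_i$'s endpoints connect to endpoints of the same $e_j$—the number of distinct neighbors of $e_i$ only decreases, so the bound still holds.)

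Once we know that $\Delta(H)\leq 2$, the graph $H$ decomposes into a disjoint union of simple paths and simple cycles. For a path or cycle on $m$ vertices, the independence number is at least $\lceil m/3\rceil$: this follows from a $3$-colouring (take the largest colour class) or equivalently from the greedy strategy of picking every third vertex along the path/cycle. Summing over all components, whose sizes $m_1,\ldots,m_t$ satisfy $\sum_\ell m_\ell=k$, and using the superadditivity of the ceiling function gives
\[
\alpha(H)\ \geq\ \sum_{\ell=1}^{t}\left\lceil\frac{m_\ell}{3}\right\rceil\ \geq\ \left\lceil\frac{k}{3}\right\rceil,
\]
which is the desired bound.

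The only genuinely delicate point is the degree argument in the previous paragraph: one needs to be slightly careful that the fixed-point-free involution structure of $\pi$ really does limit each $e_i$ to at most two distinct interfering partners, even allowing degenerate reuses of tour edges as inserted edges. Everything else—assembling $H$ and invoking the independence number of a disjoint union of paths and cycles—is standard and can be carried out without any additional computation.
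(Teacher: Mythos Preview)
Your proof is correct and is essentially the same argument as the paper's. The paper works directly with the alternating cycles formed by the $e_i$'s and $f_j$'s on $\{v_1,\ldots,v_{2k}\}$ and picks every other removed edge along each cycle, while you first contract each $e_i$ to a single vertex to obtain your interference graph $H$ and then invoke the $\lceil k/3\rceil$ independent-set bound for graphs of maximum degree two; these are two phrasings of the same idea, and the paper itself remarks that the lemma ``is essentially due to the fact that every $k$-vertex $2$-regular graph contains an independent set of size at least $\lceil k/3\rceil$.''
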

\begin{proof}
The $2k$ edges $e_1,\ldots,e_k$ and $f_1,\ldots,f_k$ induce a set of cycles on the vertices $v_1,\ldots,v_{2k}$.
If such a cycle contains an even number of removed edges, say $2\ell$,
we put every other removed edge along this
cycle into $E_{\pi}$; this yields $\ell$ out of $2\ell$ edges for $E_{\pi}$.
If the cycle contains only a single removed edge, we put this single edge into $E_{\pi}$; this yields one out
of one edge for $E_{\pi}$.
If the cycle contains an odd number of removed edges, say $2\ell+1\ge3$,
we ignore the first removed edge and then
put every other removed edge along the cycle into $E_{\pi}$; this yields $\ell$ out of $2\ell+1$ edges for $E_{\pi}$.
The weakest contribution to $E_{\pi}$ comes from cycles with three removed edges, which yield only one out of
three edges for $E_{\pi}$. The claimed bound $\lceil k/3\rceil$ follows.
\end{proof}
\begin{theorem}
For every fixed~$k \geq 3$, the \kOPTOptimization problem on an $n$-vertex graph can be solved in~$\Oh(n^{\lfloor2k/3\rfloor+1})$ time.
\end{theorem}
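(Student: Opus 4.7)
The plan is to enumerate all feasible signatures $\pi$ (a set whose size depends only on $k$) and, for each one, compute the best improving $k$-move consistent with $\pi$ in time $\Oh(n^{\lfloor 2k/3\rfloor+1})$. Thanks to the preceding lemma, once a signature $\pi$ is fixed we obtain a non-interfering subset $E_\pi \subseteq \{e_1,\ldots,e_k\}$ of size at least $\lceil k/3\rceil$; write $E_\pi^{c} = \{e_1,\ldots,e_k\}\setminus E_\pi$, so $|E_\pi^{c}| \leq \lfloor 2k/3\rfloor$.

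First I would enumerate all ways to place the $|E_\pi^{c}|$ interfering edges on tour edges of $T$ respecting the cyclic order prescribed by $\pi$; this gives $\Oh(n^{\lfloor 2k/3\rfloor})$ choices. Each choice pins down the $2|E_\pi^{c}|$ vertices that play the role $v_s$ for indices $s$ belonging to a complement edge, and cuts the tour into cyclic arcs, one per non-interfering $e_i\in E_\pi$, inside which $e_i$ must be placed so that the indices $1,2,\ldots,2k$ respect the cyclic order on $T$.

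The crucial structural observation is that, once the complement is fixed, the cost contribution of each non-interfering edge $e_i = \{v_{2i-1},v_{2i}\}$ decouples completely from the other non-interfering edges. Indeed, the definition of non-interference forces both inserted edges incident to $v_{2i-1}$ and $v_{2i}$ to have their other endpoint in the already-fixed vertex set of $E_\pi^{c}$. Therefore the local gain
\[
\mathrm{gain}(e_i) \;=\; w(e_i) \;-\; d\bigl(v_{2i-1}, v_{\pi(2i-1)}\bigr) \;-\; d\bigl(v_{2i}, v_{\pi(2i)}\bigr)
\]
depends only on the tour edge chosen to play the role of $e_i$, and can be evaluated in $\Oh(1)$ per candidate. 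Because the cyclic arcs available to the various $e_i \in E_\pi$ are pairwise disjoint, a single linear scan over the tour suffices to identify the optimal position of every non-interfering edge in $\Oh(n)$ total time per complement placement. Summing the overall costs yields $\Oh(n^{\lfloor 2k/3\rfloor}\cdot n) = \Oh(n^{\lfloor 2k/3\rfloor+1})$ time per signature, and the grand total over the $\Oh_k(1)$ feasible signatures stays within the same bound; the global maximum gain across all signatures is then returned.

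The main obstacle I anticipate is bookkeeping rather than algorithmic depth: carefully threading the cyclic (rather than linear) ordering through the enumeration, excluding configurations in which a non-interfering edge collides with or shares an endpoint with a complement edge (which the subdivision reduction from the appendix renders benign), and ensuring that each enumerated configuration indeed reconstructs a single Hamiltonian cycle—this last condition is already guaranteed by the feasibility of $\pi$, but needs to be checked once when enumerating signatures. The $\Oh(1)$-time evaluation of $\mathrm{gain}(e_i)$ also silently relies on being able to identify which endpoint of a tour edge plays the role of $v_{2i-1}$ versus $v_{2i}$; this is resolved by trying both orientations, which only costs a factor $2^{|E_\pi|} = \Oh_k(1)$.
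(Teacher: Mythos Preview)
Your approach coincides with the paper's: enumerate signatures, fix a non-interfering set $E_\pi$, brute-force the positions of the complement $E_\pi^c$ in $\Oh(n^{\lfloor 2k/3\rfloor})$ ways, and exploit the decoupling of the gains of the edges in $E_\pi$ to finish each complement placement in $\Oh(n)$.

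There is one genuine slip. You assert that the cyclic arcs available to the different $e_i\in E_\pi$ are pairwise disjoint, and from this conclude that a single linear scan finds the best position of each $e_i$ independently. This is false in general: two edges that are \emph{consecutive in the tour ordering}, say $e_i$ and $e_{i+1}$, can both belong to $E_\pi$. Non-interference only says no inserted edge joins their endpoints; it says nothing about their relative position in the tour order. (For a concrete instance with $k=4$, take the signature whose alternating $e/f$ cycle structure decomposes as $e_1\!-\!f\!-\!e_3\!-\!f$ and $e_2\!-\!f\!-\!e_4\!-\!f$; then $\{e_1,e_2\}$ is a valid $E_\pi$.) When that happens, both $e_i$ and $e_{i+1}$ must be embedded into the \emph{same} arc between two consecutive frozen edges, and they must appear there in the prescribed order. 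Independently maximizing each gain could place $e_{i+1}$ before $e_i$ and return an infeasible move.

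The paper fills this gap with a tiny dynamic program per arc: with $r$ tour edges in the arc and $s\le k$ non-interfering edges to embed there (in fixed relative order), let $V(x,y)$ be the minimum cost of embedding the first $x$ objects into the first $y$ locations; then $V(x,y)=\min\{V(x,y-1),\,V(x-1,y-1)+c(x,y)\}$, computable in $\Oh(rs)=\Oh(r)$ time. Summed over all arcs this is still $\Oh(n)$ per complement placement, so your overall time bound survives unchanged. (A minor aside: you do not need to try two orientations for each $e_i$; by convention $v_{2i-1}$ precedes $v_{2i}$ on the tour, so the orientation is determined by the tour edge chosen.)
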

\begin{proof}
For computing the best \kmove for tour $T$, it is sufficient to compute for
every feasible signature $\pi$---for fixed $k$ there are only $\Oh(1)$
such signatures---the best \kmove for tour~$T$ with that particular signature.
This is done as follows.
We first determine a set $E_{\pi}$ of pairwise non-interfering edges according to the above lemma.
Then we enumerate and handle all possible cases for the locations of the $\lfloor2k/3\rfloor$ removed edges
not in $E_{\pi}$ along $T$.
This yields $\Oh(n^{\lfloor2k/3\rfloor})$ cases to handle, and every such case will be handled in $\Oh(n)$
time; note that this yields the claimed complexity.
In handling a case, the positions of the removed edges not in $E_{\pi}$ are frozen, while the edges in $E_{\pi}$
have to be embedded into $T$.
The cost of a \kmove with signature $\pi$ decomposes into two parts:
\begin{itemize}
\item
The first part consists of the total weight of all frozen edges (which is subtracted) and the total weight
of inserted edges between frozen edges (which is added).
\item
The second part consists of the individual contributions of the edges in $E_{\pi}$.
For an edge $e\in E_{\pi}$ and an edge $e'\in T$, the cost of embedding $e$ into $e'$ equals the weight of the
two inserted edges adjacent to $e$ minus the weight of $e'$.
As the edges in $E_{\pi}$ are pairwise non-interfering, their individual cost contributions do not interact
with each other.
\end{itemize}
As the cost of the first part is fixed in every considered case, our goal is to minimize the total cost
of the second part.
The frozen edges subdivide the tour $T$ into a number of tour pieces, and we have to find the cheapest way
of embedding the corresponding edges from $E_{\pi}$ into such a tour piece.
The following paragraph sketches a straightforward dynamic program for finding the optimal embedding for
each tour piece in time proportional to the length of the piece.
As the length of all tour pieces combined is $\Oh(n)$, every case is indeed handled in time $\Oh(n)$.

We are essentially dealing with the following optimization problem.
There are $r$ locations $L_1,\ldots,L_r$ (the edges along tour $T$ between two consecutive frozen edges) and
$s$ objects $O_1,\ldots,O_s$ (the edges in $E_{\pi}$ that should be embedded between the two considered frozen edges).
The objects are to be embedded into the locations, so that the location of object $O_i$ always precedes the
location of object $O_{i+1}$.
The cost of embedding object $O_i$ into location $L_j$ is denoted $c(i,j)$.
For~$1 \leq x \leq s$ and~$1 \leq y \leq r$, let~$V(x,y)$ denote the smallest possible cost incurred by embedding the first~$x$ objects~$O_1, \ldots, O_x$ into the first~$y$ locations~$L_1, \ldots, L_y$.
As $V(x,y)$ equals the minimum of $V(x,y-1)$ and $V(x-1,y-1)+c(x,y)$, all these values $V(x,y)$ can easily
be computed in $\Oh(rs)$ time.
In our situation, $r$ is the length of the considered tour piece and $s\le k$ is a constant that does not depend
on the input; hence the complexity is indeed proportional to the length of the considered tour piece.
\end{proof} 
\section{Faster 2-OPT}
In this section we show that it is possible to beat the quadratic barrier for \TwoOPT
in two important settings, namely when we want to
apply \Twomoves repeatedly, and in the Euclidean setting in the plane.

\paragraph{Repeated 2-OPT.}
In the repeated \TwoOPT problem, we apply \TwoOPT repeatedly (e.g. until no further improvements
are possible). One can considerably speed up the \TwoOPT computations
at each of the iterations, except the first one. The following theorem gives our improvement for the
\TwoOptOPT problem, where the goal is to find the best \Twomove
(rather than any \Twomove that improves the tour).

\begin{theorem} \label{theorem:repeated:twoopt}
After $\Oh(n^2)$ preprocessing and using $\Oh(n^2)$ storage we can repeatedly solve
the \TwoOptOPT problem in $\Oh(n\log n)$ time per iteration.
\end{theorem}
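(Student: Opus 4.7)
The plan is to maintain, across iterations, an auxiliary structure of size $\Oh(n^2)$ that encodes every $2$-move improvement for the current tour, together with a priority queue for maximum queries. In the initial preprocessing I would compute, for every ordered pair of positions $(i,j)$ in the tour, both the current ``parallel'' $2$-opt improvement and the ``cross'' improvement that would be obtained if the relative orientation of the two tour edges were flipped---exactly one of these is the valid $2$-move improvement at any moment. The cyclic tour itself is stored in a balanced binary tree with lazy-reversal flags (supporting $\Oh(\log n)$ segment reversal), and a global max-heap on the currently valid improvements supports finding the best $2$-move in $\Oh(\log n)$.

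Each iteration extracts the best $2$-move from the heap, applies it by reversing the indicated tour segment, and updates the structure. The bulk of the pairs remain unaffected: pairs whose two tour edges are both inside the reversed segment, or both outside, keep the same valid improvement because either both edges are flipped in direction or neither is, preserving their relative orientation (and the $2$-move improvement is symmetric in the two edges). Pairs with one edge inside and one outside the segment---of which there can be $\Theta(n m)$, where $m$ is the segment length---have their valid improvement toggled from the ``parallel'' to the ``cross'' alternative. Instead of rewriting each of these values in the heap, I would propagate a single lazy orientation-toggle bit along the reversed portion of the tour tree, so that the active improvement for any pair can be recovered from the two endpoints' lazy bits in $\Oh(\log n)$. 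Separately, the $\Oh(n)$ improvements that involve one of the two boundary tour edges destroyed or created by this iteration must be inserted or deleted in the heap, which costs $\Oh(n \log n)$ in total.

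The main obstacle will be making the heap operations consistent with the lazy bit-flips: a single flip conceptually changes the key of many heap entries at once, yet we cannot afford to touch them individually. The plan is to store each pair together with both candidate values and to have max-extractions lazily validate the returned key against the current lazy bits along the path to the two endpoints, reinserting the entry with its currently active value whenever a mismatch is detected. Bounding the amortized cost of these lazy validations by $\Oh(n\log n)$ per iteration---by charging each reinsertion to a later extraction that would have visited it anyway, and using the fact that an entry's active alternative flips only when its two endpoints are separated by the boundary of a reversed segment---is the delicate step that makes the overall argument go through.
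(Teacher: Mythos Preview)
Your decomposition of which pairs change is correct: a reversal flips the ``parallel'' versus ``cross'' alternative precisely for the $\Theta(nm)$ pairs straddling the boundary, and the $\Oh(n)$ pairs touching the two destroyed or created edges are easy. The gap is in the heap maintenance. A single iteration can flip the active alternative for $\Theta(n^2)$ pairs, and nothing in your scheme prevents all of those stale keys from sitting above the true maximum in the heap. Your lazy-validation loop would then have to pop and reinsert $\Theta(n^2)$ entries before it reaches a valid one, blowing the $\Oh(n\log n)$ budget in a single step. The charging sketch (``charge each reinsertion to a later extraction that would have visited it anyway'') does not rescue this: you perform only one \emph{valid} extraction per iteration, so over $T$ iterations you have only $T$ sinks to charge to, whereas the number of reinsertions can be $\Theta(n^2 T)$. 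There is no hidden amortization here because the edge weights, and hence the two candidate values per pair, are arbitrary.

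The paper avoids this by abandoning the global heap entirely. It keeps $n$ separate balanced trees, one per tour edge $e$, each storing the remaining tour edges in \emph{tour order} and augmented at every node with two values: the best $2$-move cost with $e$ over the subtree under the current local orientations, and the same under the reversed orientations. Because the tree order \emph{is} the tour order, reversing a tour segment becomes a split/toggle-root-bit/concatenate on each tree, and the two augmented fields simply swap roles along the affected path; no entry ever becomes ``stale.'' Each of the $n-2$ surviving trees is updated in $\Oh(\log n)$, and the two trees for the removed edges are rebuilt from scratch in $\Oh(n)$. The key idea you are missing is that the orientation flip must be absorbed \emph{structurally}, by trees whose shape mirrors the tour, rather than corrected lazily on top of an unstructured priority queue.
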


The speedup claimed in the theorem relies on a tour representation that supports efficient \Twomoves. To apply a \Twomove that removes two edges~$e$ and~$e'$ and replaces them by the appropriate diagonal connections, one effectively has to reverse the part of the tour between~$e$ and~$e'$, or the part between~$e'$ and~$e$. It can therefore take~$\Omega(n)$ time to apply a \Twomove to a tour represented as a sequence of vertices in an array. Chrobak~\etal~\cite{ChrobakSK99} give a speedup by storing the cities on the tour in an ordered balanced binary search tree. Each node in the tree stores a bit indicating whether the tour order is given by an in-order traversal of the subtree rooted there, or by the \emph{reverse} of the in-order traversal. This allows a \Twomove to be applied in~$\Oh(\log n)$ time by manipulating reversal bits.

Our approach for repeated \TwoOptOPT is based on a similar data structure that represents tours in balanced search trees. However, instead of having only one tree that stores the current tour, we have~$n$ trees; one for each edge~$e_1, \ldots, e_n$ in the current tour. A query in the tree~$\tree(e_i)$ corresponding to edge~$e_i$ can be used to determine which edge~$e_j$ yields the most profitable \Twomove together with~$e_i$. After initializing these~$n$ trees, which takes~$\Oh(n^2)$ time, an iteration of \TwoOptOPT can be performed as follows. For each~$e_i$ on the current tour, we query in tree~$\tree(e_i)$ to find the best \Twomove that removes~$e_i$ and some unknown edge~$e_j$ in~$\Oh(\log n)$ time. In this way we find the best overall \Twomove which removes, say, edges~$e_i$ and~$e_j$. We can update all trees~$\tree(e_\ell)$ for~$\ell \neq i,j$ by deleting $e_i$ and $e_j$, and inserting the appropriate replacement edges. Using the reversal bits this can be done in $O(\log n)$ time. Trees~$\tree(e_i)$ and~$\tree(e_j)$ are destroyed; we build two new trees from scratch for the two new edges~$e_{i'}$ and~$e_{j'}$ that enter the tour. This gives~$\Oh(n \log n)$ time per iteration.

It is likely that these techniques can be extended to speed up repeated \ThreeOPT as well. As the technical details become substantially more cumbersome, we do not pursue this direction.

\paragraph{The planar case.}
For points in the plane (and under the Euclidean metric) we can speed up \TwoOPT computations by using suitable geometric data structures for semi-algebraic range searching, as shown in Appendix~\ref{subse:appendix-planar-case}.
(Note that we do not consider the repeated version of the problem, but the single-shot version.)
A similar approach can be used to speed up 3-OPT in the Euclidean setting in the plane. This leads to the following theorem.
\begin{theorem} \label{thm:twodet:threedet:planar}
For any fixed $\eps>0$,
\TwoOptDET in the plane can be solved in $\Oh(n^{8/5+\eps})$ time,
and \ThreeOptDET in the plane can be solved in $\Oh(n^{80/31+\eps})$ expected time.
\end{theorem}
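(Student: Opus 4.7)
The plan is to recast \TwoOptDET and \ThreeOptDET in the plane as batched semi-algebraic range searching problems and apply the multi-level partition-tree machinery of Agarwal, Matou\v{s}ek and Sharir, tuned for $n$ queries via the standard space--time tradeoff.

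For \TwoOptDET, I would orient the tour and encode each tour edge $e=\{a,b\}$ as a point in $\Reals^{4}$ given by its two endpoint coordinates listed in tour order. The unique 2-reconnection of edges $e=\{a,b\}$ and $e'=\{c,d\}$ that yields a valid tour strictly improves the weight iff $|ab|+|cd|>|ac|+|bd|$. After clearing the four square roots this is a Boolean combination of constant-degree polynomial inequalities in the eight endpoint coordinates: a semi-algebraic predicate of constant description complexity. Viewing the first edge as a query and the second as a data point, the task becomes a batched semi-algebraic range emptiness problem with $n$ points and $n$ ranges in $\Reals^{4}$. With space $s\in[n,n^{d}]$, a query in $\Reals^{d}$ takes $\Oh(n^{1+\eps}/s^{1/d})$; balancing this against $n$ queries gives total time $\Oh(n^{2d/(d+1)+\eps})$, which for $d=4$ is exactly $\Oh(n^{8/5+\eps})$.

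For \ThreeOptDET, the same high-level strategy applies: express each of the $\Oh(1)$ reconnection patterns of a 3-move as a constant-complexity semi-algebraic predicate on three tour edges, and reduce finding an improving 3-move to a batched semi-algebraic search problem. A purely single-level approach (either searching triples in $\Reals^{12}$, or fixing one edge in an outer loop and pair-searching over $\Reals^{4}$ for the remaining two) does not reach the target exponent $80/31$. To achieve $80/31$, I would use a \emph{multi-level} semi-algebraic partition tree in which each level is responsible for one block of coordinates of one of the free edges, combined with random sampling for building balanced partitions---this is the source of the ``expected'' qualifier. Tuning the per-level dimensions and per-level space so that the predicate decomposes cleanly and the costs balance is what produces the fractional exponent $80/31$.

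The main obstacle is the 3-OPT bookkeeping: verifying that the improvement predicate actually decomposes cleanly across the chosen levels, tracking how the constant description complexity of each sub-predicate propagates through the multi-level construction, and solving the optimization problem over the tradeoff parameters to confirm that the optimum exponent is precisely $80/31$ rather than a weaker fraction. The \TwoOptDET case is essentially a black-box application of $\Reals^{4}$ semi-algebraic range searching once the predicate is written down, up to the minor point that only one of the two possible reconnections of a pair of tour edges yields a valid tour, which we handle by consistently orienting each edge along the tour.
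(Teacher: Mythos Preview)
Your \TwoOptDET argument is correct and matches the paper: encode tour edges as points in $\Reals^4$, use semi-algebraic range searching with the standard tradeoff, and balance $n$ queries against $n$ points to get $\Oh(n^{8/5+\eps})$.

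For \ThreeOptDET, however, there is a genuine gap. You dismiss ``fixing one edge in an outer loop'' and instead propose an unspecified multi-level scheme whose levels ``each handle one block of coordinates,'' but you never say what the levels are, what predicate each level filters on, or how the per-level tradeoff parameters are set; without that, there is no way to verify the exponent $80/31$, and generic multi-level constructions in $\Reals^{12}$ do not obviously produce it. The paper's route is much more concrete and is, in fact, a version of the approach you dismissed: it stores all $\binom{n}{2}$ ordered \emph{pairs} of tour edges as points in $\Reals^{8}$ and queries once per single edge $e_k$. The crucial observation is an asymmetry you do not exploit: although the data points live in $\Reals^{8}$, the query edge $e_k$ is described by only four reals, so in the dualized structure each pair becomes a constant-degree surface in $\Reals^{4}$, giving $\Oh(m^{4+\eps})$ (not $\Oh(m^{8+\eps})$) preprocessing per bucket. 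Combining the $\Reals^{8}$ partition tree on $n'=n^2$ points (query cost $\Oh((n'/m)^{7/8+\eps})$) with these $\Reals^{4}$ secondary structures (total preprocessing $\Oh(n' m^{3+\eps})$), adding a one-dimensional range restriction for the ordering constraint $j<k$, and balancing against $n$ queries yields
\[
\Oh\bigl(n^{2} m^{3+\eps} + n\cdot (n^{2}/m)^{7/8+\eps}\bigr),
\]
minimized at $m=n^{6/31}$, which is exactly $\Oh(n^{80/31+\eps})$. The ``expected'' qualifier comes from the randomized preprocessing of the $\Reals^{8}$ semi-algebraic range-searching structure, not from a bespoke multi-level sampling step. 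Your proposal is missing both the pairs-as-points/edge-as-query asymmetry and the fact that the dual lives in $\Reals^{4}$; without these, the arithmetic does not lead to $80/31$.
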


\section{Conclusion}

Revisiting the worst-case complexity of \kOPT and pyramidal \TSP led to a number of new results on these classic problems. Some, such as the equivalence between \ThreeOPT and \APSP with respect to having truly subcubic algorithms, rely on very recent work. Other results, such as the near-linear time algorithm for finding bitonic tours, and the \kOPT algorithm that beats the trivial~$\Oh(n^k)$ upper bound, are obtained using classic techniques. In this respect, it is surprising that these results were not found earlier. These examples show that the availability of new lower bound machinery can inspire new algorithms.

Our findings suggest several directions for further research, both theoretical and applied. An interesting open problem regarding \kOPTDetection is whether the problem is fixed-parameter tractable when improving a given tour in an edge-weighted planar graph. This question was also asked by Marx~\cite{Marx08a} and Guo et al.~\cite{GuoHNS13}. Similarly, it is open whether the problem is fixed-parameter tractable when improving a given tour among points in the Euclidean plane. It would be interesting to settle the exact complexity of \kOPT in general weighted graphs. Is~$\Theta(n^{\lfloor \frac{2k}{3} \rfloor + 1})$ the optimal running time for \kOPTDetection? When all weights lie in the range~$[-M, \ldots, M]$, one can detect a negative triangle in an edge-weighted graph in time~$\Oh(M \cdot n^{\omega})$ using fast matrix multiplication~\cite{AlonGM97,RodittyW11,Yuval76}. By our reduction, this gives an algorithm for \ThreeOPTImprovement with weights~$[-M, \ldots, M]$ in time~$\Oh(M \cdot n^{\omega})$. Can similar speedups be obtained for \kOPT for larger~$k$?

Given the great industrial interest in \TSP, establishing the practical applicability of these theoretical results is an important follow-up step. Several of our results rely on data structures that are efficient in theory, but which are currently impractical. These include the additively-weighted Voronoi diagram used for pyramidal tours on points in the plane, and the semi-algebraic range searching data structures used to speed up \TwoOPTDetection. In contrast, the~$\Oh(n^{\floor{2k/3}+1})$ algorithm for finding the best \kmove improvement is self-contained, easy to implement, and may have practical potential. 


%

\paragraph{Acknowledgments.} We are grateful to Hans L.~Bodlaender, Karl Bringmann, and Jesper Nederlof for insightful discussions, an anonymous referee for the observation in Footnote~\ref{fn:onedimensionaldp}, and Christian Knauer for the observation in Footnote~\ref{fn:bulkupdates}. 

\bibliographystyle{abbrvurl}
\bibliography{Paper}

\newpage

\appendix
\section{Data structure for faster pyramidal TSP}
\label{se:logarithmic-method}
In this section we describe the data structure used in Theorem~\ref{th:regular}.
With a slight abuse of notation, we will denote the
set of points stored in the data structure by~$P$ and let $n$ denote the number
of points in the current set $P$.
\medskip

Answering nearest-neighbor queries for the weighted point set $P$ can be done by performing point location
in the \emph{additively weighted Voronoi diagram} of $P$. The additively weighted
Voronoi diagram of~$P$, denoted by $\awvd(P)$, is the subdivision of
the plane into regions such that the region of a point $p_k\in P$ consists of
those points $q\in \Reals^2$ for which $p_k$ is the nearest neighbor of~$q$ if we consider
additively weighted distances.
\footnote{Sometimes additive weighted Voronoi diagrams
are defined based on a distance function that subtracts a positive weight from the Euclidean distance.
It is easy to see that all results carry over the case where we add weights, because
we can transform the latter case to the former by subtracting the same sufficiently large value
from all weights to make them negative.}
The diagram $\awvd(P)$ consists of at most $n$
regions---at most, because some points may define an empty region---and the boundaries between the regions
consist of hyperbolic arcs. The total complexity of the diagram is $\Oh(n)$ and
it can be computed in $\Oh(n\log n)$ time~\cite{Fortune87}.
Moreover, point location in a planar subdivision of complexity $\Oh(n)$
can be done in $\Oh(\log n)$ time with a data structure that uses $\Oh(n)$ storage
and $\Oh(n\log n)$ preprocessing~\cite{Snoeyink04PL}. Thus nearest-neighbor queries in $P$
under the additively weighted distance function can be done in $\Oh(\log n)$
after $\Oh(n\log n)$ preprocessing.

We first briefly review the logarithmic method. It makes a static data structure~$\ds$
semi-dynamic, as follows. Let~$n$ be the number of objects---weighted points in our case---in
the set $S$ currently stored in the data structure, and let $a_t \in \{0,1\}$ be such that
$n = \sum_{t=0}^{\floor{\log n}} a_t 2^t$.
The logarithmic method maintains, for each $t$ with $a_t =1$, a static data structure~$\ds^{(t)}$
on a subset $S^{(t)}\subseteq $ of size~$2^t$, where
the subsets $S^{(t)}$ form a partition of~$S$.
A query on the set $S$ can now be answered by querying each of the data structures~$\ds^{(t)}$,
and computing the final answer to the query from the $\Oh(\log n)$ sub-answers.
(The use of the logarithmic method thus requires the query problem to be such
that the answer to a query on the whole set~$S$ can be easily computed from the
answers on the subsets~$S^{(t)}$.)
To insert a new object~$o$, one first finds the smallest $t^*$ such that $a_{t^*}=0$,
where the $a_t$ are defined with respect to the size of $S$ before the insertion.
Then all structures $\ds^{(0)},\ldots,\ds^{(t^*-1)}$ are destroyed, and a new structure
$\ds^{(t^*)}$ on the set~$S^{(0)} \cup \cdots \cup S^{(t^*-1)} \cup \{o\}$ is constructed.
The amortized insertion time is $\Oh(\sum_{t=0}^{\floor{\log n}} B(2^t)/2^t)$, where
$B(2^t)$ denotes the time needed to construct a data structure on a set of size~$2^t$.

In our case each $\ds^{(t)}$ is a point-location structure for the additively weighted
Voronoi diagram $\awvd(P^{(t)})$ on a subset $P^{(t)}\subseteq P$.
Note that we can easily find the overall nearest neighbor of a query point by
taking the nearest among the $\Oh(\log n)$ candidates found for the subsets~$P^{(t)}$.
Thus our structure has $\Oh(\log^2 n)$ query time. Since a substructure $\ds^{(t)}$ can be built
in $\Oh(|P^{(t)}|\log |P^{(t)}|)$ time, the amortized time for an insertion is
$\Oh(\log^2 n)$.

It remains to deal with bulk updates, where we want to increase the
weight of each of the points in our data structure by a given value~$\Delta$.
With the logarithmic method this is quite easy. We simply store a
\emph{correction term} $\Delta_t$ for each~$\ds^{(t)}$, which indicates
that the weight of each point in $P^{(t)}$ should be increased
by~$\Delta_t$. A bulk update with value~$\Delta$ can then be performed in $\Oh(\log n)$ time
by adding~$\Delta$ to each of the correction terms~$\Delta_t$.
Note that we can still answer queries correctly. Indeed, $\awvd(P^{(t)})$
does not change when we add the same value~$\Delta$ to all weights in~$P^{(t)}$.
Hence, we just have to make sure that when we compare the
candidates found for the subsets~$P^{(t)}$, we increase their weighted
distances by the relevant correction term. Thus a query still takes
$\Oh(\log^2 n)$ time. Insertions can still be done in $\Oh(\log^2 n)$ amortized time
as well; we only need to make sure that when we collect the points in the substructures
$\ds^{(0)},\ldots,\ds^{(t^*-1)}$ to be destroyed, we add the correct terms to their weights
before we construct the new structure~$\ds^{(t)}$. Hence, we obtain the claimed
query time, bulk-update time, and insertion time, thus finishing the proof of Theorem~\ref{th:regular}.


\section{Planar bottleneck pyramidal TSP}
\label{se:bottleneck-decision}
Below we consider the bottleneck version of the pyramidal \tsp problem.
The goal is to find a pyramidal tour for an ordered set $P:=\{p_1,\ldots,p_n\}$
of points in the plane such that the length of the bottleneck edge (that is, the
longest edge) is minimized. We start by giving an $\Oh(n\log n)$
algorithm for decision version of the problem,
where we are given a value~$B$ and the question is whether there is
a pyramidal tour whose bottleneck edge has length at most~$B$.
Next we show that this is optimal by presenting an $\Omega(n\log n)$ lower bound
in the algebraic computation-tree model. Finally, we show how to solve the
optimization version of the problem.

\subsection{An algorithm for the decision problem}
\label{subse:bottleneck-decision-alg}
The decision problem can be solved by dynamic programming, using a 2-dimensional table $A[1..n,1..n]$,
where (for $1\leq j<i<n$) we have $A[i,j]=\true$ if there is an $(i,j)$-partial tour of cost at most~$B$
and $A[i,j]=\false$ otherwise. The dynamic program can compute the entries $A[i,j]$
row by row with the recursive formula
\[
A[i+1,j] \ = \ \left\{ \begin{array}{ll}
                    A[i,j] \wedge (|p_i p_{i+1}| \leq B) & \mbox{if $1\leq j < i$} \\[2mm]
                    \bigvee_{1\leq k<i} \left(A[i,k] \wedge |p_kp_{i+1}| \leq B \right) & \mbox{if $j=i$}
                    \end{array}
             \right.
\]
where $A[2,1]=\true$ if $|p_1 p_2| \leq B$.
As before, we speed up the computation by using the relation between consecutive rows in the
table---for $j<i$ the entries $A[i+1,j]$ are all
equal to $A[i,j]$ when $|p_i p_{i+1}| \leq B$, and they are all $\false$ otherwise---and
by using appropriate geometric data structures.
\medskip

Instead of computing the entries of the matrix~$A$, we will maintain a list
$\mylist$ that contains, for the current value of~$i$, all points $p_j$ with $1\leq j<i$
such that there is an $(i,j)$-partial tour of cost at most~$B$. In other words, $\mylist$
contains all $p_j$ such that $A[i,j]=\true$.
We initialize $\mylist$ as an empty list, and then go over 
the points $p_2,\ldots,p_n$ in order. To handle $p_{i+1}$ we check if
$|p_{i}p_{i+1}|\leq B$, and we check if $\mylist$ currently contains a point
$p_k$ such that $|p_kp_{i+1}| \leq B$. If both conditions are satisfied
we add $p_i$ to $\mylist$, if only the first condition is satisfied
we keep $\mylist$ as it is, if only the second condition is satisfied
we first empty $\mylist$ and then add $p_i$ to it, and if neither condition
is satisfied then we empty $\mylist$. After having handled $p_{n-1}$
we check if $|p_{n-1} p_n|\leq B$ and if $\mylist$ contains a point
$p_k$ such that $|p_kp_{n}| \leq B$. If both conditions are satisfied
then a pyramidal tour whose bottleneck length is at most $B$ exists,
otherwise it does not.

Since a point is added to $\mylist$ only once, the total number of updates
to $\mylist$ is $\Oh(n)$.
Checking the first condition obviously takes $\Oh(1)$ time, so it remains to describe
how to check the second condition efficiently. To this end we
maintain a data structure on the points in $\mylist$ that supports
two operations:
\begin{itemize}
\item \emph{query} the structure with a point~$p_{i+1}$ to decide
      if it stores a point $p_k$ with $|p_k p_{i+1}|\leq B$;
\item \emph{insert} a new point $p_{i}$ into the structure.
\end{itemize}
These operations can be performed by a semi-dynamic data structure for
nearest-neighbor queries, similar to the one described earlier, which
has $\Oh(\log^2 n)$ query time and $\Oh(\log^2 n)$ amortized insertion time.
Below we describe a faster data structure. The data structure is based on the following
observation. Let $D_k$ be the disk of radius~$B$ centered at the point~$p_k$,
and let
\[
\D := \{ D_k : \mbox{ $p_k$ is a point in $\mylist$}\}.
\]
Then $\mylist$ contains a point $p_k$ with $|p_k q| \leq B$
if and only if $q\in\union(\D)$, where $\union(\D)$ denotes
the union of the disks in $\D$. Theorem~\ref{th:pl-in-disk-union}
below states that point-location queries in the union of
a set of congruent disks can be done in $\Oh(\log n)$ time and
with $\Oh(\log n)$ amortized update time, leading to the following result.
\begin{untheorem}
Let $P$ be an ordered set of $n$ points in the plane, and let $B>0$ be a given parameter.
Then we can decide in $\Oh(n\log n)$ time and using $\Oh(n)$ storage
if $P$ admits a pyramidal tour whose longest edge has length at most~$B$.
\end{untheorem}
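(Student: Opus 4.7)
The plan is to recast the dynamic program from Theorem~\ref{th:regular} in a boolean form adapted to the bottleneck objective, and to avoid storing an entire DP row by exploiting a structural observation. Define $A[i,j] = \true$ iff there exists an $(i,j)$-partial tour whose longest edge has length at most $B$. The recurrence becomes $A[i+1,j] = A[i,j] \wedge (|p_i p_{i+1}| \leq B)$ for $j < i$, and $A[i+1,i] = \bigvee_{1 \leq k < i}\bigl(A[i,k] \wedge |p_k p_{i+1}| \leq B\bigr)$. The crucial observation is that the update for all $j<i$ is \emph{uniform}: every $\true$ entry either survives together, or they all collapse to $\false$. Hence the current row is fully described by the list $\mylist = \{j : A[i,j] = \true\}$, and an unfavorable distance $|p_i p_{i+1}|>B$ simply wipes this list.

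Based on this, I would sweep over $p_2, \ldots, p_n$ in order, maintaining $\mylist$. At step $i+1$ I need to answer two questions: whether $|p_i p_{i+1}| \leq B$ (which determines if carried entries survive), and whether some $p_k \in \mylist$ satisfies $|p_k p_{i+1}| \leq B$ (which gives the value of $A[i+1,i]$). According to the two answers I update $\mylist$: add $p_i$ if the carried entries survived, first wipe and then insert $p_i$ if only the second condition holds, keep $\mylist$ unchanged if only the first condition holds, and wipe if neither holds. After processing $p_{n-1}$ the decision problem is a \yes-instance iff $|p_{n-1} p_n| \leq B$ and some $p_k \in \mylist$ satisfies $|p_k p_n| \leq B$, which is one further query of the same kind.

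The distance-$B$ query is equivalent to point location in the union of the congruent disks $\{D_k : k \in \mylist\}$ of radius $B$ centered at the points $p_k \in \mylist$. Theorem~\ref{th:pl-in-disk-union} provides a semi-dynamic structure for exactly this task with $\Oh(n)$ storage, $\Oh(\log n)$ query time, and $\Oh(\log n)$ amortized insertion time. Wipes are handled by discarding the current instance and starting a fresh empty one; since each point is inserted at most once per incarnation and incarnations only grow between wipes, the total number of insertions over the whole sweep is $\Oh(n)$, so the amortized bound gives $\Oh(n\log n)$ overall for insertions. Combined with $\Oh(n)$ queries and a linear number of constant-time distance tests, the total running time is $\Oh(n\log n)$ in $\Oh(n)$ space.

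The main obstacle is really Theorem~\ref{th:pl-in-disk-union} itself, i.e., maintaining a dynamic union of congruent disks under point-location queries within the stated bounds; its proof, based on a grid decomposition and the $x$-monotonicity of the boundary of $\union(\D(C))$ above the top edge of each grid cell $C$, is the technical crux. Given that theorem, the reduction above is a direct boolean adaptation of the pyramidal-TSP dynamic program, and no additional geometric work is needed.
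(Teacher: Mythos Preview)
Your proposal is correct and follows essentially the same approach as the paper: the boolean recast of the pyramidal DP, the observation that the update for all $j<i$ is uniform so only the set $\mylist$ of \true indices needs to be maintained, the four-case update rule, and the reduction of the query ``does some $p_k\in\mylist$ satisfy $|p_kp_{i+1}|\le B$?'' to point location in a union of congruent disks via Theorem~\ref{th:pl-in-disk-union}. Your handling of wipes by discarding the structure (so that only insertions are needed and each point is inserted at most once) is exactly the mechanism the paper relies on, and your identification of Theorem~\ref{th:pl-in-disk-union} as the technical crux is accurate.
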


\paragraph{A semi-dynamic point-location data structure for the union of congruent disks.}
Let $\D$ be a set of congruent disks in the plane. We wish to maintain a data structure
on $\D$ that allows us to decide if a query point $q$ lies inside $\union(\D)$. The
data structure should also allow insertions into the set~$\D$. With a slight abuse
of notation, we will use $n$ to denote the number of disks in the (current) set~$\D$.
We will assume we have the floor function available; it is not hard to avoid the floor
function, but using it simplifies the presentation.
It will also be convenient to assume that the disks in $\D$ all have radius~$\sqrt{2}$,
which can be ensured by appropriate scaling.
\medskip

Consider the integer grid $G$. Note that the diameter of the grid cells is $\sqrt{2}$,
so any cell containing the center of some disk $D_i$ is completely covered by~$D_i$.
We say that a grid cell\footnote{To assign each point to a unique active cell,
we assume the cells in $G$ are closed on the left and bottom, and open on
the right and top. Thus the cells in $G$ are of the form $[x,x+1)\times[y,y+1)$
for integers $x,y$.}~$C$ is \emph{active} if it contains
the center of a disk $D_i \in \D$,
and we say that a vertical strip $[x,x+1)\times(-\infty,\infty)$ is
active if it contains an active grid cell.
Our data structure for point location in $\union(\D)$ maintains the active strips in a
balanced search tree on their $x$-order, and for each active strip~$\Sigma$ it maintains the active
cells within~$\Sigma$ in a balanced search tree on their $y$-order. (These search trees could also be
replaced by a hash table.) For each active cell~$C$ we maintain four partial unions,
as explained next.

Let $\D(C)\subseteq\D$ be the set of disks whose center lies in~$C$.
Let $\ell_{\mytop}(C)$, $\ell_{\mybot}(C)$, $\ell_{\myleft}(C)$, and $\ell_{\myright}(C)$
denote the lines containing, respectively, the top, bottom, left, and right edge of~$C$.
Finally, define $U_{\mytop}(C)$, $U_{\mybot}(C)$, $U_{\myleft}(C)$, and $U_{\myright}(C)$
to be the parts of $\union(\D(C))$ lying, respectively, above $\ell_{\mytop}(C)$,
below $\ell_{\mybot}(C)$, to the left of $\ell_{\myleft}(C)$, and to the
right of~$\ell_{\myright}(C)$. Next we explain how we store and maintain the
partial union $U_{\mytop}(C)$; the other three partial unions are stored
and maintained in a similar manner.
\medskip

Let $p_i$ denote the center of the disk $D_i\in \D(C)$. Because the centers~$p_i$ all lie
inside $C$, they all lie below the line ~$\ell_{\mytop}(C)$. Hence, the partial union
$U_{\mytop}(C)$ is $x$-monotone. Furthermore, each component of $U_{\mytop}(C)$
is bounded from below by a portion of the line~$\ell_{\mytop}(C)$ and from above
by circular arcs that are portions of the boundaries of the disks $D_i\in \D(C)$.
The key to efficiently maintaining $U_{\mytop}(C)$ is the following lemma.
\begin{lemma}\label{le:consistent-order}
Each disk $D_i\in \D(C)$ contributes at most one arc to $\bd U_{\mytop}(C)$. Moreover,
the arc contributed by a disk $D_i$ lies to the left of the arc contributed by
a disk $D_j$ if and only if $p_i$ lies the left of~$p_j$.
\end{lemma}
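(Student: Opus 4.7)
I would set coordinates so that $C = [0,1)\times[0,1)$ and $\ell_{\mytop}(C)=\{y=1\}$; every disk $D_i\in\D(C)$ then has center $p_i=(x_i,y_i)$ with $y_i<1$, and since the disks have radius $\sqrt{2}$, the circle $\bd D_i$ crosses $\ell_{\mytop}(C)$ transversally in two points. The arc $\alpha_i := \bd D_i\cap\{y\ge 1\}$ is consequently a single $x$-monotone circular arc with endpoints $(x_i\pm s_i,1)$, where $s_i := \sqrt{2-(1-y_i)^2}$. The main preliminary I would establish is that for any two disks $D_i,D_j\in\D(C)$ with $x_i<x_j$, the circles $\bd D_i$ and $\bd D_j$ meet in at most one point strictly above $\ell_{\mytop}(C)$: writing the two intersections as $\tfrac12(p_i+p_j) \pm \sqrt{2-d^2/4}\,\hat n$ with $d:=|p_ip_j|$ and $\hat n$ a unit vector perpendicular to $p_ip_j$, a direct computation shows that the $y$-coordinate of the ``lower'' intersection equals $\tfrac12(y_i+y_j) - \sqrt{2-d^2/4}\,(x_j-x_i)/d$, which is strictly less than $1$ since $y_i,y_j<1$ and $x_j>x_i$.

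For the first half of the lemma I want to show that the contribution $\gamma_i := \alpha_i \setminus \bigcup_{j\neq i}\myint(D_j)$ of $D_i$ to $\bd U_{\mytop}(C)$ is a single (possibly empty) subarc of $\alpha_i$. Fix $j$ with $x_j>x_i$; since $\bd D_j$ meets $\alpha_i$ in at most one point, $\alpha_i\cap D_j$ is either empty, all of $\alpha_i$, or exactly one of the two subarcs produced by splitting $\alpha_i$ at that point. To pin down which subarc, I would invoke the algebraic identity
\[
\dist^2\bigl((x_i+s_i,1),p_j\bigr) - \dist^2\bigl((x_i-s_i,1),p_j\bigr) \;=\; 4 s_i (x_i - x_j) \;<\; 0,
\]
which says the right endpoint of $\alpha_i$ is strictly closer to $p_j$ than the left endpoint. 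Hence whenever the left endpoint lies in $D_j$ so does the right endpoint, ruling out the case where $D_j$ cuts off only a left-prefix of $\alpha_i$. So $\alpha_i\cap D_j$ is empty, all of $\alpha_i$, or a right-suffix of $\alpha_i$; symmetrically, for $j$ with $x_j<x_i$ it is empty, all of $\alpha_i$, or a left-prefix. Consequently $\gamma_i$ is obtained from $\alpha_i$ by deleting a left-prefix and a right-suffix, which leaves a single (possibly empty) subarc.

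For the ordering statement, fix $i,j$ with $x_i<x_j$ and assume both $\gamma_i,\gamma_j$ are nonempty. In the main subcase, $\bd D_i$ and $\bd D_j$ have an upper intersection at some $x$-coordinate $x^*$; by the previous paragraph $D_j$ removes a right-suffix of $\alpha_i$ beginning at $x^*$ and $D_i$ removes a left-prefix of $\alpha_j$ ending at $x^*$, so $\gamma_i$ lies at $x$-coordinates $\le x^*$ while $\gamma_j$ lies at $x$-coordinates $\ge x^*$, yielding the desired left-to-right order. The remaining subcase, where $\bd D_i$ and $\bd D_j$ do not meet above $\ell_{\mytop}(C)$, is the main obstacle; I would argue it cannot occur under the nonemptiness assumption by showing that ``no upper intersection above $\ell_{\mytop}(C)$'' is equivalent to the bisector of $p_ip_j$ meeting $\ell_{\mytop}(C)$ at a point outside $D_i\cup D_j$, which forces both $\alpha_i$ and $\alpha_j$ to lie on the same side of the bisector above $\ell_{\mytop}(C)$ and hence (depending on the sign of $y_j-y_i$) forces either $\alpha_i\subseteq D_j$ or $\alpha_j\subseteq D_i$, a contradiction. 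Standard degeneracies (tangencies, coincident $x$-coordinates) are handled by symbolic perturbation, and centers on $\ell_{\mytop}(C)$ are excluded by the half-open convention for $C$.
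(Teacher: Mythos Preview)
Your proof is correct and takes a genuinely different, more computational route than the paper's geometric argument. The paper argues Part~1 by a curvature contradiction: if some $D_j$ covered a middle piece of $\gamma_i$ (the paper's name for your $\alpha_i$) without containing an endpoint, then $\bd D_j$ would cross $\gamma_i$ twice above $\ell_{\mytop}(C)$, which the paper rules out by observing that two equal-radius upper arcs cannot cross twice when both centers are below $\ell_{\mytop}(C)$. You instead compute the $y$-coordinate of the lower circle intersection explicitly and then go further, using the identity $4s_i(x_i-x_j)$ to pin down \emph{which} endpoint gets covered; this directional information (right-suffix when $x_j>x_i$, left-prefix when $x_j<x_i$) is something the paper does not isolate but which pays off in your Part~2. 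For Part~2 the paper argues by contradiction via a triangular region: assuming $p_i$ is left of and below $p_j$, if the contributed arc of $D_i$ were right of that of $D_j$, it would lie in a triangle bounded by $\ell_{\mytop}(C)$, the perpendicular bisector, and the vertical line through the right endpoint of $\gamma_j$, and that triangle is shown to lie entirely in $D_j$. Your argument instead splits on whether the upper circle intersection exists: when it does, your suffix/prefix analysis immediately gives the order around the intersection abscissa $x^*$; when it does not, you argue one full upper arc must be swallowed by the other disk.

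One remark on the ``no upper intersection'' subcase: your equivalence with the bisector hitting $\ell_{\mytop}(C)$ outside $D_i\cup D_j$ is right, and the conclusion that one of $\alpha_i\subseteq D_j$ or $\alpha_j\subseteq D_i$ holds is correct, but the intermediate ``same side of the bisector'' claim is the thinnest step in your sketch. A clean way to close it, consistent with your computational style, is to note that the $x$-ranges $[x_i-s_i,x_i+s_i]$ and $[x_j-s_j,x_j+s_j]$ always overlap (both contain $[0,1]$ since $s_i,s_j\ge 1$), pick any common abscissa $x^*$, and observe that the lower of the two points $(x^*,h_i)\in\alpha_i$, $(x^*,h_j)\in\alpha_j$ lies in the other disk (moving down from a boundary point toward a center below $y=1$ decreases the distance). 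Since you have already shown each $\alpha$-arc meets the other disk in ``nothing or everything'' in this subcase, this forces the containment you want. With that patch your argument is complete and arguably more self-contained than the paper's picture-assisted triangle argument.
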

\begin{proof}
Define $\gamma_i$ to be the part of $D_i$'s boundary
above the line~$\ell_{\mytop}(C)$. Any other disk $D_j\in \D(C)$ that covers a part of $\gamma_i$ must
contain an endpoint of~$\gamma_i$. Indeed, if $\gamma_j$ would intersect $\gamma_i$ twice
above~$\ell_{\mytop}(C)$ then, since the centers of the disks $D_i$ and $D_j$ lie
below~$\ell_{\mytop}(C)$, the curvature of $\gamma_j$ would be larger than the curvature
of $\gamma_i$, contradicting the fact that all disks have equal radius. Hence, each
disk $D_i$ can contribute at most one arc to $\bd U_{\mytop}(C)$, as claimed.
\begin{figure}
\begin{center}
\includegraphics{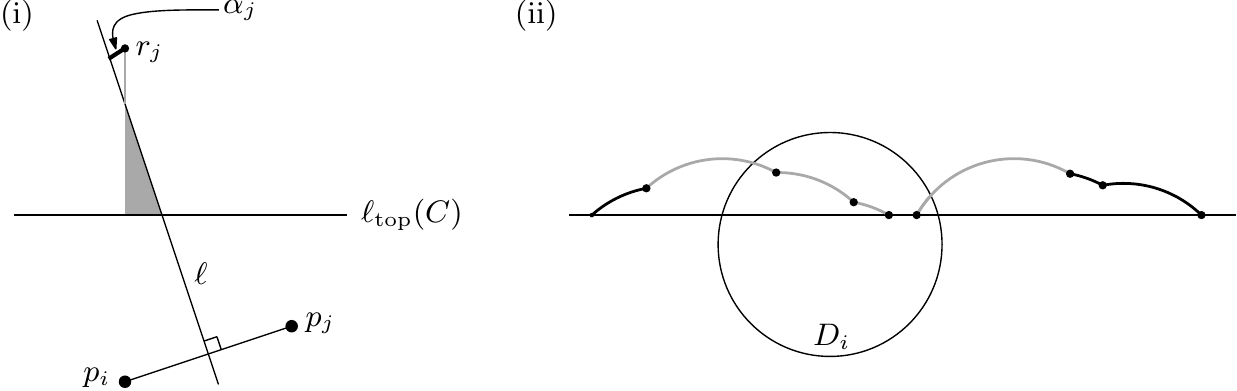}
\end{center}
\caption{(i) Illustration for the proof of Lemma~\ref{le:consistent-order}.
        (ii) The addition of $D_i$ causes several arcs to disappear from
             $\bd U_{\mytop}(C)$ and two arcs to be shortened. All these arcs
             (indicated in gray) are consecutive in the left-to-right order.}
\label{fi:consistent-order}
\end{figure}

Now consider an arc $\alpha_i\subseteq \gamma_i$ contributed by $D_i$ and an
arc~$\alpha_j\subseteq \gamma_j$ contributed by $D_j$. Assume without loss of
generality that $p_i$ lies to the left of~$p_j$. Furthermore, assume $p_i$
lies below~$p_j$, as in Fig.~\ref{fi:consistent-order}(i); a similar argument applies
when $p_i$ lies above~$p_j$.
Now suppose for a contradiction that $\alpha_i$ lies to the right of~$\alpha_j$.
Let $\ell$ be the perpendicular bisector of the segment~$p_i p_j$. Because the disks
have equal radius, $\alpha_i$ must lie to the left of $\ell$ and $\alpha_j$ must lie to the right of~$\ell$.
Hence, if $\alpha_i$ to lies to the right of $\alpha_j$, then it must lie in the
triangular region bounded by $\ell_{\mytop}(C)$, and $\ell$, and the vertical line
through~$r_j$. (In Fig.~\ref{fi:consistent-order}(i) this triangle is shown shaded.)
But this region is completely contained in $D_j$ since any point in it is closer to $p_j$
than $r_j$ is. Hence, we have a contradiction with the fact that
$\alpha_i$ is an arc of $\bd U_{\mytop}(C)$.
\end{proof}
Lemma~\ref{le:consistent-order} gives us an easy way to store and maintain~$U_{\mytop}(C)$.
We simply store the arcs comprising $\bd U_{\mytop}(C)$ in $x$-order in a balanced
search tree~$\tree_{\mytop}(C)$. This takes $\Oh(n)$ storage and allows us to decide
in $\Oh(\log n)$ time for a query point~$q$ if $q\in U_{\mytop}(C)$.

Now suppose we want to insert a new disk $D_i$ into $\D(C)$.
As observed, $\bd D_i$ contributes at most one new arc to $\bd U_{\mytop}(C)$.
The addition of this new arc means we have to remove some existing arcs.
More precisely, if $\bd D_i$ contributes a new arc, then we have to shorten
two existing arcs and possibly remove
one or more other arcs; see Fig.~\ref{fi:consistent-order}(ii).
Once we know which existing arcs are affected, the update can be done
in $\Oh((k+1)\log n)$ time, where $k$ is the number of disappearing arcs.
Since each arc is removed at most once, this given an amortized insertion time of $\Oh(\log n)$.
It remains to describe how to check whether $D_i$ actually contributes a new arc
and, if so, which existing arcs are affected. To this end we search in
$\tree_{\mytop}(C)$ for an affected arc, that is, for an arc
that is completely or partially covered by~$D_i$. When we have such an arc,
we can easily find all other affected arcs, because these arcs are neighbors
in the left-to-right ordering. It remains to describe how to search for an
affected arc.

Consider the arc $\alpha_j$ (contributed by some disk $D_j$) stored at the
root of~$\tree_{\mytop}(C)$. If $D_i$ covers (a part of) $\alpha_j$
then we have found an affected arc. Otherwise, if $p_i$ lies to the left of $p_j$
then we recursively search in the left subtree of the root, and else
we recursively search in the right subtree. This continues until we either
find an affected arc, or we reach a leaf. In the latter case $D_i$ does
not contribute a new arc to~$\bd U_{\mytop}(C)$. The correctness of this procedure
is guaranteed by Lemma~\ref{le:consistent-order}.
We can conclude the following lemma.
\begin{lemma}\label{le:envelope-union}
We can maintain $U_{\mytop}(C)$ in a data
structure using $\Oh(n)$ storage such that we can decide in $\Oh(\log n)$ time for a
query point~$q$ if $q\in U_{\mytop}(C)$. The data structure can be maintained under
insertions in $\Oh(\log n)$ amortized time.
\end{lemma}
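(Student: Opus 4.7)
The plan is to implement the data structure already sketched in the preceding paragraphs and then verify each of the three claimed bounds. First, I would store the arcs that comprise $\bd U_{\mytop}(C)$ in left-to-right order in a balanced binary search tree $\tree_{\mytop}(C)$. Since $U_{\mytop}(C)$ is $x$-monotone, a point-location query with a point~$q$ can be reduced to searching for the arc whose $x$-range contains~$q$'s $x$-coordinate and then performing a single geometric test against that arc. This takes $\Oh(\log n)$ time. The storage is $\Oh(n)$ because, by Lemma~\ref{le:consistent-order}, each disk in $\D(C)$ contributes at most one arc to $\bd U_{\mytop}(C)$.

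For an insertion of a disk $D_i$ with center $p_i$, I would proceed in two phases. In the search phase, I walk down $\tree_{\mytop}(C)$ from the root: at a node storing the arc contributed by some disk $D_j$, I test whether $D_i$ covers any portion of that arc; if yes, I have found an affected arc and stop, and otherwise I descend into the left or right subtree according to whether $p_i$ lies left or right of~$p_j$. Correctness of this descent is exactly what Lemma~\ref{le:consistent-order} provides: since the $x$-order of arcs on $\bd U_{\mytop}(C)$ matches the $x$-order of their centers, a new arc contributed by $D_i$ (if any) must be inserted at the position corresponding to $p_i$'s $x$-coordinate, so comparing centers cannot cause the search to miss an affected arc. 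In the update phase, starting from the one affected arc found, I walk outward in both $x$-directions through the neighbors in $\tree_{\mytop}(C)$, deleting every arc entirely covered by $D_i$ and shortening the at most two arcs at the two ends of the affected range, then inserting the new arc contributed by~$D_i$.

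For the amortized bound, note that the search phase costs $\Oh(\log n)$ and the update phase costs $\Oh((k+1)\log n)$ where $k$ is the number of arcs removed from $\bd U_{\mytop}(C)$ by this insertion. Summing across any sequence of insertions, the total value of $k$ is bounded by the total number of arcs that were ever inserted into $\tree_{\mytop}(C)$, which is $\Oh(n)$ because each insertion produces at most one new arc. Hence the total work on update phases is $\Oh(n\log n)$, which gives the claimed $\Oh(\log n)$ amortized insertion time.

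The main obstacle, and the reason this approach works at all, is the $x$-order consistency between arcs and disk centers established in Lemma~\ref{le:consistent-order}: without it, one could not steer the root-to-leaf descent of the search phase by comparisons between $p_i$ and the stored arcs' centers, and the whole data-structural speedup would collapse. Everything else is standard balanced-BST bookkeeping together with the observation that the disappearing arcs are consecutive in $x$-order, so they can be collected by a neighbor walk rather than repeated root-to-leaf searches.
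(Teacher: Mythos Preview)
Your proposal is correct and follows essentially the same approach as the paper: both store the arcs of $\bd U_{\mytop}(C)$ in $x$-order in a balanced search tree, answer queries by a single root-to-leaf search, and handle an insertion by first descending the tree (comparing $p_i$ with the center of the disk contributing the current node's arc, stopping early if that arc is affected) and then walking outward to collect the consecutive block of affected arcs, with the $\Oh(\log n)$ amortized bound coming from charging deletions against the unique insertion of each arc. The correctness justification you give via Lemma~\ref{le:consistent-order} is exactly the one the paper uses.
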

To summarize, our point-location data structure for $\union(\D)$ consists of the following
components.
\begin{itemize}
\item A balanced search tree $\tree$ storing the active strips sorted on their $x$-order,
      and for each active strip $\Sigma$ a balanced search tree $\tree_{\Sigma}$ on the active cells inside that
      strip, sorted on $y$-order.
\item  For each active cell~$C$, the partial union $U_{\mytop}(C)$ is stored in
       the data structure of Lemma~\ref{le:envelope-union}. The other three partial unions
       $U_{\mybot}(C)$, $U_{\myleft}(C)$, and $U_{\myright}(C)$ are stored in
      similar data structures.
\end{itemize}
To answer a query we first determine the grid cell $C_q$ containing the query point~$q$.
If $C_q$ is active, we know that $q\in \union(\D)$. Otherwise we determine the
\emph{relevant grid cells} for~$q$, that is, the active cells $C$ whose distance to
$q$ is at most~$\sqrt{2}$; these are the only cells for which $\D(C)$ can contain a disk
$D_i$ such that $q\in D_i$. Note that there are only $\Oh(1)$ such cells and that
they can be found in $\Oh(\log n)$ time using the tree $\tree$ and the trees $\tree_{\Sigma}$.
For each relevant cell~$C$, we then query the appropriate partial union; for example,
if $q$ lies above $\ell_{\mytop}(C)$ we query $\tree_{\mytop}(C)$.
Now $q$ lies in $\union(\D)$ if and only if $q$ lies in at least
of one these partial unions.

Inserting a new disk $D_i$ is done as follows. First we determine the grid cell $C_q$
containing the center $p_i$ of $D_i$. If $C_q$ is not yet active, we insert $C_q$
into our structure (when necessary first creating a new active strip). Next we insert~$D_i$
into each of the four partial unions stored for~$C_q$. By Lemma~\ref{le:envelope-union}
the whole procedure takes $\Oh(\log n)$ amortized time.
{
\renewcommand{\thetheorem}{\ref{th:pl-in-disk-union}}
\begin{theorem}
We can maintain a collection $\D$ of $n$ congruent disks in a data structure such that
we can decide in $\Oh(\log n)$ time if a query point $q$ lies in $\union(\D)$.
The data structure uses $\Oh(n)$ storage and a new disk can be inserted into $\D$ in $\Oh(\log n)$
amortized time.
\end{theorem}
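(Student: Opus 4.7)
My plan is to build on the rough sketch already given in the main text. After scaling, I assume all disks have radius~$\sqrt{2}$ and introduce the integer grid~$G$. Since each grid cell has diameter~$\sqrt{2}$, a disk whose center lies in a cell~$C$ completely covers~$C$. Hence a query point~$q$ lies in~$\union(\D)$ whenever the grid cell containing~$q$ is \emph{active} (contains some disk center); otherwise it lies in~$\union(\D)$ iff it lies in~$\union(\D(C))$ for one of the~$\Oh(1)$ active cells~$C$ within distance~$\sqrt{2}$ of~$q$, where~$\D(C)$ is the set of disks whose centers lie in~$C$. I would maintain the active strips and the active cells within each strip using two levels of balanced search trees, so locating the~$\Oh(1)$ relevant cells costs $\Oh(\log n)$ per query.

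The heart of the data structure is the representation of~$\union(\D(C))$ for each active cell~$C$. I split its boundary into four pieces according to which side of~$C$ they lie beyond; concretely, let~$U_{\mytop}(C)$ be the part of~$\union(\D(C))$ above~$\ell_{\mytop}(C)$, and define~$U_{\mybot}(C)$, $U_{\myleft}(C)$, $U_{\myright}(C)$ analogously. Focus on~$U_{\mytop}(C)$; the other three are symmetric. The crucial structural lemma I would prove has two parts: (i)~each disk~$D_i\in\D(C)$ contributes at most one arc to~$\bd U_{\mytop}(C)$, and (ii)~the left-to-right order of these arcs agrees with the left-to-right order of their centers. For~(i), the portion of~$\bd D_i$ above~$\ell_{\mytop}(C)$ is a single arc, and two such arcs cannot cross twice above~$\ell_{\mytop}(C)$ because their centers lie on the same side of~$\ell_{\mytop}(C)$ and the disks are congruent. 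For~(ii), I use the perpendicular bisector of a pair of centers together with a short geometric argument: if $p_i$ lies left of $p_j$ but the arc of~$D_i$ were to the right of the arc of~$D_j$, that arc would be forced into a region entirely covered by~$D_j$, a contradiction to its presence on~$\bd U_{\mytop}(C)$.

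Given this structural lemma, I store the arcs of~$\bd U_{\mytop}(C)$ in $x$-order in a balanced BST~$\tree_{\mytop}(C)$. A point-location query for a point above~$\ell_{\mytop}(C)$ reduces to binary search and takes $\Oh(\log n)$ time. To insert a new disk~$D_i$ into~$\D(C)$, I search~$\tree_{\mytop}(C)$ for any arc whose interior is hit by~$D_i$: at each internal node I either find such an arc (and then walk to its neighbors to locate all other affected arcs, which by~(ii) are consecutive in~$\tree_{\mytop}(C)$) or I use the center ordering from~(ii) to decide which child to recurse into. When~$D_i$ does contribute a new arc, the update shortens at most two existing arcs and deletes the remaining affected ones; since each arc can be deleted at most once over the lifetime of~$\tree_{\mytop}(C)$, a standard amortization argument yields $\Oh(\log n)$ amortized insertion time per partial union. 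Global insertion then activates the relevant grid cell in the two-level tree, if necessary, and inserts~$D_i$ into the four partial-union structures of its cell; a global query combines the cell lookup with an~$\Oh(1)$-way check against the partial unions of the relevant neighboring cells. All space is linear, since each disk contributes to the four partial unions of a single cell.

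The main obstacle is the structural lemma on the consistent ordering of contributed arcs: the whole search-and-insert scheme depends on the fact that~$\tree_{\mytop}(C)$ can be navigated purely using the~$x$-coordinates of disk centers, and this is only sound if arc order on the envelope mirrors center order. Once that lemma is in place, the remaining bookkeeping — the two-level active-cell index, the four symmetric partial-union structures per cell, and the combination of~$\Oh(1)$ per-cell results into a global answer — is routine and matches the claimed~$\Oh(n)$ storage and~$\Oh(\log n)$ (amortized) bounds.
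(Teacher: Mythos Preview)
Your proposal is correct and follows essentially the same approach as the paper: the grid bucketing with the two-level active-cell index, the four partial unions per cell, the key structural lemma that arc order on~$\bd U_{\mytop}(C)$ matches center $x$-order (proved via the perpendicular-bisector argument), and the resulting balanced-BST representation with amortized deletion charging are exactly the components the paper uses. The paper's proof differs only in presentation details, not in substance.
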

\addtocounter{theorem}{-1}
}


\subsection{A lower bound for the decision problem}
\label{subse:pyramidal-lb}
Below we show an $\Omega (n \log n)$ time lower bound for the decision version of the bottleneck pyramidal \tsp
in the Euclidean plane, in the algebraic computation-tree model.
The reduction even applies to the bitonic setting where the points are ordered from left to right.
This bound matches the upper bound in Theorem~\ref{th:bottleneck-decision}.

\begin{theorem}\label{th:bottleneck-decision-lb1-pyramidal}
The bottleneck pyramidal TSP on $n$ points in the Euclidean plane has a lower bound of $\Omega (n \log n)$ in
the algebraic computation-tree model.
\end{theorem}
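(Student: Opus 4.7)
The plan is to prove this lower bound via a reduction from \textsc{Set Disjointness} with integer inputs of size~$n$, which by Yao~\cite{y-lbact-91} has an $\Omega(n\log n)$ lower bound in the algebraic computation-tree model. Given two sequences $A=(a_1,\ldots,a_n)$ and $B=(b_1,\ldots,b_n)$ of integers, I would construct in $O(n)$ arithmetic operations, without any input-dependent branching, a set~$P$ of $O(n)$ points in the plane---arranged in a strict left-to-right order---and a bottleneck threshold~$B^{\star}$, such that $P$ admits a bitonic tour of bottleneck at most~$B^{\star}$ if and only if $A\cap B\ne\emptyset$. Any algorithm solving the decision problem in time $T(n)$ would then solve \textsc{Set Disjointness} on $n$-element integer sets in time $O(n)+T(O(n))$, forcing $T(n)=\Omega(n\log n)$ and matching the upper bound of Theorem~\ref{th:bottleneck-decision}.

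For the geometric core of the construction I would place each element of $A$ on the line $y=1$ and each element of $B$ on the line $y=0$, using an injective encoding of integers that separates the two classes. A natural choice is $a_i\mapsto(3a_i,1)$ and $b_j\mapsto(3b_j+1,0)$, which places the two sets of $x$-coordinates in different residue classes modulo~$3$ and thus automatically keeps them disjoint (and distinct, since~$A$ and~$B$ are sets). Taking $B^{\star}:=\sqrt{2}$, every $A$-$A$ and $B$-$B$ edge has horizontal displacement that is a nonzero multiple of~$3$ and hence Euclidean length at least~$3>B^{\star}$, while the only cross-row edges of length at most~$B^{\star}$ are those between points originating from the same integer, since $|3a-3b-1|\ge 1$ with equality iff $a=b$. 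So \emph{short} edges correspond exactly to witnesses of $A\cap B$, and they come with length exactly~$\sqrt{2}$.

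To turn this short-edge structure into a bitonic tour I would augment the two rows with a small constant-size gadget of auxiliary points clustered near each input point, together with a pair of fixed ``bookend'' points far to the left and right that serve as the leftmost and rightmost vertices of the tour. The gadget should be designed so that the two monotone chains can traverse each local cluster cheaply, and so that the chains can swap between the two rows only through short matching edges at coinciding pairs $a_i=b_j$; in the absence of any such pair, monotonicity of the chains forces at least one cross-row jump of horizontal displacement at least~$2$, and hence length strictly exceeding~$B^{\star}$. I expect the gadget design---and the accompanying combinatorial argument showing that the bitonic (two monotone chains) constraint is compatible with the required ``matching'' pattern exactly when $A\cap B\ne\emptyset$---to be the main obstacle; both directions would be proved by an explicit construction of a bitonic tour from an intersection element in one direction, and a degree/parity argument on the chain endpoints combined with the quantitative edge-length bounds above in the other. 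Once the gadget is fixed, the reduction requires only $O(n)$ arithmetic operations and no input-dependent decisions, so the $\Omega(n\log n)$ lower bound transfers directly.
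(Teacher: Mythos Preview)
Your high-level plan (reduce from integer \textsc{Set Disjointness}) matches the paper, but the concrete encoding you propose breaks in the \emph{yes} direction. With $A$-points at $(3a_i,1)$, $B$-points at $(3b_j+1,0)$, and threshold $B^\star=\sqrt{2}$, every edge between two distinct input points on the same row already has length at least $3>B^\star$, and more generally every edge between two clusters centred at distinct input positions has length close to that. A bitonic tour consists of two $x$-monotone chains that together visit all points; since the clusters are spread out over an $x$-range of order~$M$ and your auxiliary points are, by your own description, ``clustered near each input point,'' at least one chain must contain an edge bridging two consecutive clusters whose centres differ by at least~$3$ in the $x$-coordinate. That edge alone pushes the bottleneck above~$B^\star$, regardless of whether $A\cap B\ne\emptyset$. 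Hence the instance you build is \emph{always} a \noinstance, and the reduction fails to distinguish the two cases. The bookend points do not help; if anything, placing them ``far to the left and right'' makes the first and last edge of each chain long as well.

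The paper avoids this with a different geometric layout. It puts all $U$-points on the line $x=0$ and all $V$-points on the line $x=B$ (plus two ``bridge'' points $(0,B)$ and $(B,B)$), with every $y$-coordinate in $[1,M]$ and $B=M+1$. The pyramidal order is simply: all $U$-points, then the two bridge points, then all $V$-points. Now every edge between two points on the same vertical line has length at most $M<B$, so within-class hops are free; the only edges on which the tour must spend its bottleneck budget are the two unavoidable crossings from $x=0$ to $x=B$. Such a crossing has length exactly~$B$ iff the $y$-coordinates match; one crossing is supplied by the bridge points, and a second short crossing exists iff $U\cap V\ne\emptyset$. The idea missing from your proposal is precisely this: confine each class geometrically so that within-class edges are automatically below the threshold, and let the threshold control only the two mandatory inter-class crossings.
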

\begin{proof}
We prove the lower bound by a reduction from \emph{set disjointness} for integer sets $U = \{ u_1, \ldots u_n \}$ and $V = \{ v_1, \ldots v_n \}$, for which the lower bound is known~\cite{y-lbact-91}. Without loss of generality we may assume that all integers are positive. We need to construct an ordered set of points $P$ and choose a bound $B>0$ such that $U \cap V \neq \emptyset$  if and only if $P$ admits a pyramidal tour whose longest edge has length at most~$B$.

Let $M := \max U \cup V$ and $B := M+1$. We define $P = \{ p_1, \ldots, p_{2n+2} \}$ with $p_i = (0, u_i)$ for $1 \leq i \leq n$, $p_{n+1} = (0,B)$, $p_{n+2} = (B,B)$, and  $p_{n+2+i} = (B, v_i)$ for $1 \leq i \leq n$ (see Fig~\ref{fig:lowerbound}).

First assume $U \cap V \neq \emptyset$ and $u_{i'} = v_{j'}$. The pyramidal tour that first visits all $p_i$ in order except $p_{i'}$ and $p_{n+2+j'}$ and finally $p_{n+2+j'}$ and $p_{i'}$ has only edges of length at most $B$. Conversely, assume there is a pyramidal tour whose longest edge has length at most $B$. Any tour on $P$ needs to move from the line $x=0$ to the line $x=B$ and back. Since the corresponding edges have length at most $B$, they need to connect points in $P$ with the same $y$ coordinate.
This implies that there has to be a $y \neq B$ such that $(0,y)$ and $(B,y)$ are in $P$, which in turn implies that $U$ and $V$ are not disjoint.
\end{proof}

\begin{figure}[t]
	\centering
  \includegraphics{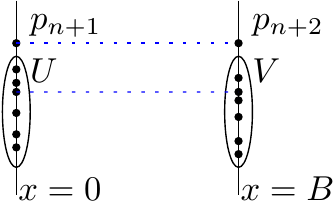}
	\caption{Lower bound construction: A TSP tour needs to move between the lines $x=0$ and $x=B$ at least twice, which is possible with a bottleneck of B exactly if $U$ and $V$ have at least one element in common.}
	\label{fig:lowerbound}
\end{figure}

The construction above does not immediately work for the bottleneck bitonic TSP, since it uses points with the same $x$-coordinate. However, we can slightly perturb the points to obtain unique $x$-coordinates.

\begin{theorem}\label{th:bottleneck-decision-lb1-bitonic}
The bottleneck bitonic TSP on $n$ points in the Euclidean plane has a lower bound of $\Omega (n \log n)$ in
the algebraic computation tree model.
\end{theorem}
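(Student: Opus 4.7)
The plan is to reuse the construction of Theorem~\ref{th:bottleneck-decision-lb1-pyramidal} almost verbatim, breaking the $x$-coordinate ties by a small perturbation and relaxing the bottleneck threshold from $B$ to $B' := B + \delta$ for a suitably small $\delta > 0$. Concretely, starting from a set-disjointness instance with $M := \max(U \cup V)$ and $B := M + 1$, I would redefine $p_i := (\epsilon i, u_i)$ for $1 \le i \le n$, $p_{n+1} := (\epsilon(n+1), B)$, $p_{n+2} := (B, B)$, and $p_{n+2+i} := (B + \epsilon i, v_i)$ for $1 \le i \le n$, for some small $\epsilon > 0$. Provided $\epsilon(n+1) < B$, the index order agrees with the left-to-right order on the points, so a bitonic tour is precisely a pyramidal tour with respect to this indexing.

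For the forward direction, when $u_{i'} = v_{j'}$ the same tour as in the proof of Theorem~\ref{th:bottleneck-decision-lb1-pyramidal} is bitonic. Its two cluster-crossing edges connect points of equal $y$-coordinate and have length at most $B + \epsilon n$, and all intra-cluster edges have length at most $\sqrt{\epsilon^2 n^2 + B^2}$. For the reverse direction, any bitonic tour must contain two edges crossing between the left cluster (points with $x \le \epsilon(n+1)$) and the right cluster (points with $x \ge B$), and each such edge has $x$-span at least $B - \epsilon(n+1)$. If its endpoints have $y$-coordinates differing by at least $1$, then the edge length is at least $\sqrt{(B-\epsilon(n+1))^2 + 1}$. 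Choosing for instance $\epsilon := 1/(8Bn)$ and $\delta := 1/(8B)$ makes all edges of the intended matching tour satisfy the bound $B'$ while pushing every crossing edge with mismatched $y$-coordinates strictly above $B'$. Exactly as in the pyramidal argument, only one point on each side has $y$-coordinate $B$, so a bitonic tour of bottleneck at most $B'$ forces a matching pair between $U$ and $V$, and conversely.

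The main obstacle is entirely quantitative: the perturbation $\epsilon$ and the slack $\delta$ must be small enough that a unit mismatch in integer $y$-coordinates on the two sides still pushes a crossing edge above $B'$, yet large enough that the intended matching tour stays feasible. Both parameters can be chosen as simple polynomial functions of $1/(Bn)$, and the resulting point set is produced in $O(n)$ arithmetic operations, so the $\Omega(n \log n)$ algebraic-computation-tree lower bound of~\cite{y-lbact-91} for integer set disjointness transfers directly to bottleneck bitonic TSP.
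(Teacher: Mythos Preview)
Your proposal is correct and follows essentially the same approach as the paper: perturb the $x$-coordinates of the pyramidal construction to break ties, verify that the same matching tour still satisfies the (possibly adjusted) bottleneck, and show that any short crossing edge forces equal integer $y$-coordinates. The only cosmetic differences are that the paper shifts the right-hand points \emph{inward} (toward $x=B$ from the left) and keeps the threshold at~$B$, whereas you shift them \emph{outward} and compensate with a relaxed threshold $B'=B+\delta$; your choice has the mild advantage that the index order coincides with the $x$-order, so the pyramidal tour of Theorem~\ref{th:bottleneck-decision-lb1-pyramidal} is literally bitonic without further adjustment.
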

\begin{proof}
Let $\Delta := \frac{1}{4B(n+1)}$. We use the same construction as in the previous proof, except that we slightly change the $x$-coordinates of the points in $P$. Concretely, we set the $x$-coordinate of $p_i$ to $i\Delta$ for $i \leq n+1$ and to $B-(i-n+1)\Delta$ for $i > n+1$. Since $(n+1)\Delta < 1$, the distance between any $p_i$ and $p_j$ with $1 \leq i,j, \leq n+1$ still is less than $B = M+1$, and likewise for $n+2 \leq i,j \leq 2n+2$. Thus, if $U \cap V \neq \emptyset$, we obtain the same tour as in the previous proof with edges of length at most $B$. Conversely, if there is a bitonic tour with edges of length at most B, we need to check that edges crossing the line $x = B/2$ do not connect points with different $y$-coordinates. Suppose there is such an edge, then its length would be at least
\[
\sqrt{1 + (B-2(n+1)\Delta)^2} = \sqrt{1 + (B-\frac{1}{2B})^2} > \sqrt{1 + B^2 - 1} = B,
\]
a contradiction.
\end{proof}

\subsection{An algorithm for the optimization problem}
\label{subse:bottleneck}
In the optimization version of the bottleneck pyramidal \TSP problem the goal
is to minimize the length of the bottleneck edge, that is, the length of the longest
edge in the tour.
%
\medskip

The standard dynamic-programming solution for the optimization version of the pyramidal
bottleneck \TSP uses a table~$A[1..n,1..n]$
where $A[i,j]$ is defined as the minimum
value for $B$ such that there is an $(i,j)$-partial tour of cost at most~$B$.
We have
\[
A[i+1,j] \ = \ \left\{ \begin{array}{ll}
                    \max(A[i,j], |p_i p_{i+1}|) & \mbox{if $1\leq j < i$} \\[2mm]
                    \min_{1\leq k<i} \max(A[i,k], |p_k p_{i+1}|) & \mbox{if $j=i$}
                    \end{array}
             \right.
\]
where $A[2,1]=|p_1 p_2|$. Our strategy to speed up the dynamic-programming
algorithm is similar to the strategy for the non-bottleneck version in Section~\ref{se:pyramidal}: we view the
values $A[k,i]$ as the weight of the point~$p_k$ in the $i$-th iteration of
the algorithm, and we maintain the points with their weights in a suitable
data structure. This time the data structure needs to support the following operations:
\begin{itemize}
\item perform a \emph{query} with point $q$, which reports the
			value $\min_{p_k} \max(w_k, |p_k q|)$, where the min is over all points $p_k$
      currently in the data structure;
\item perform a \emph{bulk update} of the weights, which
      sets $w_j := \max(w_j, B)$ for each point $p_j$ currently in the data structure,
      for a given value~$B$;
\item \emph{insert} a new point $p_{i}$ with given weight~$w_{i}$ into the data structure.
\end{itemize}
Below we describe a data structure supporting these operations with $\Oh(\log^3 n)$
query time, $\Oh(\log^3 n)$ amortized insertion time and $\Oh(\log n)$ time for bulk updates.
The structure uses $\Oh(n\log n)$ storage, leading to the following theorem.
{
\renewcommand{\thetheorem}{\ref{th:bottleneck}}
\begin{theorem}
Let $P$ be an ordered set of $n$ points in the plane.
Then we can compute a pyramidal tour whose bottleneck edge has minimum
length in $O (n\log^3 n)$ time and using $\Oh(n\log n)$ storage.
\end{theorem}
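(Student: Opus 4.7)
The plan is to mirror Theorem~\ref{th:regular}: execute the dynamic program of Section~\ref{se:pyramidal} with the bottleneck recurrence (replacing $+$ by $\max$), while maintaining an implicit representation of the current row in a dynamic data structure that supports (i)~queries returning $\min_k \max(w_k, |p_k q|)$, (ii)~insertions of a weighted point, and (iii)~bulk updates $w_j \gets \max(w_j, B)$.

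Bulk updates turn out to be essentially free. Using $\max(\max(w_k, B),d) = \max(B,\max(w_k,d))$, we have
\[
\min_k \max\!\bigl(\max(w_k, B),\, |p_k q|\bigr) \;=\; \max\!\Bigl(B,\; \min_k \max(w_k,|p_k q|)\Bigr),
\]
so all bulk updates can be accumulated into a single global value $B_{\text{glob}}$ that is $\max$-combined with the raw answer at query time. The underlying dictionary therefore never needs to touch stored weights on a bulk update.

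For (i) and (ii) I would first design a static structure on $n$ weighted points, then apply the logarithmic method of Appendix~\ref{se:logarithmic-method} to make it semi-dynamic. The static structure is a balanced BST $\T$ on points sorted by weight, storing at each node $v$ (with subtree set $S_v$ and maximum weight $w_v^{\max}$) both the regular Euclidean Voronoi diagram of $S_v$ augmented by a planar point-location structure, and the point-in-union-of-congruent-disks structure of Theorem~\ref{th:pl-in-disk-union} for $\bigcup_{p_k\in S_v} B(p_k, w_v^{\max})$. Summed across the $\Oh(\log n)$ levels, this uses $\Oh(n\log n)$ storage. The key structural observation for querying is: if the weights are sorted so that $w_1 \le \cdots \le w_n$ and $D_j$ denotes the Euclidean nearest-neighbor distance from $q$ to $\{p_1,\ldots,p_j\}$, then $\min_k\max(w_k,|p_kq|) = \min_j\max(w_j,D_j)$; since $(w_j)$ is non-decreasing and $(D_j)$ is non-increasing, this minimum is attained at the crossover index $j^* = \min\{j: w_j\ge D_j\}$ with value $\min(w_{j^*},D_{j^*-1})$ (or $D_n$ if no such $j^*$ exists). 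I would locate $j^*$ by descending $\T$: each level decides whether $j^*$ lies in the left or right subtree by testing the congruent-disks predicate ``$q \in \bigcup_{p_k \in S_v} B(p_k, w_v^{\max})$'' in $\Oh(\log n)$ time at the visited canonical subtree, and the regular Voronoi diagrams are queried at the end to extract $D_{j^*-1}$. This gives $\Oh(\log^2 n)$ per static query; the logarithmic method adds one $\log$ factor, yielding $\Oh(\log^3 n)$ amortized per query and per insertion, and $\Oh(n\log^3 n)$ total for the $n$-iteration DP.

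The main technical obstacle will be handling the mismatch between the ``binary-search radius'' $w_j$ under consideration and the stored radii $w_v^{\max}$ at each node: for canonical subtrees visited to the left of the descent path we have $w_v^{\max} < w_j$, so the stored disk unions \emph{under-cover} $\bigcup_{k\in S_v} B(p_k, w_j)$. Correctness of the descent therefore hinges on proving that any ``missed coverage'' relevant to the crossover test can be detected by a single nearest-neighbor query in the regular Voronoi diagram of the visited node, which keeps the per-level cost at $\Oh(\log n)$ and makes the overall approach go through within the claimed bounds.
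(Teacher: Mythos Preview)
Your proposal is correct and shares the paper's overall architecture: a weight-sorted balanced tree with a Euclidean Voronoi diagram at every node, made semi-dynamic by the logarithmic method, with bulk updates absorbed into a single global value that is $\max$-combined with the raw query answer. The query procedure, however, rests on a different characterisation. The paper stores at each node $\nu$ the union of the \emph{variable-radius} disks $D(p_j,w_j)$ (each disk has its own weight as radius) and proves that $\min_k\max(w_k,|p_kq|)=\min(B_1,B_2)$, where $B_1=\min\{w_j:|p_jq|\le w_j\}$ is found by descending the tree with point-in-union tests and $B_2=\min\{|p_jq|:w_j<B_1\}$ is found afterwards by $O(\log n)$ Voronoi queries on canonical subtrees. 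Because the condition ``$|p_jq|\le w_j$'' is local to each point $p_j$, the descent for $B_1$ never needs information about subtrees already left behind, so the under-coverage issue you flag simply does not arise in the paper's version. Your crossover formulation $\min_j\max(w_j,D_j)$ is equally valid; once you maintain the running nearest-neighbour distance $D_{\text{left}}$ to the already-skipped left subtrees via one Voronoi query per level (exactly the fix you anticipate), it also yields $O(\log^2 n)$ per static query and hence the stated bounds. One simplification on your side: the congruent-disk structure of Theorem~\ref{th:pl-in-disk-union} is redundant here, since for a \emph{static} point set the predicate ``$q\in\bigcup_{k\in S_v}B(p_k,r)$'' is equivalent to ``the nearest neighbour of $q$ in $S_v$ lies within distance $r$'', which the Voronoi diagram you already store answers directly.
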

\addtocounter{theorem}{-1}
}

\paragraph{The data structure.}
Below we describe a data structure that supports queries and bulk updates. To support
insertions, we then apply the logarithmic method.
With a slight abuse of notation, we let $P:=\{p_1,\ldots,p_n\}$ denote the weighted point set
stored in the data structure.
Let $W$ be the (multi-)set of the weights of the points in~$P$. Our data structure
is defined as follows.
\begin{itemize}
\item The main tree is a balanced search tree $\tree$ whose leaves store the weights from~$W$,
      together with the corresponding points. For a node $\node$ in $\tree$, let $P(\node)$
      denote the set of points stored in the subtree rooted at~$\node$.
      We maintain the following information at~$\node$.
      \begin{itemize}
      \item Let $D(p_j,w_j)$ be the disk centered at the point $p_j$ and of radius~$w_j$.
            We store the union of the set $\{D(p_j,w_j) : p_j \in P(\node) \}$,
            preprocessed for point location. Here the weights $w_j$ refer to the weights
            at the time the data structure was constructed; after a bulk
            update the union~$U(\node)$ is not changed. We denote this union by~$U(\node)$.
      \item The Voronoi diagram $\vd(P(\node))$ of the point set $P(\node)$ (using the
            normal Euclidean distances), preprocessed for point location.
      \item The maximum weight stored in the subtree rooted at~$\node$.
      \end{itemize}
\item We maintain $B_{\max}$, the maximum value of any of the bulk updates executed since
      the construction of the data structure.
\end{itemize}
Since the unions and Voronoi diagrams stored at each node (and their point-location
data structures) use linear storage~\cite{Fortune87,Snoeyink04PL}, the overall amount of storage
of our data structure is $\Oh(n\log n)$.
The idea behind the query procedure, which will be described below, is the following lemma.
Recall that a query with a point~$q$ should return the value
$B(q) := \min_{p_j\in P} \max(w_j, |p_j q|)$.
\begin{lemma}\label{le:w1andw2}
Let $B_1 := \min \{ w_j : p_j\in P \mbox{ and } |p_j q| \leq w_j \}$ and
$B_2 := \min \{ |p_j q| : p_j\in P \mbox{ and } w_j < B_1 \}$. Then
$B(q) = \min (B_1,B_2)$.
\end{lemma}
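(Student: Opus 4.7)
The plan is to establish $B(q) = \min(B_1,B_2)$ by proving the two inequalities $B(q) \leq \min(B_1,B_2)$ and $B(q) \geq \min(B_1,B_2)$ directly from the definitions. For each $p_j \in P$, write $f_j := \max(w_j,|p_jq|)$, so that $B(q) = \min_j f_j$.

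First I would establish the upper bound $B(q) \leq \min(B_1,B_2)$. For $B_1$: any $p_j$ with $|p_j q| \leq w_j$ yields $f_j = w_j$, so taking the minimum of $w_j$ over this set gives $B(q) \leq B_1$. For $B_2$: any $p_j$ attaining the minimum in the definition of $B_2$ satisfies $w_j < B_1 \leq w_j$ would be contradictory, and more importantly $w_j < B_1$ together with the upper bound $B(q) \leq B_1$ does not immediately help, so I argue directly: I claim such a $p_j$ must have $|p_jq| > w_j$, since otherwise $w_j$ would appear among the values minimized in $B_1$, giving $B_1 \leq w_j$, contradicting $w_j < B_1$. Hence $f_j = |p_jq| = B_2$ and so $B(q) \leq B_2$.

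Next I would prove the matching lower bound $B(q) \geq \min(B_1,B_2)$ by showing $f_j \geq \min(B_1,B_2)$ for every $p_j \in P$, via a case split on whether $|p_jq| \leq w_j$.
\begin{itemize}
\item If $|p_jq| \leq w_j$, then $p_j$ contributes $w_j$ to the set defining $B_1$, so $f_j = w_j \geq B_1 \geq \min(B_1,B_2)$.
\item If $|p_jq| > w_j$, then $f_j = |p_jq|$. Split further on $w_j$: when $w_j \geq B_1$, one has $f_j = |p_jq| > w_j \geq B_1$; when $w_j < B_1$, the point $p_j$ is eligible in the definition of $B_2$, giving $f_j = |p_jq| \geq B_2$.
\end{itemize}
In every case $f_j \geq \min(B_1,B_2)$, so taking the minimum over $j$ yields $B(q) \geq \min(B_1,B_2)$.

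This is a short bookkeeping argument; there is no real obstacle once the cases are set up, and the key subtlety is merely to notice that the set $\{p_j : w_j < B_1\}$ used to define $B_2$ is automatically contained in $\{p_j : |p_jq| > w_j\}$ (by the minimality of $B_1$), which is what makes the three-way case analysis collapse correctly. Combining the two inequalities completes the proof.
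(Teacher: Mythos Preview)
Your proof is correct and rests on the same elementary case split as the paper's, namely the observation that $\{p_j : w_j < B_1\} \subseteq \{p_j : |p_jq| > w_j\}$. The only difference is organizational: the paper introduces an auxiliary quantity $B'_2 := \min\{|p_jq| : |p_jq| > w_j\}$, first notes the trivial $B(q) = \min(B_1,B'_2)$, and then argues $\min(B_1,B'_2)=\min(B_1,B_2)$, whereas you prove the two inequalities $B(q)\le\min(B_1,B_2)$ and $B(q)\ge\min(B_1,B_2)$ directly; your route is slightly more streamlined but not substantively different.
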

\begin{proof}
Define $P_1 := \{ p_j \in P: w_j \geq |p_j q| \}$ and
$P_2 := \{ p_j \in P: w_j < |p_j q| \}$. Note that
$B_1 = \min_{p_j\in P_1} w_j$ and define $B'_2 := \min_{p_j\in P_2} |p_j q|$.
Clearly $B(q) = \min (B_1,B'_2)$.

Because of the definition of $B_1$, we have $\{ p_j : w_j < B_1 \} \subseteq P_2$.
Hence, $B_2 \geq B'_2$. Furthermore, if $B_2 > B'_2$ then
the point $p_j\in P_2$ minimizing $|p_j q|$ has $w_j\geq B_1$, and so
$\min(B_1,B'_2) = B_1 = \min(B_1,B_2)$ in this case. Trivially
$\min(B_1,B'_2) = \min(B_1,B_2)$ also holds when $B_2 = B'_2$.
Hence, $B(q) = \min (B_1,B_2)$, as claimed.
\end{proof}
We now describe how to perform the three operations on $\tree$.

\emph{Queries.}
To answer a query we first compute the nearest neighbor, $p_k$, of $q$ in $P$.
This can be done in $\Oh(\log n)$ time by locating $q$ in~$\vd(\myroot(\tree))$,
since $P(\myroot(\tree))=P$.
If $|p_k q| \leq B_{\max}$ we can immediately conclude that $B(q) = B_{\max}$.
Otherwise we answer the query by computing $B_1$ and $B_2$,
and then returning~$\min(B_1,B_2)$. Next we explain how to compute~$B_1$ and $B_2$.
Note that when we have to do so, we know that $|p_j q| > B_{\max}$ for all
$p_j\in P$. This implies that for any given node~$\node$ we can decide
whether there is a point $p_j\in P(\node)$ with $|p_j q|\leq w_j$
by checking if $q\in U(\node)$---the bulk updates we
have performed since constructing~$U(\node)$ do not affect the outcome.

We can compute $B_1$ as follows.
We start by checking  if $q\in U(\myroot(\tree))$.
If this is not the case then $\{ p_j\in P : |p_j q| \leq w_j \} =\emptyset$ and
so we set $B_1 := \infty$. Otherwise we walk down the tree, as follows.
Suppose we are at a non-leaf node $\node$. Let $\mu$ be the left child of $\node$.
If  $q\in U(\mu)$ then we descend to the left child of~$\node$ (that is,
we set $\node := \mu$) and otherwise we proceed to the right child.
Since the points of $P$ are stored in the leaves of $\tree$ in order
of their weights, the search will end in the leaf storing the point~$p_{j*}$
with the smallest weight among the nodes $p_j$ with $|p_j q| \leq w_j$.
Thus we set $B_1 := w_{j^*}$.

Next we need to compute~$B_2$. As observed earlier, when we have to compute
$B_1$ and $B_2$ we know that $|p_j q| > B_{\max}$ for all $p_j\in P$.
Hence, $B_1 > B_{\max}$. This implies that whether or not a point $p_j$
satisfies~$w_j < B_1$ is not affected by the bulk updates done so far---we
can use the weights at the time $\tree$ was constructed to find the points
$p_j$ satisfying~$w_j < B_1$. To compute $B_2$ we now identify a collection of
$\Oh(\log n)$ nodes $\node$ such that the sets $P(\node)$ contain exactly
the points $p_j$ with~$w_j < B_1$. This can be done by searching with $B_1$
in $\tree$.
At each of these nodes we compute
$\min_{p_j\in P(\node)} |p_j q|$ by point location in $\vd(P(\node))$,
and we set $B_2$ to be the minimum of the $\Oh(\log n)$ values computed
in this manner.

Both $B_1$ and $B_2$ are computed in $\Oh(\log^2 n)$ time---indeed, for both
we spend $\Oh(\log n)$ at each node along a path in $\tree$---so
the total query time (before applying the logarithmic method) is $\Oh(\log^2 n)$ time.

\emph{Bulk updates.}
A bulk update with value~$B$ is performed in $\Oh(1)$ time by setting
$B_{\max} := \max(B_{\max},B)$; no other action is needed.

\emph{Insertions.}
Insertions are handled using the logarithmic method. This increases the time for queries
and bulk updates to $\Oh(\log^3 n)$ and $\Oh(\log n)$, respectively. The amortized time
for insertions is $\Oh( T_B(n) \log n)$, where $T_B(n)$ is the time needed to build
a static structure on $n$ points. This can be done bottom-up in $\Oh(n\log^2 n)$ time:
At each node $\node$ we can construct the point-location data structure on the union~$U(\node)$
in $\Oh(|P(\node)| \log |P(\node)|)$ time~\cite{Snoeyink04PL}, and we can construct the Voronoi diagram in
the same amount of time~\cite{Fortune87}. (Before we can construct the point-location data structure
we first need to construct~$U(\node)$, but this can be done in $\Oh(|P(\node)| \log |P(\node)|)$
time by merging the unions from the two children of~$\node$.) We conclude that
the amortized time for insertions is~$\Oh(\log^3 n)$.


\section{On truly subcubic algorithms for 3-OPT: Missing proof}
\label{se:appendix-reduction}
\begin{lemma} \label{lemma:threeopt:to:negtriangle}
\ThreeOPTDetection can be reduced to \NegativeTriangle in time~$\Oh(n^2)$ while increasing the size of the graph and the largest weight by a constant factor.
\end{lemma}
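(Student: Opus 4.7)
The plan is to reduce \ThreeOPTDetection to \NegativeTriangle by showing that for each of the $O(1)$ reconnection patterns (signatures) that a 3-move can take, the gain of the move decomposes cleanly into three pairwise contributions, one per pair of removed-edge positions. This lets us encode each signature as an instance of \NegativeTriangle on a tripartite graph and call the oracle a constant number of times.

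First I would make the following structural observation. Fix any non-degenerate 3-move that removes edges $e_a,e_b,e_c$ from positions $a<b<c$ of the tour~$T$. The tour breaks into three arcs whose six endpoints are precisely the endpoints of the three removed edges (two per edge). In a non-degenerate reconnection, each of the three new edges joins two endpoints from \emph{different} removed edges; and since the result must be a single Hamilton cycle, a short case analysis shows that each of the three unordered pairs $\{a,b\},\{b,c\},\{a,c\}$ is used by exactly one new edge. Consequently, for every non-degenerate signature $\pi$ the gain of the 3-move has the form
\[
G_\pi(a,b,c) \;=\; w(e_a)+w(e_b)+w(e_c) \;-\; w\!\bigl(f^\pi_{ab}\bigr) - w\!\bigl(f^\pi_{bc}\bigr) - w\!\bigl(f^\pi_{ac}\bigr),
\]
where $f^\pi_{XY}$ is the new edge that $\pi$ places between the endpoints of $e_X$ and $e_Y$ it specifies.

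Next I would, for each signature $\pi$, construct a tripartite graph $H_\pi$ on $3n$ vertices $\{A_1,\ldots,A_n\}\cup\{B_1,\ldots,B_n\}\cup\{C_1,\ldots,C_n\}$, where position $x$ on the tour is represented by $A_x$, $B_x$, and $C_x$. For pairs that respect the order constraints I set
\[
W(A_a,B_b)\mydef w(f^\pi_{ab})-w(e_a),\quad W(B_b,C_c)\mydef w(f^\pi_{bc})-w(e_b),\quad W(A_a,C_c)\mydef w(f^\pi_{ac})-w(e_c),
\]
and I assign weight $+10M$ to every remaining edge, i.e. to order-violating cross-part edges and to all intra-part edges. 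A triangle that uses one vertex per part with $a<b<c$ then has total weight exactly $-G_\pi(a,b,c)$, while any other triangle has weight at least $9M>0$. Hence $H_\pi$ contains a negative triangle iff signature $\pi$ admits a strictly improving 3-move on $T$. Running the \NegativeTriangle oracle on each of the $O(1)$ graphs $H_\pi$ detects every improving non-degenerate 3-move; degenerate 3-moves (which reinsert one of the removed edges) are simply improving 2-moves and are enumerated directly in $O(n^2)$ time.

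Each $H_\pi$ has $3n$ vertices, $O(n^2)$ edges whose weights are $O(1)$ arithmetic combinations of original weights and hence have absolute value $O(M)$, and can be constructed in $O(n^2)$ time, so graph size and largest weight grow only by constant factors as required. The main obstacle I expect is the combinatorial lemma: that in every non-degenerate 3-OPT signature the three new edges partition into one $\{a,b\}$-edge, one $\{b,c\}$-edge, and one $\{a,c\}$-edge. Once this clean pairing is verified by inspecting the constantly many valid signatures, the rest of the argument is routine bookkeeping; the ordering conditions are enforced by the $+10M$ weights and the decomposition of $G_\pi$ into three pairwise terms is what makes the triangle encoding go through.
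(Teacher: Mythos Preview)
Your proposal is correct and follows essentially the same approach as the paper: both enumerate the $O(1)$ reconnection patterns of a proper 3-move, observe that for each pattern the gain decomposes into one term per unordered pair of removed-edge positions, and encode each pattern as a tripartite \NegativeTriangle instance on $3n$ vertices with large weights blocking ill-ordered or intra-part triangles (the paper then takes the disjoint union into a single graph rather than calling the oracle once per pattern). The only cosmetic differences are that the paper additionally pre-handles in $O(n^2)$ the case where two removed edges share an endpoint, and it states the non-adjacency constraint $i<j-1$, $j<k-1$ explicitly in the edge definitions; your construction absorbs those cases implicitly, which also works.
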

\begin{proof}
Consider an instance of \ThreeOPTImprovement, which is given by a complete graph~$G$ together with a tour~$T$ in~$G$ and a symmetric distance function~$d$. Number the vertices of~$G$ as~$v_1, \ldots, v_n$ in the order of~$T$. Let~$M$ be the largest absolute value of an edge weight. To simplify the notation that we will need, we first deal with two simple cases. In~$\Oh(n^2)$ time we check whether there is an improving \Twomove in~$G$. If so, we simply output a constant-size \yes-instance as the output of the reduction. In the remainder it suffices to look for a \Threemove that removes three edges and replaces them by three different edges. Secondly, we test whether there is an improving \Threemove where two of the removed edges share an endpoint. This can be done in~$\Oh(n^2)$ time: there are~$n$ possibilities for the shared endpoint, which determines the first two edges to leave the tour, and~$n$ options for the third edge that leaves the tour. Each option can be handled in constant time. In the remainder it therefore suffices to produce an input of \NegativeTriangle whose answer is \yes if and only if there is a \Threemove that removes three distinct edges that do not share any endpoint, and replaces them by three different edges. In the remainder of this proof, we refer to such a \Threemove as a \emph{proper \Threemove}.

To reduce the problem of finding a proper \Threemove to that of finding a negative-weighted triangle, we consider the different ways in which the three paths that are obtained from~$T$ by removing three edges, can be connected back into a Hamiltonian cycle of the graph by replacing them with different edges. Consider the graph on vertices~$a_0, a_1, b_0, b_1, c_0, c_1$ with edges~$\{a_1, b_0\}, \{b_1,c_0\}$, and~$\{c_0,a_1\}$, which represents an abstract tour on these vertices from which edges~$\{a_0, a_1\}$,~$\{b_0, b_1\}$, and~$\{c_0, c_1\}$ have been removed; see Figure~\ref{fi:characteristics}. The removals result in three gaps: the~$a$-gap (between~$a_0$ and~$a_1$), the~$b$-gap, and the $c$-gap. Each set of~$3$ edges that completes this graph into a cycle without inserting any of the removed edges~$\{a_0, a_1\}$,~$\{b_0, b_1\}$, or~$\{c_0, c_1\}$, can be characterized by~$6$ bits~$\ell(ab), r(ab), \linebreak[0] \ell(ac), r(ac), \ell(bc), r(bc) \in \{0,1\}$ such that the edges completing the graph into a cycle are $\{a_{\ell(ab)}, b_{r(ab)}\}, \linebreak[1] \{a_{\ell(ac)}, c_{r(ac)}\}$, and~$\{b_{\ell(bc)}, c_{r(bc)}\}$. The bit~$\ell(ab)$ specifies, for example, whether the edge connecting the $a$-gap to the $b$-gap attaches to the left side of the $a$-gap ($\ell(ab)=0$), or to the right side of the $a$-gap. The bit~$r(ab)$ specifies whether the connection between the $a$-gap and $b$-gap attaches to the left or right side of the $b$-gap, and so on. For each set of~$3$ edges that completes the graph into a cycle without re-inserting a removed edge, make a weighted 3-partite connected component with~$3n$ vertices~$\{x_i, y_i, z_i \mid i \in [n]\}$ and edge weights defined as follows:

\begin{figure}
\begin{center}
\includegraphics{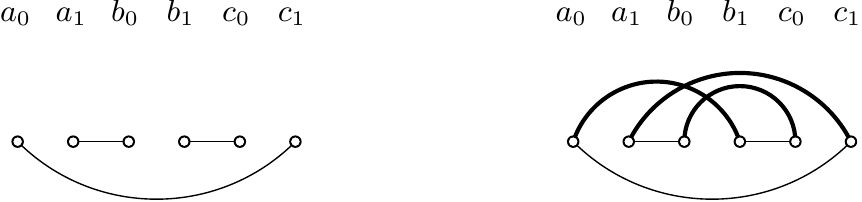}
\end{center}
\caption{Left: the $6$-vertex template graph with the $a$-gap, $b$-gap, and~$c$-gap. Right: the thick edges give one possibility for completing the graph into a cycle, with characteristic~$\ell(ab) = 0, r(ab) = 1, \ell(ac) = 1, r(ac) = 1, \ell(bc) = 0, r(bc) = 0$.}
\label{fi:characteristics}
\end{figure}

\begin{itemize}
	\item $w(\{x_i, y_j\}) = d(v_{i+\ell(ab)}, v_{j + r(ab)}) - d(v_i, v_{i+1})$ for~$1 \leq i < j - 1 \leq n$;
	\item $w(\{x_i, z_k\}) = d(v_{i+\ell(ac)}, v_{j + r(ac)}) - d(v_k, v_{k+1})$ for~$1 \leq i < k - 1 \leq n$;
	\item $w(\{y_j, z_k\}) = d(v_{j+\ell(bc)}, v_{k + r(bc)}) - d(v_j, v_{j+1})$ for~$1 \leq j < k - 1 \leq n$;
	\item the weight for the remaining pairs in the component is~$3M$.
\end{itemize}

Observe that by this definition, the weight of the triangle~$x_i, y_j, z_k$ for non-consecutive integers~$i<j<k$ is exactly the net weight change when removing the edges~$\{v_i, v_{i+1}\}$,~$\{v_j, v_{j+1}\}$, and~$\{v_k, v_{k+1}\}$ from the tour and replacing them as specified by the characteristic bits.

The weighted graph~$G'$ is the disjoint union of the connected components built for each characteristic. The weight of edges between different components is set to~$3M$.

\begin{claim}
The constructed instance of \NegativeTriangle has a triangle of negative edge-weight, if and only if the graph~$G$ allows an improving proper \Threemove.
\end{claim}
\begin{claimproof}
($\Rightarrow$) Assume that there exists an improving proper \Threemove for tour $T$ that removes the edges~$\{v_i, v_{i+1}\}, \{v_j, v_{j+1}\}$, and~$\{v_k, v_{k+1}\}$, producing tour~$T'$, and let~$i < j < k$. Since the endpoints of the removed edges are all distinct, we have~$i < j - 1$,~$j < k-1$, and therefore~$i < k-1$. Consider the~$3$ paths~$P_1,P_2,P_3$ that result from~$T$ by removing the three edges in their order along the original tour, such that~$P_1$ contains vertex~$v_1$. These paths are contained in tour~$T'$. To find the reconnection type corresponding to this move, replace each path~$P_i$ by a single edge. Relabeling the left and right endpoints of~$P_1, P_2, P_3$ to~$\{c_0, a_0\}$, and~$\{a_1, b_0\}$, and~$\{b_1, c_0\}$, respectively, we can now read off the reconnection type of the tour by seeing how the inserted edges of~$T'$ connect the relabeled vertices in the contracted graph. Consider the setting of the~$6$ bits~$\ell(ab), r(ab), \linebreak[0] \ell(ac), r(ac), \ell(bc), r(bc) \in \{0,1\}$ corresponding to this way of augmenting the six-vertex graph to a cycle. In the connected component corresponding to this choice of bits, the vertices~$\{x_i, y_j, z_k\}$ form a triangle. The total weight of this triangle is~$d(v_{i+\ell(ab)}, v_{j + r(ab)}) + d(v_{i+\ell(ac)}, v_{j + r(ac)}) + d(v_{j+\ell(bc)}, v_{k + r(bc)}) - d(v_i, v_{i+1}) - d(v_j, v_{j+1}) - d(v_k, v_{k+1})$. As the setting of the bits corresponds to the connection type of the \ThreeOPT move, this is exactly the sum of the weights of the newly introduced edges minus the weights of the removed edges. As the \ThreeOPT move gave a strict weight improvement, this value is negative and hence the vertices~$\{v_i, v_j, v_k\}$ from the specified component form a triangle of negative total weight.

($\Leftarrow$) Assume that the vertices $v_i,v_j,v_k$ span a triangle of negative edge-weight in $G'$. Since no weight is smaller than~$-M$, such a triangle cannot use a pair of weight~$3M$ and therefore consists of three vertices from a connected component that was added to~$G'$ on account of a specific reconnection pattern. Let~$i \leq j \leq k$. Since edges between vertices of the same letter also have weight~$3M$, as have edges going from larger indices to smaller ones, or between indices that differ at most one, we know that~$i < j - 1$ and~$j < k - 1$. Our weighting scheme ensures that removing the edges~$\{v_i, v_{i+1}\}, \{v_j, v_{j+1}\}, \{v_k, v_{k+1}\}$ and reconnecting the resulting pieces according to the reconnection pattern associated to the component, improves the weight of the tour by exactly the weight of triangle~$\{x_i, y_j, z_k\}$. Hence there is an improving \Threemove.
\end{claimproof}

The claim proves the correctness of the reduction. Since the number of characteristics is constant, the reduction can be done in~$\Oh(n^2)$ time and blows up the graph size and largest weight by only a constant factor.
\end{proof} 
\section{A fast $k$-OPT algorithm: missing proof} \label{se:appendix-kopt}

In this section we present an elementary reduction which shows that to find optimal an \kmove, it suffices to find a \kmove where the removed edges do not share any endpoints.

\begin{lemma} \label{lemma:subdividing}
For any~$k \geq 3$, an instance~$(G,T,d)$ of \kOPTOptimization can be reduced in time~$\Oh(n^2)$ to an instance~$(G',T',d')$, such that:
\begin{enumerate}
	\item $|V(G')| = 2 |V(G)|$,
	\item If the distances under~$d$ lie in the range~$[-M, \ldots, +M]$, then the distances under~$d'$ lie in the range~$[-2kM, \ldots, +M]$.
	\item Instance~$(G',T',d')$ has an optimal \kmove in which the removed edges do not share any endpoints.
	\item Given an optimal \kmove in~$(G',T',d')$, one can find an optimal \kmove in~$(G,T,d)$ in time~$\Oh(k)$.
\end{enumerate}
\end{lemma}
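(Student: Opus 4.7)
My plan is to construct $G'$ by subdividing each tour edge of $T$ with a new dummy vertex and assigning weights that make one half of each subdivided edge so attractive that no improving \kmove can afford to remove it. Formally, let $V(G') := V(G) \cup \{u_1,\ldots,u_n\}$, where the $u_i$ are fresh vertices, and replace each tour edge $\{v_i,v_{i+1}\}$ of $T$ by the length-two path $v_i, u_i, v_{i+1}$; the new tour is $T' = (v_1,u_1,v_2,u_2,\ldots,v_n,u_n,v_1)$, yielding $|V(G')|=2n$ and settling item~1. I define $d'(v_i,u_i) := -2kM$ (the ``sticky'' tour edges), $d'(u_i,v_{i+1}) := d(v_i,v_{i+1})$, and for any non-tour pair $\{x,y\}$ of $G'$ with $x \in \{v_a,u_a\}$, $y \in \{v_b,u_b\}$, $a \neq b$, set $d'(x,y) := d(v_a,v_b)$. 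Since the $d$-values lie in $[-M,M]$, every non-sticky weight also lies in $[-M,M]$, so the range of $d'$ is $[-2kM, +M]$ as required by item~2. The construction clearly runs in $\Oh(n^2)$ time.

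The heart of the proof is item~3: I claim that every improving \kmove in $(G',T',d')$ removes only edges of the form $\{u_i,v_{i+1}\}$. Suppose for contradiction that such a move removes $s \geq 1$ sticky edges. Each sticky edge contributes weight $-2kM$ to the removed total, each of the remaining $k-s$ removed edges contributes at most $M$, and each of the $k$ inserted edges (necessarily a non-tour edge, hence non-sticky) contributes at least $-M$. Hence the improvement $\sum w(\text{removed}) - \sum w(\text{inserted})$ is at most
\[
\bigl[s\cdot(-2kM) + (k-s)\cdot M\bigr] - \bigl[k\cdot(-M)\bigr] \;=\; -M\bigl(2k(s-1) + s\bigr) \;\leq\; -M,
\]
contradicting the assumption that the move is improving. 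Thus $s=0$, so all $k$ removed edges have the form $\{u_i,v_{i+1}\}$. As $u$- and $v$-vertices are disjoint and the indices $i$ are pairwise distinct, these $k$ edges share no endpoints, yielding item~3.

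For item~4, observe that the weight assignment induces a weight-preserving correspondence between \kmoves of $(G',T',d')$ that remove only edges of the form $\{u_i,v_{i+1}\}$ and \kmoves of $(G,T,d)$. A \kmove in $G'$ removing $\{u_{i_j},v_{i_j+1}\}$ for $j=1,\ldots,k$ and inserting edges between the endpoints of the resulting pieces corresponds to the \kmove in $G$ removing the original tour edges $\{v_{i_j},v_{i_j+1}\}$ and inserting the translated edges obtained by the substitution $u_a \mapsto v_a$; by construction, the total weights of both the removed and inserted edge sets match, so the improvements are equal. Therefore, given an optimal \kmove in $G'$, an optimal \kmove in $G$ is obtained in $\Oh(k)$ time via the $u_a \mapsto v_a$ substitution on the endpoints of the $k$ removed and $k$ inserted edges. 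The main subtlety is the well-definedness of this translation when a reconnection uses ``mixed'' inserted edges such as $\{u_a,u_b\}$ or $\{v_a,u_b\}$ rather than $\{v_a,v_b\}$; this is handled uniformly because $d'(x,y)$ depends only on the indices of the original vertices $v_a, v_b$, not on the $u$/$v$ type of $x$ and $y$.
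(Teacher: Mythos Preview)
Your construction and core argument are essentially the paper's (your $v_i,u_i$ are their $a_i,b_i$, the sticky edges of weight $-2kM$ are the same, and the counting argument ruling out their removal is identical). Two small points where the paper is more careful: first, you prove item~3 by showing every \emph{improving} \kmove avoids sticky edges, but the statement is about an \emph{optimal} \kmove, so the case $\opt'=0$ is not covered---the paper instead translates an optimal \kmove of $G$ forward into $G'$ (getting a special move of profit $\opt$) and then proves $\opt'\le\opt$, which handles all cases uniformly; second, that forward direction $G\to G'$ is exactly what you need (but only assert as a ``correspondence'') in item~4 to conclude that the back-translated move is optimal in $G$, and when $\opt'=0$ the given optimal move in $G'$ need not be of your special form, so the substitution $u_a\mapsto v_a$ is not directly applicable and one should simply output the trivial move.
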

\begin{proof}
Consider an instance of \kOPTOptimization, which is given by a complete graph~$G$ together with a tour~$T$ in~$G$ and a symmetric distance function~$d$. The goal is to find a \kmove that improves tour~$T$ the most. Number the vertices of~$G$ as~$v_1, \ldots, v_n$ in the order of~$T$. Let~$M$ be the largest absolute value of an edge weight. Intuitively, the graph~$G'$ is obtained by subdividing all the edges on the current tour with a new vertex. One half of each subdivided edge will have very small weight (so that it is never beneficial to remove it from the tour), whereas the other half has weight equal to the weight of the original undivided edge. This will ensure that an optimal \kmove in the resulting instance removes only disjoint edges from the tour.

Formally, the instance~$(G',T',d')$ is produced as follows. Graph~$G'$ consists of vertices $a_1, b_1, \linebreak[1] a_2, b_2, \ldots, a_n, b_n$ and the initial tour~$T'$ visits the vertices in this order. The (symmetric) distances $d'(\cdot,\cdot)$ between these vertices are defined as follows:
\begin{itemize}
	\item $d'(a_i,a_j)= d'(b_i, b_j) = d(v_i,v_j)$ for~$1 \leq i, j \leq n$;
	\item $d'(a_i,b_i)=-2kM$ for $1\le i\le n$;
	\item $d'(a_i,b_j)=d(v_i,v_j)$ for~$1 \leq i,j \leq n$ with~$i \neq j$.
\end{itemize}

\begin{figure}
\begin{center}
\includegraphics{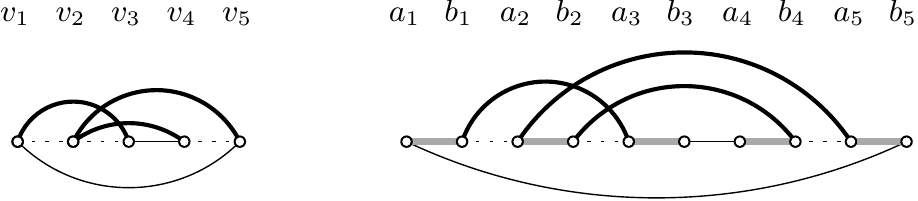}
\end{center}
\caption{Illustration for the reduction of Lemma~\ref{lemma:subdividing}. Left: illustration of an instance~$(G,T,d)$ with five vertices where~$T = (v_1, v_2, v_3, v_4, v_5)$. The \Threemove that removes edges~$E = \{ \{v_1, v_2\}, \{v_2, v_3\}, \{v_4, v_5\} \}$ is shown. Removed edges~$E$ are dotted, inserted edges~$F$ are thick. Right: the instance~$(G',T',d')$ resulting from the reduction by subdividing edges. The edges with very small weight are gray; they cannot be removed by an improving \kmove. The \Threemove for~$T$ has a natural analogue in~$T'$, where the removed edges share no endpoints.}
\label{fi:subdividing}
\end{figure}

The first two properties of the reduction follow immediately from these definitions. Let us consider how a \kmove that improves~$T$ by removing the edges~$E$ and adding the edges~$F$, translates into a \kmove improving~$T'$. We may assume without loss of generality that~$E \cap F = \emptyset$, since~$E \Delta F$ is also a valid \kmove with the same effect. Removing the edges~$E$ from~$T$ splits the tour~$T$ into~$|E|$ paths. For each edge~$\{v_i,v_{i+1}\} \in E$ (modulo~$n$), we remove the edge~$\{b_i,a_{i+1}\}$ from~$T'$ to split it into~$|E|$ paths; note that~$d'(b_i, a_{i+1}) = d(v_i, v_{i+1})$. Let~$E'$ denote the corresponding set of removed edges of~$T'$. Every edge in~$F$ connects two endpoints of paths of~$T - E$. If~$v_i$ is an endpoint of a path in~$T - E$, then either~$a_i$ or~$b_i$ is an endpoint of a path in~$T' - E'$. (In the special case that both tour edges incident on~$v_i$ are in~$E$, there is a path in~$T' - E'$ consisting of only~$a_i$ and~$b_i$ and both vertices are endpoints.) For each inserted edge~$F$ between endpoints~$p,q$ of paths in~$T - E$, insert into~$F'$ the edge between the corresponding endpoints of the paths in~$T' - E'$, which has the same weight (see Figure~\ref{fi:subdividing}). It follows that replacing~$E'$ by~$F'$ changes the weight of tour~$T'$ in the same way as replacing~$E$ by~$F$ does for tour~$T$. Let~$\opt$ denote the optimal cost improvement achieved by a \kmove for~$T$, and let~$\opt'$ denote the optimal improvement for~$T'$. Applying this transformation to an optimal \kmove for~$T$ shows that $\opt' \geq \opt$ and yields a \kmove for~$T'$ with profit~$\opt$ for which the removed edges share no endpoints. To prove the third property, it suffices to show that~$\opt' \leq \opt$, implying that such a move is also optimal for~$T'$. This will be implied by our proof of the fourth property, which we now present.

Consider a tour~$T'_1$ in~$G'$ obtained by applying an optimal \kmove to~$T'$. We claim that tour~$T'_1$ contains all edges~$\{a_i,b_i\}$ for~$1 \leq i \leq n$. To see this, observe that those are the only edges of weight~$- 2kM$, and the other edges have weight at least~$-M$. If one of these edges~$\{a_i,b_i\}$ disappears from the tour (increasing its weight by~$2kM$), then at best the other~$k-1$ removed edges of weight~$\leq M$ decrease the weight by~$(k-1)M$, causing a net weight increase of~$(k+1)M$ due to removals. Inserting~$k$ distinct new edges into the tour decreases the weight by at most~$kM$, since there are no new edges of weight~$-2kM$ to introduce and the smallest weight under~$d$ is at least~$-M$. Hence any \kmove that removes an edge of weight~$-2kM$ is not optimal since it increases the weight of the tour; the empty \kmove that performs no changes is better. It follows that~$T'_1$ contains the~$n$ edges~$\{a_i,b_i\}$ for~$1 \leq i \leq n$. Since vertices~$a_i$ and~$b_i$ have the same distances to the rest of the vertices for all~$i$, in this tour~$T'_1$ we can ``contract'' all edges~$\{a_i,b_i\}$ for~$1 \leq i \leq n$ to obtain a tour~$T_1$ in~$G$ whose cost difference with~$T$ is the same as the difference between~$T'_1$ and~$T$, and which can also be obtained by a \kmove. It follows that the optimal weight improvement by applying a \kmove to~$T'$ is bounded by the optimal weight improvement by applying a \kmove to~$T$, showing that~$\opt' \leq \opt$ (and therefore~$\opt = \opt'$) and proving the third property. The \kmove used to obtain~$T_1$ can easily be extracted from the \kmove used to obtain~$T'_1$. Any removed or inserted edge in the \kmove producing~$T'_1$ connects two vertices with distinct indices~$i,j$ in the range~$1 \ldots n$; the \kmove to produce~$T_1$ removes or inserts the corresponding edge~$\{v_i,v_j\}$. This completes the proof of the fourth property.
\end{proof}


\section{Faster 2-OPT: Additional details}

\subsection{The repeated case}

In this section we provide the proof of Theorem~\ref{theorem:repeated:twoopt}.

{
\renewcommand{\thetheorem}{\ref{theorem:repeated:twoopt}}
\begin{theorem}
After $\Oh(n^2)$ preprocessing and using $\Oh(n^2)$ storage we can repeatedly solve
the \TwoOptOPT problem in $\Oh(n\log n)$ time per iteration.
\end{theorem}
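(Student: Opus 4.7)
The plan is to carry out in detail the outline sketched in the excerpt: maintain for each edge $e_i$ of the current tour a balanced binary search tree $\tree(e_i)$ whose augmented aggregate information allows one to find in $\Oh(\log n)$ time the other tour edge $e_\ell$ that together with $e_i$ yields the most profitable \Twomove. Fix an arbitrary orientation of the tour and write each edge as an ordered pair $(u,v)$ accordingly. The unique valid \Twomove that replaces $e_i=(u_i,v_i)$ and $e_j=(u_j,v_j)$ adds $\{u_i,u_j\}$ and $\{v_i,v_j\}$, so it changes the tour cost by
$\Delta(e_i,e_j) = d(u_i,u_j)+d(v_i,v_j)-d(u_i,v_i)-d(u_j,v_j)$.
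The tree $\tree(e_i)$ stores the other $n-1$ tour edges in tour order, with leaf $e_\ell$ holding $\Delta(e_i,e_\ell)$ and each internal node caching the minimum of these values over its subtree.

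For preprocessing, for each $e_i$ we compute the $n-1$ values $\Delta(e_i,\cdot)$ in $\Oh(n)$ time and build $\tree(e_i)$ bottom up from the sorted leaves in $\Oh(n)$ time, for a total of $\Oh(n^2)$ preprocessing and $\Oh(n^2)$ storage. An iteration then proceeds in three phases: (i) query each $\tree(e_i)$ in $\Oh(\log n)$ time via a root-to-leaf descent guided by the cached minima, giving $\Oh(n\log n)$ time to identify the overall best \Twomove, which removes some pair $e_i,e_j$ and introduces two new edges $e_{i'},e_{j'}$; (ii) in every surviving tree $\tree(e_\ell)$ with $\ell\neq i,j$, delete the entries for $e_i$ and $e_j$ and insert new entries for $e_{i'}$ and $e_{j'}$, each operation in $\Oh(\log n)$; and (iii) discard $\tree(e_i)$ and $\tree(e_j)$ and build $\tree(e_{i'})$ and $\tree(e_{j'})$ bottom up in $\Oh(n)$ time each. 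Summing gives $\Oh(n\log n)$ per iteration.

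The main technical point is that applying a \Twomove also reverses a contiguous segment of the tour: the ordered pair representation $(u_\ell,v_\ell)$ flips for every edge $e_\ell$ inside that segment, so for a surviving pair $(e_\ell,e_m)$ with exactly one of them inside the reversed segment, the pairing in the \Twomove formula switches from $\{u_\ell,u_m\},\{v_\ell,v_m\}$ to $\{u_\ell,v_m\},\{v_\ell,u_m\}$ and many stored $\Delta$-values become stale. I would resolve this by augmenting each $\tree(e_\ell)$ in the Chrobak \etal\ style: at every leaf store both possible values of $\Delta$, at every internal node cache both subtree minima (one per effective orientation), and maintain a reversal bit per node whose cumulative XOR along the search path selects which of the two cached quantities is currently in force. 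A segment reversal is then implemented by splitting $\tree(e_\ell)$ at the two endpoints of the segment, toggling a single reversal bit at the root of the split-out subtree, and reassembling, all in $\Oh(\log n)$ time, while insertions, deletions, queries, and the min-aggregation continue to operate correctly under this augmentation. Verifying that the cached bi-orientation minima really track the currently valid $\Delta$-values after every reversal, insertion, and deletion is the main invariant the proof must establish.
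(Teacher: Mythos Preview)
Your proposal is correct and follows essentially the same approach as the paper: one augmented balanced tree $\tree(e)$ per tour edge, storing both orientation-variants of the $\Delta$-value (the paper calls these $\mc(\node)$ and $\mrc(\node)$), with reversal bits in the Chrobak \etal\ style so that a \Twomove on a surviving tree $\tree(f)$ is implemented by split at the positions of the removed edges, toggle the reversal bit on the middle piece, insert the two new edges, and concatenate, all in $\Oh(\log n)$. The only cosmetic difference is that the paper stores at each node a pointer to the edge realizing the minimum, so reading the root gives the best partner in $\Oh(1)$ rather than your $\Oh(\log n)$ descent; this does not affect the overall $\Oh(n\log n)$ bound.
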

\addtocounter{theorem}{-1}
}

\begin{proof}
Let $T$ be the current tour, which is either the initial tour or the tour resulting from
the previous operation. Note that a \Twomove not only replaces a pair of edges by another
pair, but that it also reverses the subpath connecting these edges. To avoid spending time
on each edge of the subpath when we perform a \Twomove, we borrow an idea from
Chrobak~\etal~\cite{ChrobakSK99} that was also used by Fredman~\etal~\cite{FredmanJMO95}:
we store the tour in a tree, and with each node~$\node$
we store a Boolean $\rev(\node)$ indicating whether the subpath represented by the
subtree $\tree_\node$ rooted at $\node$ should be reversed.
(Fredman~\etal~\cite{FredmanJMO95} also use this idea to speed up \TwoOPT. However, their goal is only to be able to perform a \Twomove efficiently, and so they only maintain one such tree for the whole tour. Our goal is to find a \Twomove efficiently.)
In fact (and unlike Fredman~\etal) we will maintain $n$ such trees---for
each edge $e$ in the tour we maintain a tree $\tree(e)$ on the path
$\mypath(e) := T\setminus \{e\}$---and we augment these trees with extra information,
so that we can quickly find the best edge for $e$ to perform a \Twomove with.
The tree $\tree(e)$ is defined as follows.
\medskip

Fix an arbitrary orientation for $e$. This induces an orientation on the tour~$T$
and, hence, on the path $\mypath(e)$. The tree $\tree(e)$ is a red-black tree storing
the edges from $\mypath(e)$ in its leaves and storing a Boolean
$\rev(\node)$ at each node~$\node$. Initially
the order of the edges corresponds to the order along $\mypath(e)$
and all Booleans $\rev(\node)$ are set to $\false$. Later the order of
the edges along $\mypath(e)$ may no longer correspond to the
order of the leaves, but the correct order can always be restored
by ``pushing down'' the reversals in a top-down manner.
(To push down a reversal for a node $\node$ with $\rev(\node)=\true$ we
swap the left and right subtree of $\node$, set $\rev(\node)$ to
$\false$, and negate the Booleans $\rev(\cdot)$ of the children of $\node$.
This operation is called \emph{clearing} the node by Fredman~\etal
Note that swapping two subtrees of a node does not influence the red-black properties.)
So far our tree is essentially the same as that of Chrobak~\etal and Fredman~\etal
We now augment $\tree(e)$ as follows.

Let the \emph{local orientation} of an edge $e'$ in $\mypath(e)$ at the leaf~$\node$ where it is stored
be defined as follows: if $\rev(\node)=\false$ then the local orientation is the orientation along $\mypath(e)$
when $\tree(e)$ was constructed (that is, before any reversals took place), otherwise it is the opposite orientation.
The local orientation of $e'$ at an internal node $\node$ with
$e'$ in its subtree is defined recursively: if $\rev(\node)=\false$
then the local orientation of $e'$ at $\node$ is equal to the
local orientation at the relevant child of~$\node$, and if
$\rev(\node)=\true$ then it is the reverse of that orientation.
Note that the local orientation of $e'$ at the root of $\tree(e)$
is equal to the current orientation of $e'$ in $\mypath(e)$.
We store the following extra information at each node $\node$:
\begin{itemize}
\item A value $\mc(\node)$, which is defined as the minimum
      over all edges $e'$ in $\tree_\node$ of the cost of the \Twomove defined
      by $e$ and $e'$ for the local orientation of $e'$ at $\node$.
      We also store a pointer to the edge $e'$ defining the minimum.
\item A value $\mrc(\node)$ (with the corresponding pointer)
      which is defined similarly as $\mc(\node)$,
      except that we consider the reverse of the local orientations.
\end{itemize}
Note that if $\node_1$ and $\node_2$ are the two children of $\node$ then
\begin{equation} \label{eq:tree}
\mc(\node) \ = \ \left\{ \begin{array}{ll}
                    \min (\mc(\node_1),\mc(\node_2))  & \mbox{if $\rev(\node)=\false$} \\
                    \min (\mrc(\node_1),\mrc(\node_2)) & \mbox{if $\rev(\node)=\true$}.
                    \end{array}
             \right.
\end{equation}
Similarly, $\mrc(\node)$ can be computed in $\Oh(1)$ time from the information
at $\node$, $\node_1$, and~$\node_2$. Note that when $\rev(\node)$ is negated,
we can just swap the values of $\mc(\node)$ and $\mrc(\node)$
and propagate the change upward.
For each edge $e'$ we also maintain, for each tree~$\tree(e)$,
a pointer to the leaf where $e'$ is stored.
Next we show how to use the trees $\tree(e)$ to perform a \TwoOPT iteration in near-linear time.
\medskip

Finding the best \Twomove in $\Oh(n)$ time is easy: we simply go over all trees $\tree(e)$ to find the
one minimizing $\mc(\myroot(\tree(e))$. Let $e'$ be the edge
defining this value. We now have to perform a \Twomove on $e,e'$ (assuming $\mc(\myroot(\tree(e))$
is negative, that is, that the \Twomove actually reduces the cost of the tour).
Performing the \Twomove is done as follows. We first walk from the leaf storing
$e'$ back up to the root, to determine the current orientation of~$e'$. With that information we
can compute the edges $\tilde{e}$ and $\tilde{e}'$ that should replace $e$ and $e'$.
We destroy the trees $\tree(e)$ and $\tree(e')$, and build trees $\tree(\tilde{e})$
and $\tree(\tilde{e}')$ from scratch. The latter can be done in $\Oh(n)$ time
after constructing the path $\mypath(\tilde{e})$ and $\mypath(\tilde{e}')$,
which we can do in $\Oh(n)$ time. It remains to update the other trees. In the remainder of the proof we show how this can
be done in $\Oh(\log n)$ time per tree, resulting in $\Oh(n\log n)$ time in total for a \Twomove.

We show how to update a tree $\tree(f)$ in logarithmic time
when a \Twomove with edges $e,e'$ is performed; see also Fig.~\ref{fi:PathTree}.
\begin{figure}
\begin{center}
\includegraphics{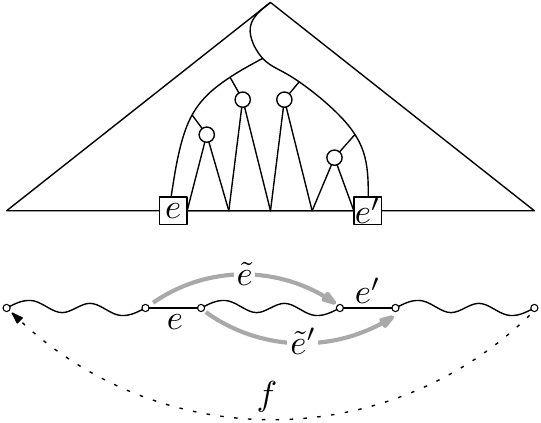}
\end{center}
\caption{Situation when a \Twomove with edges $e,e'$ has to be performed on a tree $\tree(f)$.
         When all nodes on the search paths to $e$ and $e'$ are cleared, the subtrees in between
         the search paths together represent the subpath from $e$ to $e'$ along $\mypath(f)$.}
\label{fi:PathTree}
\end{figure}
Note that rotations in~$\tree(f)$ can still be done in $\Oh(1)$ time after clearing
the two nodes on which the rotation is performed. Thus standard operations on
augmented red-black trees can still be performed in logarithmic
time. These operations include insertions and deletions, but also splits and
concatenations. In a split operation in a normal red-black tree one is given
a value~$X$, and the goal is to split the tree into two new trees: one containing
the elements smaller than $X$, and one containing the elements larger than~$X$.
We will need to split $\tree(f)$, given an edge $e\in \mypath(f)$, into two
trees: one for the part of $\mypath(f)$ before~$e$, and one for the part starting at~$e$.
This is possible in the usual way, provided we first clear all nodes on the path
from the root of $\tree(f)$ to the leaf containing~$e$. Similarly, concatenating
two trees---the reverse operating from splitting---can be done in $\Oh(\log n)$ time.
See also the paper by Chrobak~\etal~\cite{ChrobakSK99}, who describe these operations
(for AVL-trees) and without the extra fields $\mc(\cdot)$ and $\mrc(\cdot)$.
We can now update $\tree(f)$ (to reflect a \Twomove where edges $e$ and $e'$ are replaced
by new edges $\tilde{e}$ and $\tilde{e}'$) as follows.

We first split $\tree(f)$ into two subtrees, a tree $\tree_1$ for the
subpath of $\mypath(f)$ before $e$, and a tree $\tree_2$ for the subpath
starting at~$e$. The latter tree is then split further into a tree $\tree_{2,1}$
for the subpath from $e$ to $e'$, and a tree~$\tree_{2,2}$ for the
subpath behind~$e'$.
We then delete  $e$ and $e'$ from $\tree_{2,1}$, reverse the subpath in between them
by negating the Boolean $\rev(\cdot)$ at the root of $\tree_{2,1}$, and
insert $\tilde{e}$ as first edge of the subpath and $\tilde{e}'$ as last edge.
We then concatenate the three subtrees again to obtain the new tree $\tree(f)$.

We conclude that each tree $\tree(f)$ can be updated in $\Oh(\log n)$
time after a \Twomove.
\end{proof}

\subsection{Repeated 3-OPT}
\label{subse:appendix-repeated-3opt}
The approach above can also be used to speed up \ThreeOPT computations in the repeated
setting. To this end we maintain a data structure $\ds(e,e')$ for each pair $e,e'$
of edges in the tour, which allows us to quickly find the edge $e''$ that gives
the best \Threemove with $e,e'$. This data structure, which is very similar to the one
for \TwoOPT, is defined as follows.
\medskip

Let $e,e'$ be a pair of edges from the current tour.
To define $\ds(e,e')$ it is convenient to consider the moment at which $\ds(e,e')$ was created,
which is the first moment $e$ and $e'$ both appear in the tour. Let $T_{\mathrm{init}}$
be this initial tour (for the pair $e,e'$).
Fix an orientation for the edge~$e$. This orientation determines the orientation
of all other edges in $T_{\mathrm{init}}$, including the edge~$e'$. We call this orientation of~$e'$
its \emph{initial orientation}. Note that when the tour changes due to a \Threemove,
the orientation of $e'$ may change: instead of having the oriented edges $e,e'$ in the tour
we may have $e,\myrev(e')$ in the tour, where $\myrev(e')$ is the reverse of the edge~$e'$.
To deal with this, we will actually maintain two data structures for the pair of (undirected)
edges: a tree $\tree(e,e')$ and a tree $\tree(e,\myrev(e'))$.
Moreover, we will maintain a Boolean indicating whether the current tour uses~$e'$ or~$\myrev(e')$.
It is easily checked that these Booleans can be maintained without affecting the overall time bound.

Now consider $\tree(e,e')$; the tree $\tree(e,\myrev(e'))$ is similar. Note that the orientations
of $e$ and~$e'$ are fixed. Hence, if we consider a third edge $e''$ in the current tour~$T$ and we know
the orientation of $e''$, then we also know which are the valid \Threemoves for the triple~$e,e',e''$;
see Fig.~\ref{fi:3-change}.
\begin{figure}
\begin{center}
\makebox[\linewidth][c]{
\includegraphics{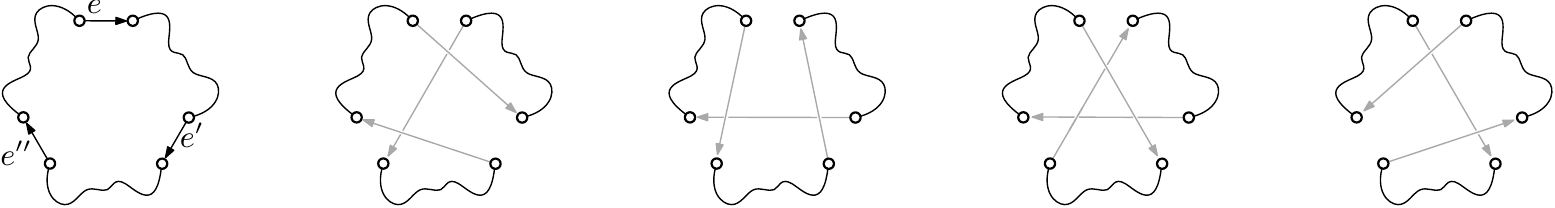}
}
\end{center}
\caption{A triple $e,e',e''$ in the current tour (left) has at most four different \Threemoves.
(``At most'' because if one or more of the subpaths between these edges are empty,
then some of these \Threemoves will degenerate into \Twomoves.)}
\label{fi:3-change}
\end{figure}
Thus we can define the tree $\tree(e,e')$ in a similar fashion we defined the tree $\tree(e)$
in the \TwoOPT setting. To this end, let $\mypath(e) := T\setminus \{e\}$ be the path resulting
from removing~$e$ from the tour~$T$. 
As before, we store $\mypath(e)$ in a red-black tree with Booleans $\rev(\node)$ at each node~$\node$,
and we define the local orientation of an edge $e''$ at a given node $\node$, where $\node$
must be such that $e''$ is stored in the subtree rooted at~$\node$.
We augment the nodes of $\tree(e,e')$ with the following extra information.
\begin{itemize}
\item A value $\mc(\node)$, which is defined as the minimum
      over all edges $e''\neq e'$ in $\tree_\node$ of the minimum cost of a valid \Threemove defined
      by $e,e',e'''$ for the local orientation of $e'$ at $\node$ and the fixed orientations of~$e$ and~$e'$.
      (As mentioned, for each $e''$ there are up to four types of valid \Threemoves.)
      We also store a pointer to the edge $e''$ defining the minimum
      and the type of the \Threemove.
\item A value $\mrc(\node)$ (with the corresponding pointer and type)
      which is defined similarly as $\mc(\node)$,
      except that we consider the reverse of the local orientations.
\end{itemize}
Note that the recurrence relation~(\ref{eq:tree}) still holds.

An iteration of the repeated \ThreeOPT algorithm now proceeds as follows.
For each pair of edges~$e,e'$ from the current tour we find the cheapest \Threemove
involving $e,e'$ by considering the relevant tree---either
$\tree(e,e')$ or $\tree(e,\myrev(e'))$, depending on the orientation of $e'$
in the current tour. Recall that the cheapest \Threemove is stored at
the root of the tree, so it can be found in $O(1)$ time. This gives
us $O(n^2)$ \Threemoves to consider---one for each pair~$e,e'$.
If the best of these \Threemoves has negative cost, 
we perform that \Threemove and update our data structures. The latter is
done as follows.

Let $e,e',e''$ be the old edges in the tour and let~$\tilde{e},\tilde{e}',\tilde{e}''$
be the new edges that replace them. We first destroy all data structures defined by any
of the old edges~$e,e',e''$, that is, all data structures $\ds(f,f')$ such that
$\{f,f'\} \cap \{e,e',e''\} \neq \emptyset$. Next we build (from scratch) 
all data structures defined by the new edges~$\tilde{e},\tilde{e}',\tilde{e}''$ . 
Since any edge is involved in $n-1$ pairs, the total number of data structures we
destroy and create is~$O(n)$. Since building a data structure can be done in
$O(n)$ time, this takes $O(n^2)$ time in total.

Ir remains to update the trees $\tree(f,f')$ and
$\tree(f,\myrev(f'))$ for $\{f,f'\} \cap \{e,e',e''\} = \emptyset$.
This can be done similarly to the \TwoOPT case.
More precisely, to update $\tree(f,f')$ we proceed as follows. First we split $\tree(f,f')$
into four subtrees by deleting the edges $e,e',e''$. Then we reverse one or
more of the resulting subpaths by negating the Boolean $\rev(\cdot)$ at the root 
of the corresponding subtree; which paths have to be reversed depends on the
specific type of \Threemove we have to perform. Finally, we insert the
new edges~$\tilde{e},\tilde{e}',\tilde{e}''$ into the relevant subtrees,
and we concatenate all subtrees to form the new tree $\tree(f,f')$.
Thus updating the tree$\tree(f,f')$---and, similarly,
updating $\tree(f,\myrev(f'))$---takes $O(\log n)$ time. Since we have to do
this for $O(n^2)$ pairs $f,f'$ we spend $O(n^2\log n)$ time in total.

\subsection{The planar case}
\label{subse:appendix-planar-case}
We now turn our attention to the planar setting. (Note that we do not consider the repeated
version of the problem, but the single-shot version.)
We focus on the problem of detecting any
\Twomove or \Threemove that lowers the cost of the tour, although similar results are possible
for the finding the best change.
{
\renewcommand{\thetheorem}{\ref{thm:twodet:threedet:planar}}
\begin{theorem}
For any fixed $\eps>0$,
\TwoOptDET in the plane can be solved in $\Oh(n^{8/5+\eps})$ time,
and \ThreeOptDET in the plane can be solved in $\Oh(n^{80/31+\eps})$ expected time.
\end{theorem}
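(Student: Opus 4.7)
The plan is to reformulate both problems as semi-algebraic range searching and then apply the standard space-time tradeoffs. For \TwoOptDET, the key observation is that the 2-move replacing tour edges $(p_i,p_{i+1})$ and $(p_j,p_{j+1})$ by the two reconnecting edges is improving iff
\[
|p_ip_{i+1}|+|p_jp_{j+1}|-|p_ip_j|-|p_{i+1}p_{j+1}|>0.
\]
I would identify each tour edge with a point in $\Reals^4$ formed by concatenating the coordinates of its two endpoints. Then for each fixed query edge $e_i$, the above inequality defines a semi-algebraic set of constant description complexity in the $4$-dimensional coordinate vector of the data edge $e_j$. Plugging this into a semi-algebraic range emptiness data structure in $\Reals^4$ (which, after $\Oh(s^{1+\eps})$ preprocessing for any $s\in[n,n^4]$, achieves query time $\Oh((n/s^{1/4})\,n^{\eps})$) and balancing preprocessing against $n$ queries at $s\approx n^{8/5}$ yields a total runtime of $\Oh(n^{8/5+\eps})$.

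For \ThreeOptDET the strategy is analogous but requires a multi-level structure because each move involves three edges. There are $\Oh(1)$ possible signatures of a 3-move, and for each signature the improvement condition is an inequality in at most six Euclidean distances determined by the three tour edges involved. I would handle each signature separately and, for a fixed signature, design a multi-level semi-algebraic range searching data structure: an outer level indexes single tour edges and an inner level indexes pairs of tour edges (regarded as points in $\Reals^{8}$), with the remaining mixed distance terms threaded through the levels of the range searching machinery. Randomized polynomial partitioning, which is the source of the ``expected'' qualifier, is used to build the structures. After optimizing the storage at each level against the query cost across the outer iterations, one arrives at the claimed expected runtime of $\Oh(n^{80/31+\eps})$.

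The main obstacle is the \ThreeOptDET case: one must verify that the improvement inequality for every 3-move signature can be decomposed into constant-complexity semi-algebraic predicates that fit a Matou\v{s}ek-style multi-level data structure, and then carry out the delicate parameter optimization across the levels to reach the precise exponent $80/31$. In contrast, for \TwoOptDET both the semi-algebraic formulation and the balancing are direct, and only the one-shot tradeoff calculation above is needed.
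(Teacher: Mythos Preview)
Your \TwoOptDET argument is correct and matches the paper exactly: lift each tour edge to a point in $\Reals^4$, query with a semi-algebraic range for each edge, and balance the standard space--time tradeoff at $s\approx n^{8/5}$.

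For \ThreeOptDET your high-level plan is right---handle each signature separately, store the $n'=\Theta(n^2)$ ordered pairs $(e_i,e_j)$ as points in $\Reals^8$, and query once per third edge $e_k$---and this is also what the paper does (it is not a genuine multi-level structure: there is a single semi-algebraic range structure on $n'$ points, plus a one-dimensional range restriction to enforce $j<k$). However, your outline is missing the specific observation that produces the exponent $80/31$. The data points lie in $\Reals^8$, so the linear-space extreme of the tradeoff has query cost $\Oh((n')^{7/8+\eps})$; but the \emph{query} edge $e_k$ contributes only four real parameters, so in the dual formulation one does point location in an arrangement of $n'$ constant-degree surfaces in $\Reals^4$, costing only $\Oh((n')^{4+\eps})$ preprocessing for logarithmic queries. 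Interpolating between these two extremes gives preprocessing $\Oh(n'm^{3+\eps})$ and query $\Oh((n'/m)^{7/8+\eps})$; with $n'=n^2$ and $n$ queries, the balance at $m=n^{6/31}$ yields $\Oh(n^{80/31+\eps})$. If instead you used the generic $\Reals^8$ tradeoff (dual side costing $(n')^{8+\eps}$), the same balancing gives only $\Oh(n^{8/3+\eps})$, which is strictly worse. So the missing ingredient is not the decomposition into levels but the asymmetry between the $8$-dimensional primal and the $4$-dimensional dual; without naming it, you cannot reach the stated bound.
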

\addtocounter{theorem}{-1}
}

\paragraph{2-OPT.}
Suppose we are given a tour $T$ on a planar point set $P := \{p_0,p_1,\ldots,p_{n-1}\}$,
where we assume without loss of generality that the points are numbered in order along~$T$.
The idea is to preprocess $P$ such that we can answer the following queries:
given a query edge $p_i p_{i+1}$ of $T$, find an edge $p_j p_{j+1}$ in $T$
such that performing a \Twomove on $e,e'$ lowers the cost of~$T$ (if such an edge exists).
In other words, we want to find an edge $p_j p_{j+1}$ such that
\begin{equation}
|p_i p_j| + |p_{i+1} p_{j+1}| <  |p_i p_{i+1}| + |p_j p_{j+1}|.   \label{eq:2change}
\end{equation}
To answer these queries we map every edge $p_j p_{j+1}$ to a point
$q_j := (x(p_j),y(p_j),x(p_{j+1}),y(p_{j+1}))$ in $\Reals^4$,
and we preprocess the resulting set of points in $\Reals^4$ for range queries
with semi-algebraic sets~\cite{ams-rssasII-13}. Given a query edge $p_i p_{i+1}$
we define a range $Q_i\subset \Reals^4$ as
\[
Q_i := \ \{ \ (a_1,a_2,a_3,a_4) : \ |p_i \; (a_1,a_2)| + |p_{i+1} \; (a_3,a_4)| <  |p_i p_{i+1}| + |(a_1,a_2) (a_3,a_4)| \ \}.
\]
Thus $p_j p_{j+1}$ satisfies (\ref{eq:2change}) if and only if $q_j \in Q_i$.
We can therefore find an edge $p_j p_{j+1}$
satisfying (\ref{eq:2change}) by performing a query with the range~$Q_i$,
which is a semi-algebraic set.
In $\Reals^4$, semi-algebraic range-searching queries can be answered in
$\Oh(n^{3/4+\eps})$ time after $\Oh(n^{1+\eps})$ preprocessing~\cite{am-rssas-94,k-atuvd-04}.
Alternatively, we can ``dualize'' the approach, by mapping each edge $p_j p_{j+1}$
to a surface in $\Reals^4$ and mapping the query $p_i p_{i+1}$ to a point~$q_i$.
By performing point location with $q_i$ in the arrangement defined by these surfaces
we can then answer the queries. This takes $\Oh(\log n)$ time after $\Oh(n^{4+\eps})$
preprocessing~\cite{k-atuvd-04}. By combining these two solutions in a standard manner, we can obtain a
trade-off between preprocessing and query time---see e.g.~\cite{m-rsehc-93} and also below,
where we give some more details for the somewhat more complicated case of \ThreeOPT.
In particular, we can obtain
$\Oh(n^{3/5+\eps})$ query time after $\Oh(n^{8/5+\eps})$ preprocessing.
Thus our \TwoOPT algorithm needs $\Oh(n^{8/5+\eps})$ time in total.

\paragraph{3-OPT.}
For \ThreeOPT we proceed similarly as for \TwoOPT. We preprocess the tour $T$ for the following
queries: given a query edge~$p_i p_{i+1}$, find a pair of edges
$p_j p_{j+1}, p_k p_{k+1}$ such that a \Threemove involving these three edges will reduce
the cost of the tour (if such a pair exists). The details are a bit more involved
than for \TwoOPT, however.

Assume the points are numbered $p_0,\ldots,p_{n-1}$ in order along~$T$.
Define $e_i$ to be the edge $v_i v_{i+1}$, for $0\leq i<n$, and consider a \Threemove
involving edges $e_i,e_j,e_k$ with $i<j<k$. The four possible triples to replace
$e_i,e_j,e_k$ in a valid \Threemove are
\myitemize{
\item $p_i p_j, p_{i+1}p_k, p_{j+1} p_{k+1}$ (Type~I);
\item $p_i p_{j+1}, p_k p_{i+1}, p_j p_{k+1}$ (Type~II);
\item $p_i p_{j+1}, p_k p_j, p_{i+1} p_{k+1}$ (Type~III);
\item $p_i p_k, p_{j+1}p_{i+1}, p_j p_{k+1}$ (Type~IV).
}
Here we have ignored the possibility that one of the edges $e_1,e_2,e_3$
re-appears in the new triple, and we thus have a \Twomove; these ``degenerate''
\Threemoves can be found as described above.
Note that we may have $i+1=j$ and/or $j+1=k$. In this case some of the
four \Threemoves just mentioned also become degenerate, but this is not a problem.
Indeed, these \Threemoves still result in a valid tour, and if the tour length
is reduced we still want to find such a degenerate \Threemove.
We are left with the problem of deciding whether there is a \Threemove
of one of the four types described above that reduces
the length of the tour. We explain how to do this for \Threemoves of Type~I;
the other three types can be handled similarly.
\medskip

To find an improving \Threemove of Type~I we store all pairs $e_i,e_j$ with $0\leq i<j<n-1$
in a data structure that can answer the following queries: given an edge $e_k$, find
a pair $e_i,e_j$ such that
\begin{equation}
j<k \ \mbox{ and } \ |p_i p_j| + |p_{i+1}p_k| + |p_{j+1} p_{k+1}| < |p_i p_{i+1}| + |p_j p_{j+1}| + |p_k p_{k+1}|,
   \label{eq:3change}
\end{equation}
if such a pair exists.
For the moment, let's ignore the condition~$j<k$. Then we can proceed similarly as in
the \TwoOPT case: we map every pair $e_i,e_j$ to a point
\[
q_{ij} := \left( x(p_i), y(p_i), x(p_{i+1}), y(p_{i+1}), x(p_j), y(p_j), x(p_{j+1}), y(p_{j+1})  \right)
\]
in $\Reals^8$, and we preprocess the resulting set of points for range queries
with semi-algebraic sets~\cite{ams-rssasII-13}.
Given a query edge $p_k p_{k+1}$ we can now decide if there is an improving \Threemove of Type~I by searching with the range
\[
\begin{array}{llll}
Q_k & := & \{ \ (a_1,\ldots,a_8) : & |(a_1,a_2) (a_5,a_6)| + |(a_3,a_4) p_k| + |(a_7,a_8) p_{k+1}|  \\
    &    &                         & < |(a_1,a_2) (a_3,a_4)| + |(a_5,a_6) (a_7,a_8)| + |p_k p_{k+1}| \}.
\end{array}
\]
The resulting data structure uses $\Oh(n')$ space, and
has $\Oh((n')^{1+\eps})$ expected preprocessing time and $\Oh((n')^{7/8+\eps})$ query time,
where $n'$ is the number of points stored in the data structure.

Alternatively, we can map every pair $e_i,e_j$ to a surface
\[
\begin{array}{llll}
\Gamma_{ij} & := &  \{ (a_1,\ldots,a_4) : & |p_i p_j| + |p_{i+1}(a_1,a_2)| + |p_{j+1} (a_3,a_4)| \\
            &    &                        & = |p_i p_{i+1}| + |p_j p_{j+1}| + |(a_1,a_2) (a_3,a_4)| \}
\end{array}
\]
in~$\Reals^4$, and preprocess the resulting arrangement for point location.
Performing a point-location query with the point~$(x(p_k),y(p_k),x(p_{k+1}),y(p_{k+1}))$
now tells us if there is an improving \Threemove of Type~I.
This alternative would use $\Oh((n')^{4+\eps})$
preprocessing time and have $\Oh(\log n')$ query time~\cite{k-atuvd-04}.
\medskip

The standard way to obtain a trade-off between preprocessing and query time is
as follows. The linear-space variant is a recursively defined tree structure
on the points in the input set (which is in our case the set $\{ q_{ij} : 0\leq i<j<n-1 \}$).
Now, instead of continuing the recursion all the way until only constantly many points
are left, we stop when the number of points falls below a suitable threshold~$m$ with $1\leq m\leq n'$.
(The value of $m$ determines the trade-off.)
At this point we dualize the problem and build the logarithmic query-time solution,
which in our case uses $\Oh(m^{4+\eps})$ preprocessing time.
This way we construct a ``top tree'' with $\Oh(n'/m)$ leaves, each of which is associated
with a ``bottom tree'' that needs $\Oh(m^{4+\eps})$ preprocessing.
The total amount of preprocessing is $\Oh(n' m^{3+\eps})$.

A query is performed by first searching in the top tree. The search ends up in $\Oh((n/m)^{7/8+\eps})$
leaves where the search is then continued in the corresponding bottom tree.
Thus the query time is $\Oh((n'/m)^{7/8+\eps})$ (for a slightly larger $\eps$, which
swallows the extra log-factor from searching in the bottom trees).
\medskip

So far we ignored the condition~$j<k$ in~(\ref{eq:3change}).
Fortunately this condition is easy to handle, as it simply adds a so-called
\emph{range restriction} to the query. Range restrictions can be
added at the cost of an extra log-factor in preprocessing time and query time~\cite{wl-arrdd-85}.
In our case these logarithmic factors are swallowed by the $\Oh(n^{\eps})$ factor that we already have,
so the total structure uses  $\Oh(n' m^{3+\eps})$ expected preprocessing time and has $\Oh((n'/m)^{7/8+\eps})$
query time, where $m$ is a parameter
that we can still change to optimize performance.

Recall that $n'$, the number of points stored in the data structure, is~$n^2$,
and that we have to perform $n$ queries---one for each edge $e_k$. Thus the total
time of our algorithm is
\[
\Oh(n^2 m^{3+\eps}) + n \cdot \Oh((n^2/m)^{7/8+\eps}) = \Oh(n^2 m^{3+\eps} + n^{22/8+\eps} / m^{7/8}).
\]
This is minimized when we set $m := n^{6/31}$, which gives a total expected runtime of
$\Oh(n^{80/31+\eps}) = \Oh(n^{2.59})$.

\end{document}